\documentclass[11pt]{article}

\usepackage{amsmath,amssymb,amsthm}
\usepackage[shortlabels]{enumitem}
\usepackage{tikz}  
\usepackage[margin=1.1in]{geometry}
\usepackage{hyperref}
\usepackage{xcolor}
\usepackage{new_macro}
\usepackage{booktabs}        
\usepackage{stmaryrd}       
\newtheorem{theorem}{Theorem}[section]
\newtheorem{lemma}[theorem]{Lemma}
\newtheorem{hypothesis}[theorem]{Hypothesis}
\newtheorem{observation}[theorem]{Observation}
\newtheorem{problem}[theorem]{Problem}
\newtheorem{conjecture}[theorem]{Conjecture}
\newtheorem{fact}[theorem]{Fact}
\newtheorem{proposition}[theorem]{Proposition}
\newtheorem{corollary}[theorem]{Corollary}
\newtheorem{claim}[theorem]{Claim}
\theoremstyle{definition}
\newtheorem{definition}[theorem]{Definition}
\theoremstyle{remark}
\newtheorem{remark}[theorem]{Remark}
\numberwithin{equation}{section}
\newcommand{\C}{\mathbb{C}}
\newcommand{\cP}{\mathcal{P}}
\newcommand{\cS}{\mathcal{S}}
\newcommand{\bS}{\bar{\mathcal{S}}}
\newcommand{\bP}{\bar{\mathcal{P}}}
\newcommand{\wtS}{\mathrm{wt}_S}
\newcommand{\wtH}{\mathrm{wt}_H}
\newcommand{\Fix}{\mathrm{Fix}}
\newcommand{\spn}{\mathrm{span}}
\newcommand{\id}{\mathrm{id}}
\newcommand{\ket}[1]{|#1\rangle}
\newcommand{\bra}[1]{\langle #1|}
\newcommand{\braket}[2]{\langle #1|#2\rangle}
\newcommand{\G}{\mathrm{G}}

\newcommand{\Stab}{\mathcal{S}}

\title{Hardness of Approximating Quantum Code Distance Beyond $\sqrt{N}$}
\author{Upendra Kapshikar \\ University of Ottawa}
\date{}
\begin{document}
\maketitle

\begin{abstract}
    We study the computational complexity of approximating the minimum distance of a stabilizer code.
Classically, the minimum distance problem is NP-hard to approximate even to within an additive gap linear in the block length. For quantum codes, however, the reductions of Kapshikar and Kundu and of Grigorescu, Jha and Samperton reach only an additive $O(\sqrt{N})$ gap in the
block length $N$. 
We close this gap and, in a separate direction, initiate the fine-grained study of the problem.

We show that no randomized algorithm approximates the distance to within an additive $\alpha N$, for a constant $\alpha > 0$, unless $\mathsf{NP} \subseteq \mathsf{coRP}$. In the fine-grained setting, for every $\varepsilon > 0$, no randomized $2^{(1-\varepsilon)\kappa}\,\mathrm{poly}(N)$-time algorithm computes the distance of a code with $\kappa$ logical qubits unless SETH falls; and no randomized $2^{o(N)}$-time algorithm approximates the distance to within a linear additive gap, on instances where block length, number of logical qubits, and gap are simultaneously linear, unless
non-uniform Gap-ETH falls.
\end{abstract}
\newpage 
\tableofcontents
\newpage 

\section{Introduction}
\label{sec:intro}

A classical linear code (over $\mathbb{F}_2$) of block length $m$ and rank $n$ is an $n$-dimensional subspace $C\subseteq\F_2^m$, whose elements are called codewords.
It may be presented by a generator matrix, whose rows span $C$, or by a parity check matrix $H\in\F_2^{(m-n)\times m}$, so that $C=\{x:Hx=0\}$.
The minimum distance of $C$, written $\dist(C)$, is the smallest Hamming weight of a non-zero codeword, and it determines how many errors the code can detect and correct.

Two computational problems about linear codes have received most of the attention.
The nearest codeword problem, denoted $\NCP$, asks for the minimum Hamming distance between a given target vector and the code.
The minimum distance problem, denoted $\MDP$, asks for $\dist(C)$ itself.
The two look similar, but the requirement that the codeword be non-zero makes $\MDP$ a characteristically different problem from $\NCP$.
Berlekamp, McEliece and van Tilborg \cite{BMvT78} showed in 1978 that $\NCP$ is \NP-complete, and the analogous result for $\MDP$ came almost two decades later, with the work of Vardy \cite{Vardy} in 1997.
Dumer, Micciancio and Sudan \cite{DMS03} then showed that $\MDP$ remains \NP-hard under randomized reductions even in approximate form, to within any constant multiplicative factor and also to within an additive $\tau m$ for a constant $\tau>0$.
These results close off the most obvious route to codes of large distance.
If any of the problems was easy, one could sample parity check matrices at random and test whether the resulting code is good enough.

Quantum error-correcting codes have an analogous parameter.
The most commonly studied class of quantum codes is the stabilizer codes of Gottesman \cite{Dan_thesis}, which can be considered to be an analogue of linear codes.
A stabilizer group on $N$ qubits is an abelian subgroup $\Stab$ of the Pauli group that does not contain $-I$, and the code $Q$ it defines is the joint $+1$ eigenspace of its elements.
If $\Stab$ has $r$ independent generators, then $Q$ has dimension $2^{\kappa}$ for $\kappa=N-r$, and we call it an $\dbrack{N,\kappa}$ code with $N$ physical and $\kappa$ logical qubits.
The errors we consider are Pauli operators, and the weight of a Pauli is the number of qubits on which it acts non-trivially.
By the Knill--Laflamme conditions \cite{Laflamme}, a Pauli error is undetectable exactly when it commutes with every element of $\Stab$ but is not itself, up to a phase, an element of $\Stab$; such an error acts on the code space as a non-trivial logical operator.
The distance of $Q$, written $\qdist(Q)$, is the minimum weight of such an error: a Pauli lying in the normalizer of $\Stab$ but not in $\Stab$ itself.

The quantum minimum distance problem, denoted $\QMinDist$, asks for $\qdist(Q)$ given a set of generators of $\Stab$.
This resembles $\MDP$, but the two differ in what the minimum is taken over.
Classically it is taken over $C\setminus\{0\}$, a subspace with its zero vector removed.
Here it is taken over the normalizer of $\Stab$ with $\Stab$ itself removed, a group with a subgroup removed.
The elements of $\Stab$ are undetectable errors that act trivially on the code space, so they are never counted toward the distance, and a code is called degenerate if some of them have weight below $\qdist(Q)$.
Degeneracy has no classical analogue, and it is the point at which the correspondence between the two problems breaks.

The complexity of $\QMinDist$ has received some attention recently\cite{KK23,GJS25}.
Kapshikar and Kundu \cite{KK23} showed that it is \NP-complete, and that approximating $\qdist(Q)$, multiplicatively or additively, is \NP-hard under randomized reductions.
Their reduction goes through the codeword stabilized (\CWS) framework of Cross, Smith, Smolin and Zeng \cite{CSSZ09}, which builds a quantum code out of a graph and a classical code.
Grigorescu, Jha and Samperton \cite{GJS25} gave an alternative proof of these results using hypergraph product codes \cite{TZ14} instead, which gives a direct reduction to \CSS codes, along with a number of results on the distance of graph states.
Interestingly, as pointed out by \cite{GJS25}, both results, although using very different methods, manage to prove results in certain parameter region and their methods fail to go beyond this region. 
The classical result of \cite{DMS03} rules out approximation to within an additive $\tau m$, a constant fraction of the block length, but both quantum reductions above lose a square root and reach only $O(\sqrt N)$.
Since $\qdist(Q)\le N$ always, an additive gap of $\alpha N$ for a constant $\alpha$ is the strongest scale at which the promise is non-vacuous, and a gap of $\sqrt N$ is a considerable stretch away from it.
\cite{GJS25} in particular, point this out as a potential sqaure-root barrier.  
Kapshikar and Kundu point this out as a drawback of their proof technique and ask whether it can be removed \cite[\S V]{KK23}.
Interestingly, \cite{GJS25} showed that no additive hardness proportional to the block length can hold for hypergraph product codes at all (unless polynomial hierarchy collapses), and hence suggested that a genuine square root barrier might be at work for \QMinDist.
We show that this barrier is not a property of the generic \QMinDist problem.
We show that there is a constant $\alpha>0$ for which approximating $\qdist(Q)$ to within an additive $\alpha N$ is \NP-hard under randomized reductions.

Both prior results, and the one just stated, are \NP-hardness statements.
They rule out polynomial-time algorithms, but say nothing beyond that, for example, about the possibility of subexponential or even moderately faster exponential algorithms.
Classically, the best known way to compute $\dist(C)$ on worst-case codes is essentially exhaustive search over the $2^n$ codewords, and Stephens-Davidowitz and Vaikuntanathan \cite{SV19} showed that this is optimal under \SETH: for every constant $\eps>0$ there is no algorithm running in time $2^{(1-\eps)n}$.
They also showed that approximating $\dist(C)$ to within a constant factor requires time $2^{\Omega(n)}$ under non-uniform Gap-\ETH.
Nothing of this kind was known for $\QMinDist$.
We give the first such lower bounds for the quantum problem, ruling out algorithms running in time $2^{(1-\eps)\kappa}$\footnote{in fact, we give a stronger lower bound than this, with an additional factor depending on $N$} under \SETH, and $2^{o(N)}$-time algorithms at a linear additive gap under non-uniform Gap-\ETH.

\subsection{Our results}
All of our reductions produce codeword stabilized (\CWS) codes, introduced by Cross, Smith, Smolin and Zeng \cite{CSSZ09}.
A \CWS code is built from two classical objects, a graph $G$ on $N$ vertices and a classical code $C\subseteq\F_2^N$, and when $C$ is linear the resulting code is a stabilizer code, whose stabilizer presentation is computable from $G$ and $C$ in polynomial time.
Our results are therefore stated in the standard stabilizer input format.

Throughout, $N$ denotes the number of physical qubits of a stabilizer code and $\kappa$ its number of logical qubits.
For a gap function $g$, we write $\GapAddQDist_g$ for the promise problem of distinguishing $\qdist(Q)\le t$ from $\qdist(Q)>t+g(N)$, and in particular, $\GapAddQDist_{\alpha N}$ denotes $g(N)=\alpha N$ for a constant $\alpha>0$.

Our first result places the additive gap at the scale of the block length.

\begin{namedthm}[linear additive gap]\label{thm:A}
There exist constants $\alpha_1,\alpha_2>0$ such that
\begin{enumerate}
\item[(a)] $\GapAddQDist_{\alpha_1N}\notin\BPP$ unless $\NP\subseteq\BPP$; and
\item[(b)] $\GapAddQDist_{\alpha_2N}\notin\coRP$ unless $\NP\subseteq\coRP$, which collapses $\NP=\RP=\coRP=\ZPP$.
\end{enumerate}
\end{namedthm}

Part (b) is stronger than the corresponding classical statement, and the reason is worth pointing out.
The reduction of \cite{DMS03} is faithful on \NO~instances and unfaithful on \YES, which yields $\NP=\RP$.
Ours is unfaithful in the opposite direction.
The \YES~conclusion $\qdist(\CWS(G,C))\le\dist(C)$ holds for every graph $G$, whether or not the random sampling of the reduction succeeded, so only \NO~instances can be corrupted.

The next two results are fine-grained lower bounds, similar to \cite{SV19}, who establish such lower bounds for classical linear codes.

\begin{namedthm}[\SETH~hardness]\label{thm:B}
Assume randomized \SETH.
For every constant $\eps>0$ there is no randomized algorithm that, given an $\dbrack{N,\kappa}$ stabilizer code $Q$ and an integer $t$, decides whether $\qdist(Q)\le t$ in time $2^{(1-\eps)\kappa}\poly(N)$.
\end{namedthm}

Our reduction preserves the rank exactly, so $\kappa = n$ and the classical exponent transfers without loss.
Theorem~\ref{thm:B} is, to our knowledge, the first lower bound of this kind for $\QMDP$: prior work \cite{KK23,GJS25} rules out only polynomial-time algorithms. The bound here excludes an exponential family in the parameter that governs the size of the code space, at the same exponent that \cite{SV19} obtains classically.
It is not necessarily tight, however, because of a genuine asymmetry relative to the classical case.
Codeword enumeration decides $\MDP$ in time $2^{n}\poly(m)$, which matches the lower bound of \cite{SV19}.
The corresponding quantum brute force enumerates $\bar{\mathcal{S}}^{\perp} \setminus \bar{\mathcal{S}}$, of size $2^{N+\kappa} - 2^{N-\kappa}$ by Definition~\ref{def:qdist}, and no algorithm exponential in $\kappa$ alone is known.
The search does not shrink with $\kappa$: already at $\kappa = 1$ one faces a minimum-weight problem over a space of dimension $N+1$.
Since $\kappa \le N$ always, Theorem~\ref{thm:B} constrains a regime well below the best known algorithm.
\begin{namedthm}[Gap-\ETH~hardness]\label{thm:C}
Assume non-uniform Gap-\ETH.
There are constants $\alpha>0$ and $c_\kappa>0$ such that no randomized $2^{o(N)}$-time algorithm solves $\GapAddQDist_{\alpha N}$ on stabilizer codes with $\kappa\ge c_\kappa N$.
\end{namedthm}

The hardness of \Cref{thm:C} therefore holds on instances that are simultaneously linear in the block length, the number of logical qubits, and the additive gap.
Keeping all three linear is what the classical side of this paper requires work for, and we return to it in \Cref{sec:overview}.

\paragraph{Removing the randomness.}
The reductions behind Theorems~\ref{thm:A}--\ref{thm:C} are randomized, and the randomness enters in exactly one place: the choice of the graph.
We remove it under a standard hypothesis.

\begin{namedhyp}[see \Cref{sec:hypD} for more details]
There is a language in $\E=\DTIME(2^{O(n)})$ that requires nondeterministic Boolean circuits of size $2^{\Omega(n)}$.\label{hyp:intro}
\end{namedhyp}

This is the assumption underlying \AM~derandomization \cite{KvM02,MV05}, the nondeterministic analogue of the Impagliazzo--Wigderson hypothesis that $\E$ requires exponential-size deterministic circuits \cite{IW97}.

\begin{namedthm}[deterministic forms]\label{thm:D}
Assume the circuit hypothesis given above.
Then the conclusions of Theorems~\ref{thm:A}--\ref{thm:C} hold with the reductions made deterministic, at the cost of becoming nonadaptive Turing rather than many-one.
In particular, there is a constant $\alpha>0$ such that $\GapAddQDist_{\alpha N}$ is \NP-hard under deterministic polynomial-time truth-table reductions, and, assuming additionally deterministic \SETH, no deterministic $2^{(1-\eps)\kappa}\poly(N)$-time algorithm decides $\QMinDist$.
\end{namedthm}

Our fine-grained bounds are parametrized by $\kappa$ rather than by $N$, and this is inherited from the classical input: the \SETH-hardness of \cite{SV19} is stated in the rank, and on their instances the block length is polynomially larger.
Hardness of $\MDP$ in the block length is their own conjecture \cite[\S1.2]{SV19}.
We show that if the conjecture holds, then such a result translates to the quantum problem.

\begin{conjecture}[{\cite[\S1.2]{SV19}}]\label{conj:intro-sv}
There is a constant $\beta>0$ such that no algorithm decides $\MDP$ on binary codes of block length $m$ in time $2^{\beta m}$.
\end{conjecture}

\begin{namedthm}\label{thm:E}
Assume the circuit hypothesis given above. 
If the above conjecture by \cite{SV19} holds, then there is an explicit constant $c>0$, depending only on $\beta$, such that no deterministic algorithm decides $\QMinDist$ on stabilizer codes of block length $N$ in time $2^{cN}$.
\end{namedthm}

Conjecture~\ref{conj:intro-sv} places no restriction on the rank or threshold of the instances, and this is deliberate.
Codeword enumeration decides rank-$\kappa$ instances in time $2^\kappa\poly(m)$, and weight enumeration decides instances of small threshold in comparable time, so any such restriction would cap the believable $\beta$ well below $1$.
The unrestricted form is a stronger hypothesis and precisely the question asked in \cite{SV19}.

\paragraph{\CSS~codes.}
\CSS~codes are arguably the more popular quantum codes, and every result above translates to them, by an argument given by \cite{KK23}, who use a reduction given by \cite{BTL10}.
We include it here once for completeness. 
The main tool is the doubling map of Bravyi, Terhal and Leemhuis \cite{BTL10}, which routes a stabilizer code through a Majorana fermion code.
It takes an $\dbrack{N,\kappa}$ stabilizer code of distance $d$ to a $\dbrack{4N,2\kappa}$ \CSS~code of distance exactly $2d$, in polynomial time.
Because the output distance is exactly $2d$, both sides of a promise move together and the gap versions survive intact.

\begin{namedcor}[\CSS~forms, informal]\label{cor:css-intro}
Theorems~\ref{thm:A}, \ref{thm:C} and \ref{thm:D} hold for \CSS~codes with each additive-gap constant halved, and all $2^{o(N)}$ and \NP-hardness forms unchanged.
\Cref{thm:B} holds with $2^{\left(1-\eps\right)\kappa}$ replaced by $2^{(\frac{1}{2}-\eps)\kappa}$, and \Cref{thm:E} with $c$ replaced by $\frac{c}{4}$.
\end{namedcor}
\begin{proof}
    Compose each reduction with the doubling map; the composition is still polynomial time.
For a promise with threshold $t$ and additive gap $g$, the image satisfies $\qdist\le2t$ on \YES~and $\qdist>2t+2g$ on \NO, so the additive gap is $2g$ against block length $N'=4N$: writing $g=\alpha N=\alpha N'/4$ gives $2g=(\alpha/2)N'$.
An algorithm running in time $2^{(1/2-\eps)\kappa'}\poly(N')$ decides the stabilizer instance in time $2^{(1-2\eps)\kappa}\poly(N)$, contradicting \Cref{thm:B} at parameter $2\eps$; and one running in time $2^{cN'}=2^{4cN}$ contradicts \Cref{thm:E} whenever $4c<\beta\eps_1$.
\end{proof}

\begin{table}[t]
\centering\small
\begin{tabular}{@{}llll@{}}
\toprule
Result & Hypothesis & Conclusion & Reduction \\
\midrule
\Cref{thm:A}(a) & No hypothesis needed &
  $\GapAddQDist_{\alpha_1N}\notin\BPP$ unless $\NP\subseteq\BPP$ & randomized \\
\Cref{thm:A}(b) & No hypothesis needed &
  $\GapAddQDist_{\alpha_2N}\notin\coRP$ unless $\NP=\RP=\coRP=\ZPP$ & randomized \\
\Cref{thm:B} & rand.\ \SETH &
  no $2^{(1-\eps)\kappa}\poly(N)$ randomized & Karp (rand.) \\
\Cref{thm:C} & non-unif.\ Gap-\ETH &
  no $2^{o(N)}$ at gap $\alpha N$, $\kappa=\Theta(N)$ & Karp (rand.) \\
\Cref{thm:D} & circuit hyp. $+\,\SETH$ &
  deterministic forms of Theorems~\ref{thm:A}--\ref{thm:C} & det.\ truth-table \\
\Cref{thm:E} & circuit hyp. $+$ Conj.~\ref{conj:intro-sv} &
  no $2^{cN}$ deterministic & det.\ truth-table \\
\bottomrule
\end{tabular}
\caption{Summary of results.}
\label{tab:results}
\end{table}

\subsection{Technical overview}
\label{sec:overview}

In this section, we sketch our proofs.
Every reduction builds a \CWS~code from the classical instance and a graph.
The graph is where the first layer of techincal choice shows up: it must be chosen in a way so that it has a large enough \emph{graph distance} $\Gdist$.
\cite{KK23} choose their graph by a structural condition, namely that the graph family must be 4-cycle free, and argue about each low-weight error separately; we do this in two steps, first we sample a random graph (which gives results of the form \emph{unless \NP $ \subseteq $ co\RP}) and then under circuit hypothesis we derandomize it (which then leads to results).
However, merely choosing a graph with a good graph distance is not enough for reductions, and in particular, \cite{KK23}-style reduction does not give quantum codes with large enough distance. For example, consider the construction of \CWS~codes given by \cite{KK23}. 
Even if one replaces 4-cycles free graphs, which have $\Gdist=\Theta(\sqrt{n})$, by a family with linear $\Gdist$, it does not give quantum codes linear distance. 
So, a random graph sampling with linear $\Gdist$ is only half of the solution, and perhaps the easier half.
Of course, the random choice is also what has to be removed to make the reductions deterministic, and we do it under the circuit hypothesis, using a hitting set in place of the sampling, which is a well-established tool to do so.

\subsubsection*{CWS codes}
Let $\sigma(E)=(a\mid b)\in\F_2^{2N}$ be the symplectic image of a Pauli $E=X(a)Z(b)$, and $\wt_S(a\mid b)=|\{i:(a_i,b_i)\neq(0,0)\}|$ be its (symplectic) weight.
A \CWS~code $\CWS(G,C)$ is built from a graph $G$ on $[N]$, with adjacency matrix $A_G$, and a linear code $C\subseteq\F_2^N$.
Its codewords are the states $Z(c)\ket{G}$ for $c\in C$, where $\ket{G}$ is the graph state of $G$.

The error correction condition runs through \emph{classicalization map} introduced in \cite{CSSZ09}.
An $X$-error on a vertex travels along the edges of $G$ and reappears as a $Z$-error on the neighbourhood, so a Pauli $E$ with $\sigma(E)=(a\mid b)$ induces the classical error
\[
  \Cl_G(E)\;:=\;b\oplus A_G a\;\in\;\F_2^N.
\]
The detection criterion for \CWS codes is then stated in terms of this induced error (see Preliminaries for more details).
Here we give brief description of how undetectable errors for a \CWS~code look like since they determine the minimum distance of the code. 
Formal definitions can be found in the main body of the paper. See for example, Lemma~\ref{lem:exact}.

An undetectable, logically non-trivial Pauli falls into exactly one of two classes.
First, the \emph{nonzero-syndrome} errors have $a\neq0$ and $\Cl_G(E)\in C\setminus\{0\}$.
And the second, \emph{zero-syndrome} errors have $a\neq0$ and $\Cl_G(E)=0$, that is $b=A_Ga$.
The distance is the minimum weight over these errors.
The zero-syndrome errors are governed by the graph alone.
The least weight of a nonzero $E$ with $a\neq0$ and $\Cl_G(E)=0$ is the graph state distance $\Gdist(G)$, which depends on $G$ and not on $C$.

\subsubsection*{The square-root barrier}

As mentioned by \cite{KK23} the \CSS~construction, the usual route from a classical code to a quantum one, is not necessarily directly available here.
It requires an orthogonal pair $C_1^\perp\subseteq C_2$, so a hard instance $C_1$ would have to be paired with a $C_2$ that is orthogonal to it and has distance no smaller.
It is not known whether classical hardness survives this restriction.
The \CWS framework places no condition on $C$, which is why both \cite{KK23} and we take this route.
\cite{GJS25} do manage to to give a direct \CSS~reduction, however they also show that it is impossible to go beyond the $\sqrt{N}$ approximation by their technique, unless the polynomial hierarchy collapses.
Thus, we choose to operate in the framework of \CWS~codes, with motivation similar to that of \cite{KK23}.
In a \CWS~code, for a badly chosen graph $G$, classicalization described above produces a low weight undetectable error, and then $\qdist(\CWS(G,C))\ll\dist(C)$.
A \CWS~reduction must therefore use graphs that do not destroy the distance of the code paired with them.

As said before, this barrier is easy to be misplaced.
A graph of minimum degree $\delta$ has $\Gdist(G)\le\delta+1$, and \cite{KK23} saturate this bound from a $4$-cycle-free graph family.
However, for a 4-cycle free family, the minimum degree can not exceed $\Theta(\sqrt{N})$, and hence the approach of \cite{KK23} also gets stuck at this $\sqrt{N}$ barrier, as highlighted by \cite{GJS25}.  
It is tempting to conclude that a denser graph, with $\Gdist(G)=\Omega(N)$, would remove the obstruction.
First of all, to our knowledge, efficient constructions of such graphs is not known.
Secondly, and more importantly, this replacement does not complete the picture.
A random graph already has $\Gdist(G)=\Omega(N)$ with overwhelming probability, and the zero-syndrome errors that $\Gdist(G)$ governs are the half of the problem that \cite{KK23} already control without much difficulty.
The barrier lies in the nonzero-syndrome errors, on which $\Gdist(G)$ says nothing, since $\Gdist$ concerns only errors whose classicalization is zero.

For the nonzero-syndrome errors, \cite{KK23} argue by weight enumeration.
First they reduce the classical \MDP to the case where codes have sparse codewords, that is, maximum weight of a code instance is upper bounded by $\sqrt{N}$.
Then they bound $\wt_H(\Cl_G(E))$ from below, one error at a time, and conclude that $\Cl_G(E)$ is too large to be a codeword of $C$.

A denser graph would raise the floor but leave the ceiling in place.
It appears that, thus, one obtains tight bounds exactly in the regime where \cite{KK23} use it, which is the $\sqrt{N}$ approximation region.
This is also why a random graph, even though it has a large $\Gdist(G)$, is useless on its own for the weight argument.
Hence, we bypass this sandwiching idea of \cite{KK23}; instead, we give a more localized criterion for \emph{compatibility}.
One way to look at our proof technique would be the following:
Among other things, one key ingredient of \cite{KK23}'s Karp reduction is that 4-cycle free graphs that are universally \emph{compatible} with any (sparse) classical code $C$.
We give a truth-table reduction (under circuit hypothesis), that is not universal. 
Instead, given a code $C$, we find a \emph{compatible} graph for $C$.

\subsubsection*{Compatibility}
\begin{definition}[informal; also see Def.~\ref{def:compatible}]
Fix a linear code $C\subseteq\F_2^N$ and an integer $w\ge1$.
A graph $G$ is \emph{$(C,w)$-compatible} if every Pauli $E$ with $\sigma(E)=(a\mid b)$, $a\neq0$, and $\wt_S(a\mid b)<w$ has $\Cl_G(E)\notin C\setminus\{0\}$.
It is \emph{$w$-diagonal} if $\Gdist(G)\ge w$.
\end{definition}

Compatibility handles the nonzero-syndrome errors and diagonality handles the zero-syndrome ones, so together they can be used lower-bound the distance.
The condition is derived from the staandard detection criterion for \CWS~codes, but differs from it in three ways.

First, only the nonzero-syndrome clause becomes a condition on the graph relative to the code.
The zero-syndrome clause is delegated to $w$-diagonality, a property of the graph alone.
This is the split we need: diagonality is one half, which as epected, is satisfied by a random graph followed by derandomization, and compatibility is the other half that $\Gdist(G)$ cannot reach.

Second, the pure-$Z$ errors, those with $a=0$, are excluded and handled separately.
For these the graph drops out of $\Cl_G(E)=b$, and the pure-$Z$ errors at codewords are the intended logical operators, the ones carrying $\dist(C)$ into $\qdist$.
Forbidding them would make the condition unsatisfiable on \YES~instances.
Their weight is controlled separately, by $\dist(C)$.

Third, the condition is truncated at the level $w$.
The reason for this truncation is too techincal for the proof overview, but gets clearer once the proof is setup.
In short, the truncation is chosen to balance between two things: the level is large enough so that we get large enough lower bound on $\Gdist$ but small enough so that the union bound that we get from probabilistic arguments does not blow up. 

\subsubsection*{Random graphs are compatible}

Compatibility and diagonality are established by a first-moment argument, resting on one fact about random graphs.

\begin{claim}[see Claim~\ref{claim:half_uniformity}]
For a uniform graph $G$ on $[N]$ vertices and any fixed $a\neq0$, the vector $A_Ga$ is uniformly distributed on $a^\perp$.
\end{claim}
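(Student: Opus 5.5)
The first thing to settle is what ``uniform graph'' means here. A graph on $[N]$ is simple and loopless, so $A_G$ is a uniformly random symmetric matrix over $\F_2$ with zero diagonal. Its independent entries are $(A_G)_{ij}$ for $i<j$, each a fair coin, and $(A_G)_{ji}=(A_G)_{ij}$, $(A_G)_{ii}=0$.

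The first step is to show that $A_Ga$ always lies in $a^\perp$. We have
\[
\langle a, A_Ga\rangle=\sum_{i,j}a_i(A_G)_{ij}a_j=\sum_i (A_G)_{ii}a_i+2\sum_{i<j}a_i(A_G)_{ij}a_j=0
\]
over $\F_2$. The diagonal terms vanish because there are no loops, and the off-diagonal terms cancel in symmetric pairs. So the support of $A_Ga$ is contained in $a^\perp$, a subspace of dimension $N-1$ because $a\neq0$.

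The second step is uniformity. The map $A\mapsto Aa$ is $\F_2$-linear from the space $\mathcal{A}$ of symmetric zero-diagonal matrices to $a^\perp$. The distribution of $L(A)$ for uniform $A$ is uniform on the image of $L$, since every fibre is a coset of $\ker L$. It therefore suffices to show that $L$ is onto $a^\perp$, which I would do by a direct construction.

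Fix an index $k$ with $a_k=1$. Given a target $y\in a^\perp$, set $A_{ik}=A_{ki}=y_i$ for every $i\neq k$, and set all other entries to zero. Then for each $i\neq k$,
\[
(Aa)_i=\sum_j A_{ij}a_j=A_{ik}a_k=y_i .
\]
For the remaining coordinate,
\[
(Aa)_k=\sum_{j\neq k}y_ja_j=\langle y,a\rangle-y_ka_k=y_k,
\]
where the last equality uses $\langle y,a\rangle=0$ and $a_k=1$. Hence $Aa=y$, the image of $L$ is all of $a^\perp$, and $A_Ga$ is uniform on $a^\perp$.

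The only genuine subtlety is using the zero-diagonal constraint in the first step. With loops allowed, $A_Ga$ would instead be uniform on all of $\F_2^N$; without loops it is confined to the hyperplane $a^\perp$. The surjectivity construction is the main step, and it is where $a\neq0$ is needed, to guarantee a coordinate $k$ with $a_k=1$.
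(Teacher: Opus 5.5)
Your proposal is correct and follows essentially the same route as the paper's proof: linearity of $A\mapsto Aa$ so that fibres are cosets of the kernel, the inclusion $\operatorname{im}\subseteq a^\perp$ from $a^\top A a=0$ for symmetric zero-diagonal $A$, and surjectivity via the star-shaped matrix centred at a pivot $k$ with $a_k=1$. Your explicit construction $A_{ik}=A_{ki}=y_i$ is exactly the paper's.
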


The map $A\mapsto Aa$ is $\F_2$-linear on the space of symmetric zero-diagonal matrices.
Its image lies in $a^\perp$, since $x^\top Ax=0$ for such $A$, and the reverse inclusion follows by exhibiting, for each $y\in a^\perp$, a matrix sending $a$ to $y$.

\medskip \noindent 
\emph{Compatibility:} For a fixed error $E = (a \mid b)$ with $a \neq 0$, the syndrome 
$\Cl_G(E) = b \oplus A_G a$ is uniform over the coset $b \oplus a^\perp$ of size $2^{N-1}$, so
\[
\Pr_G\bigl[\Cl_G(E) \in C'\bigr] \leq 2^{k'-N+1}.
\]
A union bound over all errors of Pauli weight below $w^*$ gives the required guarantee provided $k' \leq \frac{N}{4}$ and 
$w^* \leq \varepsilon_1 N$ for a small enough constant $\varepsilon_1 > 0$. Both conditions are 
achieved by padding the classical instance to length $N = O(w + m)$, which preserves rank and 
distance. Diagonality follows by an analogous argument. At sufficiently small $\varepsilon_1$, 
both failure probabilities are $2^{-\Omega(N)}$.

\subsubsection*{From compatibility to a distance bracket}

Compatibility and diagonality together place $\qdist(\CWS(G,C'))$ between the classical distance and the associated level.

\begin{lemma}[informal; see \Cref{lem:threshold}]
Let $C'$ be a code of distance $\lambda$, let $t,g\ge0$, and set $w^*:=t+g+1$.
For every graph $G$, the code $Q_G:=\CWS(G,C')$ satisfies $\qdist(Q_G)\le\lambda$, witnessed by a pure-$Z$ Pauli.
If in addition $G$ is $(C',w^*)$-compatible and $w^*$-diagonal, then $\qdist(Q_G)=\lambda$ when $\lambda\le t$, and $\qdist(Q_G)>t+g$ when $\lambda>t+g$.
\end{lemma}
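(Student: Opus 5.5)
The plan is to work from the classification of undetectable, logically non-trivial Paulis for $\CWS(G,C')$ recalled in the overview (Lemma~\ref{lem:exact}). Such a Pauli $E$ with $\sigma(E)=(a\mid b)$ falls into one of three cases: it is pure-$Z$ ($a=0$) with $b\in C'\setminus\{0\}$; or it has nonzero syndrome, meaning $a\neq0$ and $\Cl_G(E)\in C'\setminus\{0\}$; or it has zero syndrome, meaning $a\neq 0$ and $\Cl_G(E)=0$. So $\qdist(Q_G)$ is the minimum symplectic weight over the union of these three sets. I will bound the minimum on each set separately and then combine.

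\textbf{Upper bound.} This holds for every $G$. Take a codeword $c\in C'$ with $\wt_H(c)=\lambda$ and set $E=Z(c)$. Then $a=0$ and $\Cl_G(E)=c\in C'\setminus\{0\}$. I need to check, via Lemma~\ref{lem:exact}, that $E$ is undetectable and acts non-trivially on the code space. This is the standard fact that $Z(c)$ maps the basis state $\ket{G}$ (indexed by $0\in C'$) to the basis state $Z(c)\ket{G}$, which is a different codeword. Hence $Z(c)$ lies in the normalizer but is not a stabilizer. Since $\wt_S(0\mid c)=\wt_H(c)=\lambda$, this gives $\qdist(Q_G)\le\lambda$.

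\textbf{Lower bound below the level $w^*$.} Now assume $G$ is $(C',w^*)$-compatible and $w^*$-diagonal, and let $E$ be an undetectable, non-trivial Pauli with $\wt_S(E)<w^*$. I go through the three cases:
\begin{itemize}
\item If $a\neq0$ and $\Cl_G(E)\in C'\setminus\{0\}$, compatibility rules it out directly.
\item If $a\neq0$ and $\Cl_G(E)=0$, then $\wt_S(E)\ge\Gdist(G)\ge w^*$ by diagonality, which is a contradiction.
\item Hence $a=0$, so $b=\Cl_G(E)\in C'\setminus\{0\}$ and $\wt_S(E)=\wt_H(b)\ge\lambda$.
\end{itemize}
Together these give $\qdist(Q_G)\ge\min(\lambda,w^*)$.

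\textbf{Combining.} If $\lambda\le t$, then $\lambda<w^*=t+g+1$, so $\min(\lambda,w^*)=\lambda$, and with the upper bound $\qdist(Q_G)=\lambda$. If $\lambda>t+g$, then $\lambda\ge w^*$, so $\qdist(Q_G)\ge w^*=t+g+1>t+g$.

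\textbf{Main obstacle.} The delicate step is using Lemma~\ref{lem:exact} correctly, in two places. First, the zero-syndrome class must be exactly the class whose minimum weight is $\Gdist(G)$. Second, a Pauli with $a=0$ and $b=0$, or more generally one with $\Cl_G(E)=0$ and $a=0$, is the identity or a stabilizer, so it never counts. If Lemma~\ref{lem:exact} states the condition for the zero-syndrome class with a nuance, for example that $E$ must fail to commute with some logical operator, I would check that diagonality still covers all of those errors. This holds because $\Gdist$ lower-bounds every nonzero $(a\mid A_Ga)$ with $a\neq0$, which is a superset of that class.
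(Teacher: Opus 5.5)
Your proposal is correct and follows the paper's proof essentially step for step. The upper bound comes from the pure-$Z$ witness $Z(c)$ at a minimum-weight codeword, and the lower bound from the same three-way split (zero-syndrome errors via the $\Gdist$ superset bound, class (2) with $a\neq0$ via compatibility, class (2) with $a=0$ via $\lambda$), giving $\qdist(Q_G)\ge\min(w^*,\lambda)$ and then the two regimes. One small correction: your opening claim that $\qdist$ \emph{equals} the minimum over the union of your three sets is slightly too strong, because the zero-syndrome set also contains stabilizer elements (class (3a), $a\in C'^{\perp}\setminus\{0\}$). Only the inclusion holds, but that is all your lower-bound argument uses, as you note yourself at the end.
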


The lower bound is a three-way case split over the detection criterion.
A zero-syndrome error has weight at least $\Gdist(G)\ge w^*$ by diagonality.
A nonzero-syndrome error of weight below $w^*$ is ruled out by compatibility, so any error that remains has weight at least $w^*$.
A pure-$Z$ error at a codeword has weight at least $\lambda$.
Hence $\qdist(Q_G)\ge\min(w^*,\lambda)$, and the two regimes follow by comparing $\lambda$ to $t$ and to $t+g$.

The upper bound $\qdist(Q_G)\le\lambda$ holds for \emph{every} graph, and is witnessed by $Z(c)$ at a minimum-weight codeword $c$.
This makes the reduction one-sided, so that only \NO~instances can be corrupted by a bad graph.
That is what yields the $\coRP$ conclusion of \Cref{thm:A}(b)  and further makes a hitting set rather than a pseudorandom generator suffice below, which is critical for the truth-table selection rule of \Cref{thm:D}.

\subsubsection*{Fine-grained reductions}

The fine-grained results demand that no step of a series of compositions loses more than it can afford. 
For \Cref{thm:B}, the chain of reductions is
\[
  k\text{-SAT}
  \;\xrightarrow{\ \cite{SV19}\ }\;
  \NCP
  \;\xrightarrow{\ \cite{SV19}\ }\;
  \MDP
  \;\xrightarrow{\text{\ this work\ }}\;
  \text{stabilizer code},
\]
where the first two steps are the \SETH-hardness reductions of \cite{SV19}.

The reduction of \cite{SV19} is stated for algorithms running in time $2^{(1-\eps)n}$, where $n$ is the rank of the classical code, with no polynomial factor attached.
Our reduction preserves the rank exactly, so the quantum instance has $\kappa=n$, and the exponent transfers without loss.
What does not transfer is the polynomial factor.
\Cref{thm:B} rules out algorithms running in time $2^{(1-\eps)\kappa}\poly(N)$, and since the quantum block length $N$ is polynomial in $n$, composing with our reduction yields a classical algorithm running in time $2^{(1-\eps)n}\poly(m')$, which the statement of \cite{SV19} does not rule out.
We therefore need the strengthened form, as stated in \Cref{thm:strong}, which rules out $2^{(1-\eps)n'}\poly(m')$.

The remaining subtlety is the deterministic derandomization step in \cite{SV19}, which produces 
$2^{3n/4}$ classical instances and costs $2^{3(1+\varepsilon')n/4}$ to build them. A short 
calculation shows that the combined exponent of the full reduction stays strictly below $1$ for 
a suitable choice of $\varepsilon'$, so the \SETH~lower bound is preserved throughout the 
composition.

For \Cref{thm:C} every parameter must stay linear, and Observation~\ref{prop:linprofile} verifies this through the reduction of \cite{SV19}.
The gap reduction of \cite{SV19} passes through a \emph{locally dense code}, a gadget with exponentially many codewords at the minimum distance from a target, which is what allows a nearest-codeword instance to be embedded into a minimum-distance instance.
Their gap gadget is built from the kissing-number codes of Ashikhmin, Barg and Vl\u{a}du\c{t} \cite{ABV01}, and it has rank, block length, and radius all $\Theta(n)$.
Tracking these through the reduction gives a $\gamma_2$-\MDP instance with rank, block length, and threshold all $\Theta(n)$, and converting the constant multiplicative gap to an additive one gives a gap $\Theta(m)$.
Padding then keeps all three of block length, rank, and additive gap linear in $N$ simultaneously, which is the content of \Cref{thm:C}.

One point in this chain is not stated in \cite{SV19} but needed here: their gadget radius is bounded only from above, whereas the matching lower bound is what forces the number of gadget copies to be constant, and hence keeps the construction linear (see Observation~\ref{prop:linprofile}).

\subsubsection*{Derandomization}
The randomness in our reductions enters in one place, the choice of $G$, and as mentioned before, we remove it under a circuit hypothesis.
Let $B$ be the set of \emph{bad} graphs for a given code and a given level, the graphs that are not $(C',w^*)$-compatible or not $w^*$-diagonal.
We note three facts about $B$.
\begin{enumerate}
    \item The set $B$ is recognized by a co-nondeterministic circuit of polynomial size.
    A bad graph is certified by a single witness $(a\mid b)\in\F_2^{2N}$: the verifier computes $A_Ga$, forms $\Cl_G(E)$, tests its weight, and tests membership in $C'\setminus\{0\}$ against a hardwired parity check, all in time $O(N^2)$.
    \item The set $B$ is sparse, with density $2^{-\Omega(N)}$ by the first-moment argument above, so its complement has density well above $\frac{1}{2}$.
    \item The reduction is one-sided, by the upper bound of the threshold lemma, so we need only \emph{hit} the good set, never estimate its measure.
\end{enumerate}
A hitting-set generator is an object that intersects every dense set that is recognizable by a co-nondeterministic circuit, and it follows from a weaker hardness assumption than a full pseudorandom generator.
Under the circuit hypothesis, such a generator (for example, that of \cite{MV05}) yields, in deterministic polynomial time, a list $G_1,\dots,G_r$ of $\poly(N)$ graphs containing at least one good graph.
The reduction queries the oracle on $(Q_{G_i},t)$ for each $i$ and answers \NO~if any query does.
On a \YES~instance every $Q_{G_i}$ has $\qdist\le\lambda\le t$ by the threshold lemma.
So every query answers \YES; on a \NO~instance the one good graph supplies a \NO~vote, and the remaining queries may violate the promise and be answered arbitrarily without harm.
This is why the reductions become truth-table rather than many-one, and this one-sidedness makes the arbitrary answers harmless.

\subsection{Discussion and open problems}
\label{sec:open}

We collect some of the questions this work leaves open.
Each is a place where the method stops for an identifiable reason, and in three cases, we can say what an answer would likely have to supply.

\paragraph{A many-one reduction.}
The deterministic reductions of \Cref{thm:D} are truth-table rather than many-one.
They query one candidate code per graph in the hitting set and take the conjunction of the answers.
A many-one version would have to combine the candidates into a single stabilizer code whose distance tracks the \emph{maximum} of the candidates' distances, so that one good graph forces the whole code onto the \NO~side.
Natural code combinations give the minimum instead.
We do not know whether a single code can be made to track the maximum, and this is the obstacle.

\paragraph{An unconditional derandomization.}
Both \Cref{thm:D} and \Cref{thm:E} need a compatible graph produced deterministically, and under the circuit hypothesis the hitting set supplies a short list containing one.
Without the hypothesis one would need an explicit family of compatible graphs at a linear level.
Searching for one by brute force costs $2^{c_1(\eps_1)N}$, which is already too slow for the polynomial-time reductions of \Cref{thm:D}; and for \Cref{thm:E}, where exponential time is available, it seems that the construction still costs more than the hardness it would preserve.
An explicit construction is therefore what an unconditional version requires.
\paragraph{Explicit compatible graphs, and a classical connection.}
For block graphs, $w^*$-diagonality is equivalent to a lower bound of $w^*$ on $\min\{\dist(C),\dist(C^\perp)\}$ for an associated rate-$\frac{1}{2}$ code, a decoupling proved in \cite[proof of Thm.~2]{GJS25}.
Explicit codes with $\min\{\dist(C),\dist(C^\perp)\}=\Omega(n)$ are exactly the object asked for in \cite[\S5.3]{GJS25}, so the diagonality half of an explicit construction is precisely their question.
Compatibility is the strictly stronger requirement.
Under the circuit hypothesis our list construction produces, in deterministic polynomial time, a $\poly(N)$-size list of graphs at least one of which is diagonal at a linear level, which answers the \emph{list} version of their question; the fully explicit version remains open.

\paragraph{Sparse instances.}
The graphs produced by our construction are dense, so the stabilizer generators $g(h) = \pm X(h)Z(A_G h)$ of the codes we output have weight $\Theta(N)$.
Our results, therefore, say nothing about $\QMDP$ restricted to LDPC codes, where the generators have bounded weight, and which is the regime of most of the practical interest in quantum codes.
Compatibility at a linear level appears to require density: the first-moment argument we use rests on $A_G a$ being uniform on $a^{\perp}$, and the overall idea seems to fail for sparse $A_G$.
Whether hardness under a linear additive gap holds for LDPC instances remains open, and an answer would require a source of graphs beyond those we consider.

\paragraph{The right fine-grained parameter.}
Theorem~\ref{thm:B} rules out $2^{(1-\varepsilon)\kappa}\poly(N)$ under \SETH, and Theorem~\ref{thm:C} rules out $2^{o(N)}$ but only under non-uniform Gap-ETH.
The obvious question is whether the two can be merged: is there a $2^{\Omega(N)}$ lower bound for $\QMDP$ under SETH?
Unlike the questions above, this is one where we cannot say what an answer must supply, because it may resolve in either direction, and the two directions correspond to the two ways the present bound could fail to be tight.
The first is algorithmic.
A $2^{O(\kappa)}\poly(N)$ time algorithm for $\QMDP$   would make Theorem~\ref{thm:B} tight up to the constant in the exponent, establishing $\kappa$ as the correct parameter for the quantum problem, similar to the classical one.
We are not aware of one, and the enumeration bound noted after Theorem~\ref{thm:B} suggests a bottleneck: the search does not shrink with $\kappa$.
The second approach is to obtain SETH-hardness on instances with $\kappa = \Theta(N)$, where a bound parametrized by $\kappa$ is automatically one parametrized by $N$.
Theorem~\ref{thm:C} already produces such instances, so the difficulty does not lie in our reduction, which preserves the rank exactly; it is that the SETH-hardness of \cite{SV19} is stated in the rank, and their instances have block length polynomially larger.
This is the same bottleneck that Theorem~\ref{thm:transfer} addresses conditionally, and it is a classical one.
Our reduction sends rank to logical qubits without loss and provides evidence that the analogy between the two parameters holds at the level of reductions; the fact that no $2^{O(\kappa)}$ algorithm is known suggests it may fail at the level of algorithms.
Which one of the two holds, to us, the more interesting form of the question.
\paragraph{A rank-preserving reduction to \CSS~codes.}
The route to \CSS~codes in Corollary~\ref{cor:css-intro} doubles the number of logical qubits, and this is what costs a factor in the exponent of the fine-grained bounds.
The hypergraph product reduction of \cite{GJS25} reaches \CSS~codes directly, without the doubling map, but cannot carry a linear additive gap.
A reduction that keeps the linear gap and does not double the rank would remove the loss, and we regard this as the natural next step.

\section{Preliminaries}

\subsection*{Notation}
\begin{itemize}
\item For $x, y \in \F_2^N$, $x \cdot y := \bigoplus_i x_i y_i$ denotes the standard inner product over $\F_2$.
\item All logarithms are base $2$, unless said otherwise.
\item The symbol $\perp$ is used for two different duals: the Hamming dual of a classical code $C \subseteq \F_2^N$ (with respect to $x \cdot y$), and the symplectic complement of a subspace $V \subseteq \F_2^{2N}$ (with respect to the form $\omega$ of Fact~\ref{fact:commutation}). The two live on spaces of different dimensions ($\F_2^N$ vs.\ $\F_2^{2N}$), so the ambient space always determines which is meant.
\item We use $\mathbf{1}$ to denote the indicator function.
\end{itemize}

\subsection{Classical linear codes}\label{sec:classical}

A (binary) \emph{linear code} of length $N$ and dimension $k$ (written as an $[N, k]$ code) is a $k$-dimensional subspace $C \subseteq \F_2^N$; it elements are \emph{codewords}.
The \emph{Hamming weight} of $x \in \F_2^N$ is $\wtH(x) := |\{ i : x_i \neq 0 \}|$, and the \emph{minimum distance} of $C$ is
\[
\dist(C) \;:=\; \min\{ \wtH(c) : c \in C \setminus \{0\} \};
\]
we also write $\lambda(C) := \dist(C)$. An $[N, k]$ code with $\dist(C) = d$ is an $[N, k, d]$ code. The \emph{dual code} is $C^\perp := \{ x \in \F_2^N : x \cdot c = 0 \ \forall c \in C \}$; recall $\dim C + \dim C^\perp = N$ and $(C^\perp)^\perp = C$.

A code may be \emph{presented} by a \emph{generator matrix} (a matrix whose rows span $C$) or by a \emph{parity-check matrix} $H \in \F_2^{(N - k) \times N}$ (a matrix whose rows span $C^\perp$, so that $C = \{ x : H x = 0 \}$); the two presentations are interconvertible in time $O(N^3)$ by Gaussian elimination. Unless stated otherwise, classical codes in this paper are presented by parity-check matrices.

\subsection{Pauli operators, stabilizer codes, and error detection}\label{sec:pauli}

The single-qubit \emph{Pauli operators} are
\[
X \;=\; \begin{pmatrix} 0 & 1 \\ 1 & 0 \end{pmatrix},
\qquad
Z \;=\; \begin{pmatrix} 1 & 0 \\ 0 & -1 \end{pmatrix},
\]
acting on $\C^2$; they are Hermitian, unitary, square to $I$, and satisfy $ZX = -XZ$.

\begin{definition}[Pauli group]\label{def:pauli}
For $a, b \in \F_2^N$ let $X(a) := \bigotimes_i X^{a_i}$ and $Z(b) := \bigotimes_i Z^{b_i}$. The $N$-qubit Pauli group is $\cP_N := \{ i^{\gamma} X(a) Z(b) : \gamma \in \mathbb{Z}_4, \ a, b \in \F_2^N \}$; the representation $E = i^\gamma X(a) Z(b)$ is unique. The \emph{symplectic part} of $E$ is $\sigma(E) := (a \mid b) \in \F_2^{2N}$. The \emph{projective Pauli group} is $\bP_N := \cP_N / \langle i I \rangle$; the map $\sigma$ induces a group isomorphism $\bP_N \cong (\F_2^{2N}, \oplus)$.
\end{definition}

\begin{fact}\label{fact:commutation}
$Z(b) X(a') = (-1)^{b \cdot a'} X(a') Z(b)$, hence
\[
X(a) Z(b) \cdot X(a') Z(b') \;=\; (-1)^{b \cdot a'} \, X(a \oplus a') Z(b \oplus b'),
\]
and two Paulis $E, F$ commute iff $\omega(\sigma(E), \sigma(F)) = 0$, where
\[
\omega\big((a \mid b), (a' \mid b')\big) \;:=\; a \cdot b' \,\oplus\, b \cdot a'
\]
is the standard symplectic form on $\F_2^{2N}$; otherwise they anticommute.
The form $\omega$ is nondegenerate: if $\omega(e, f) = 0$ for all $f$, then $e = 0$.
\end{fact}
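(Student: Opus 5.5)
The plan is to derive both identities directly from the single-qubit relation $ZX=-XZ$ and then read off the commutation criterion by comparing the two orders of multiplication.

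\textbf{Reordering a single factor.} Since operators on different qubits commute, $Z(b)X(a')=\bigotimes_i Z^{b_i}X^{a'_i}$. On qubit $i$, $Z^{b_i}X^{a'_i}=(-1)^{b_ia'_i}X^{a'_i}Z^{b_i}$: this is trivial unless $b_i=a'_i=1$, where it is $ZX=-XZ$. Multiplying the signs over all qubits gives $(-1)^{\sum_i b_ia'_i}=(-1)^{b\cdot a'}$, since only the parity of the exponent matters.

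\textbf{The product formula.} Write
\[
X(a)Z(b)X(a')Z(b')=(-1)^{b\cdot a'}X(a)X(a')Z(b)Z(b'),
\]
and use $X(a)X(a')=X(a\oplus a')$, $Z(b)Z(b')=Z(b\oplus b')$, which hold because $X^2=Z^2=I$ and factors on distinct qubits commute.

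\textbf{Commutation criterion.} Scalar phases $i^\gamma$ are central, so it suffices to take $E=X(a)Z(b)$ and $F=X(a')Z(b')$. By the product formula,
\[
EF=(-1)^{b\cdot a'}X(a\oplus a')Z(b\oplus b'),\qquad FE=(-1)^{b'\cdot a}X(a\oplus a')Z(b\oplus b').
\]
Hence $EF=(-1)^{b\cdot a'\oplus a\cdot b'}FE=(-1)^{\omega(\sigma(E),\sigma(F))}FE$. The two operators therefore commute when $\omega=0$ and anticommute otherwise. These cases are exclusive because $EF\neq 0$, so $EF=-EF$ is impossible.

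\textbf{Nondegeneracy.} Let $e=(a\mid b)$. Testing against $f=(0\mid \mathbf{e}_j)$ gives $\omega(e,f)=a_j$, and testing against $f=(\mathbf{e}_j\mid 0)$ gives $b_j$. If $\omega(e,f)=0$ for all $f$, then every coordinate of $a$ and $b$ vanishes, so $e=0$.

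None of these steps is a genuine obstacle. The only point requiring care is the sign bookkeeping in the commutation step: the exponent must come out as $b\cdot a'\oplus a\cdot b'$, which is exactly $\omega$.
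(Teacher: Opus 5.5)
Your proof is correct and follows the paper's route: the paper likewise obtains the first identity coordinatewise from $ZX=-XZ$, gets the product formula by moving $Z(b)$ past $X(a')$, and proves nondegeneracy by pairing $e$ with a vector supported on a single qubit where $e$ is nonzero. Your version is slightly more explicit: you write out the $EF$ versus $FE$ comparison that the paper leaves implicit, and you test against the $2N$ unit vectors $(0\mid \mathbf{e}_j)$ and $(\mathbf{e}_j\mid 0)$ rather than choosing one tailored single-qubit $f$.
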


\begin{definition}[weights]\label{def:weights}
For a vector $(a \mid b)$, its symplectic weight is $\wtS(a \mid b) := |\{ i : (a_i, b_i) \neq (0,0) \}|$. The weight of a Pauli is $\mathrm{wt}(E) := \wtS(\sigma(E))$, which is also the number of qubits on which $E$ acts nontrivially. Weight is a function of the projective class; that is, phases do not affect the weight.
\end{definition}

For $V \subseteq \F_2^{2N}$ we write $V^\perp := \{ f : \omega(v, f) = 0 \ \forall v \in V \}$. By nondegeneracy, $\dim V^\perp = 2N - \dim V$ and $(V^\perp)^\perp = V$ for subspaces $V$.

\begin{definition}\label{def:stabilizer}
A \emph{stabilizer group} on $N$ qubits is a subgroup $\cS \le \cP_N$ that is abelian, satisfies $-I \notin \cS$, and is generated by $r$ elements $S_1, \dots, S_r$ whose projective images are linearly independent in $\F_2^{2N}$. Its \emph{code space} is $Q := \{ \ket{\psi} : S \ket{\psi} = \ket{\psi} \ \forall S \in \cS \}$, with projector
\[
\Pi \;:=\; \prod_{j=1}^{r} \frac{I + S_j}{2} \;=\; 2^{-r} \sum_{S \in \cS} S .
\]
We write $\bS := \sigma(\cS) \subseteq \F_2^{2N}$; it is an $r$-dimensional subspace, isotropic ($\bS \subseteq \bS^\perp$) since $\cS$ is abelian. 
We call $Q$ an $[[N, \kappa]]$ code with $\kappa := N - r$ \emph{logical qubits}.
\end{definition}

\begin{fact}\label{fact:dim-sign} Ley $\cS$ be a stabilizer group on $N$ qubits.
\leavevmode
\begin{enumerate}[(a)]
\item $\sigma|_{\cS}$ is injective, $|\cS| = 2^r$, and $\dim Q = \tr \Pi = 2^{N-r} = 2^\kappa$.
\item Let $v_1, \dots, v_r \in \F_2^{2N}$ be linearly independent and pairwise $\omega$-orthogonal, let $P_j$ be any Pauli with $\sigma(P_j) = v_j$ and $P_j^2 = I$, and let $s \in \{\pm 1\}^r$ be any sign vector. Then $\langle s_1 P_1, \dots, s_r P_r \rangle$ is a stabilizer group with projective image $\spn(v_1, \dots, v_r)$, and its code space has dimension $2^{N-r}$.
\end{enumerate}
\end{fact}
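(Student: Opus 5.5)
Part (a) follows from linear algebra over $\F_2$ once the symplectic map is seen to be a homomorphism. Part (b) needs a sign check plus the projector trace computation.

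For (a), note that $\sigma$ is a group homomorphism from $\cP_N$ onto $(\F_2^{2N},\oplus)$ by Fact~\ref{fact:commutation}. Its kernel is $\langle iI\rangle=\{\pm I,\pm iI\}$. The kernel restricted to $\cS$ is $\cS\cap\{\pm I,\pm iI\}$. We have $-I\notin\cS$ by assumption. If $\pm iI\in\cS$, then its square $-I$ would also lie in $\cS$, so these are excluded too. Hence the kernel on $\cS$ is $\{I\}$ and $\sigma|_\cS$ is injective. Its image is the span of the $r$ independent vectors $\sigma(S_j)$, so $|\cS|=2^r$.

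Next we show that $\Pi=2^{-r}\sum_{S\in\cS}S$ is a projector onto $Q$. Each $S\in\cS$ squares to $I$: since $S^2\in\cS$ has $\sigma(S^2)=0$, injectivity forces $S^2=I$. Also $S\Pi=\Pi$ for every $S\in\cS$, because $S$ permutes the group elements in the sum. It follows that $\Pi^2=\Pi$, that $\Pi$ is Hermitian, and that its range lies in $Q$. Conversely, any $\ket\psi\in Q$ satisfies $\Pi\ket\psi=\ket\psi$. Expanding the product $\prod_j(I+S_j)/2$ gives the sum over all $2^r$ products of generators, which are distinct by injectivity, so the two expressions for $\Pi$ agree.

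The dimension is then $\dim Q=\tr\Pi$. A Pauli $X(a)Z(b)$ with $(a\mid b)\neq0$ is traceless: if $a\ne0$ it has zero diagonal, and if $a=0$, $b\neq 0$ the trace factors as a product containing $\tr Z=0$. So only $S=I$ contributes to the trace, and $\tr\Pi=2^{-r}2^N=2^{N-r}$.

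For (b), the $P_j$ pairwise commute because the $v_j$ are pairwise $\omega$-orthogonal, so the group $\langle s_jP_j\rangle$ is abelian. Each $s_jP_j$ squares to $I$. Hence every element can be written uniquely, up to sign, as $\pm\prod_{j\in T}P_j$ with $T\subseteq[r]$, and its projective image is $\sum_{j\in T}v_j$. The group is therefore $\{\epsilon_T\prod_{j\in T}s_jP_j\}$ with signs $\epsilon_T$ determined by $T$, provided $-I$ is not in it.

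The key point is that any element with $\sigma=0$ must have $T=\emptyset$, by linear independence of the $v_j$. Commuting involutions multiply without ever producing a stray sign on the identity, so such an element is exactly $I$. Thus $-I\notin\langle s_1P_1,\dots,s_rP_r\rangle$, the group is a stabilizer group, and its image is $\spn(v_1,\dots,v_r)$. Applying (a) gives code dimension $2^{N-r}$.

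I expect the main obstacle to be this $-I$ exclusion. It requires checking that, for commuting Hermitian involutions, the group is precisely $\{\prod_{j\in T}s_jP_j\}$ with no additional phases. This holds because reordering commuting factors introduces no sign, and each $(s_jP_j)^2=I$.
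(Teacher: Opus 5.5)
Your proof is correct and follows essentially the same route as the paper: injectivity of $\sigma$ on $\cS$ because $\pm iI\in\cS$ would force $-I\in\cS$, then $|\cS|=2^r$ and the trace $2^{N-r}$ of $\Pi$ from the tracelessness of non-identity Paulis, and for (b) the fact that commuting involutions generate exactly the subset products $\prod_{j\in T}s_jP_j$, so linear independence of the $v_j$ forces the only element with trivial symplectic part to be $+I$. The differences are cosmetic: you check from the group average that $\Pi$ is the orthogonal projector onto $Q$, where the paper calls it a product of commuting projectors; and the $2^r$ subset products $\prod_{j\in T}S_j$ are distinct because the $\sigma(S_j)$ are linearly independent, not \emph{by injectivity} as you wrote.
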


\begin{fact}\label{fact:normalizer}
For a stabilizer group $\cS$, let $N(\cS) := \{ P \in \cP_N : P \cS P^\dagger = \cS \}$ and $C(\cS) := \{ P : P S = S P \ \forall S \in \cS \}$. Then $N(\cS) = C(\cS)$ and $\sigma(N(\cS)) = \bS^\perp$.
\end{fact}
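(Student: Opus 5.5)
The plan is to show the two inclusions $C(\cS)\subseteq N(\cS)$ and $N(\cS)\subseteq C(\cS)$ directly from Fact~\ref{fact:commutation}, and then identify the symplectic image through the commutation criterion.

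For $C(\cS)\subseteq N(\cS)$: if $P$ commutes with every $S\in\cS$, then $PSP^\dagger = SPP^\dagger = S$. Hence $P\cS P^\dagger=\cS$ pointwise.

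For the reverse inclusion, take $P\in N(\cS)$ and $S\in\cS$. By Fact~\ref{fact:commutation}, two Paulis either commute or anticommute, so $PSP^\dagger=\pm S$. If it were $-S$, then $-S\in\cS$, and multiplying by $S^{-1}=S^\dagger\in\cS$ gives $-I\in\cS$. This contradicts Definition~\ref{def:stabilizer}. Therefore $PSP^\dagger=S$ for all $S$, so $P\in C(\cS)$. The only point requiring care is that $S^\dagger$ lies in $\cS$; this holds because $\cS$ is a group. (Every element squares to $\pm I$, and $-I\notin\cS$ forces $S^2=I$, though we do not need this.)

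For the image, $P\in C(\cS)$ iff $\omega(\sigma(P),\sigma(S))=0$ for all $S\in\cS$, again by Fact~\ref{fact:commutation}. Since $\sigma(\cS)=\bS$, this says exactly $\sigma(P)\in\bS^\perp$, giving $\sigma(C(\cS))\subseteq\bS^\perp$. Conversely, every $f\in\bS^\perp$ is $\sigma(X(a)Z(b))$ for $f=(a\mid b)$, and that Pauli commutes with all of $\cS$. So $\sigma$ maps $C(\cS)$ onto $\bS^\perp$, and membership in $C(\cS)$ depends only on $\sigma(P)$, since phases do not affect commutation.

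There is no real obstacle. The only substantive step is using $-I\notin\cS$ to exclude the anticommuting case in $N(\cS)\subseteq C(\cS)$.
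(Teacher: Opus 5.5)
Your proposal is correct and follows the paper's own proof: conjugation of a Pauli by a Pauli gives $PSP^\dagger=\pm S$, the sign $-S$ is excluded by $-I\notin\cS$, $C(\cS)\subseteq N(\cS)$ is immediate, and the symplectic image comes from the commutation criterion of Fact~\ref{fact:commutation}. Your explicit check that every $f\in\bS^\perp$ is attained is a small detail the paper leaves implicit.
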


\begin{definition}\label{def:detection}
A Pauli $E$ is \emph{detected} by $Q$ if $\Pi E \Pi = c_E \, \Pi$ for some scalar $c_E \in \C$. Otherwise, $E$ is \emph{undetectable}.
\end{definition}

\begin{fact}[Knill--Laflamme for stabilizer codes]\label{fact:trichotomy}
Let $Q$ be a stabilizer code with group $\cS$, and let $E \in \cP_N$ with $e := \sigma(E)$.
\begin{enumerate}[(1)]
\item If $e \notin \bS^\perp$, then $\Pi E \Pi = 0$: $E$ is detected.
\item If $e \in \bS$, then $E = i^\gamma S$ for some $S \in \cS$, $\gamma \in \mathbb{Z}_4$, so $\Pi E \Pi = i^\gamma \Pi$: $E$ is detected and acts on $Q$ as a global phase.
\item If $e \in \bS^\perp \setminus \bS$, then $E$ preserves $Q$, is undetectable, and acts on $Q$ as a \emph{nontrivial} logical operator (not proportional to $\id_Q$).
\end{enumerate}
\end{fact}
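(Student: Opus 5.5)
We prove the three cases of Fact~\ref{fact:trichotomy} directly from the projector formula $\Pi = 2^{-r}\sum_{S\in\cS} S$ of Definition~\ref{def:stabilizer}, together with Fact~\ref{fact:commutation}, Fact~\ref{fact:dim-sign} and Fact~\ref{fact:normalizer}.

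\emph{Case (1): $e \notin \bS^\perp$.} By definition of the symplectic complement, some $S_j$ has $\omega(\sigma(S_j), e) = 1$. By Fact~\ref{fact:commutation}, $E$ therefore anticommutes with $S_j$. Then
\[
\tfrac{I+S_j}{2}\,E \;=\; E\,\tfrac{I-S_j}{2}.
\]
Since the generators commute, we may move the factor $\tfrac{I+S_j}{2}$ to the right end of $\Pi$. This gives
\[
\Pi E \Pi \;=\; \Pi'\,E\,\tfrac{I-S_j}{2}\,\tfrac{I+S_j}{2}\,\Pi'' \;=\; 0,
\]
because $(I-S_j)(I+S_j) = I - S_j^2 = 0$. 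Here $S_j^2 = I$ holds because $S_j$ is Hermitian: an element of an abelian subgroup without $-I$ cannot square to $-I$.

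\emph{Case (2): $e \in \bS$.} Since $\sigma$ maps $\cS$ onto $\bS$ (Fact~\ref{fact:dim-sign}(a)), there is $S \in \cS$ with $\sigma(S) = e$. As $\sigma$ determines a Pauli up to phase, $E = i^\gamma S$. Moreover $S\Pi = \Pi$, since multiplying by $S$ permutes the sum over the group. Hence $\Pi E \Pi = i^\gamma \Pi^2 = i^\gamma \Pi$.

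\emph{Case (3): $e \in \bS^\perp \setminus \bS$.} By Fact~\ref{fact:normalizer}, $E$ commutes with every element of $\cS$. Hence $E$ commutes with $\Pi$, and $E$ maps $Q$ to itself. The remaining task, and the only real step, is to show that $E|_Q$ is not a scalar.

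First, $e \notin \bS$ and $\bS$ is isotropic, so $\bS^\perp \supsetneq \bS$. By nondegeneracy $(\bS^\perp)^\perp = \bS$. Since $e \in \bS^\perp \setminus \bS$, we have $\bS \neq \bS + \langle e\rangle$. Taking complements, $(\bS + \langle e\rangle)^\perp = \bS^\perp \cap e^\perp$ is strictly smaller than $\bS^\perp$. So there exists $f \in \bS^\perp$ with $\omega(e,f) = 1$.

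Let $F$ be a Pauli with $\sigma(F) = f$. It commutes with $\cS$, so it preserves $Q$, and it anticommutes with $E$. If $E|_Q = c\,\id_Q$, then
\[
F E|_Q \;=\; c\,F|_Q \;=\; E F|_Q \;=\; -F E|_Q.
\]
Since $F$ is unitary and $Q \neq 0$ (Fact~\ref{fact:dim-sign}(a)), this forces $c = 0$. But $E$ is unitary, so $c \neq 0$, a contradiction.

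Undetectability then follows immediately: $\Pi E \Pi = E\Pi$, and $E\Pi$ is not a multiple of $\Pi$ by the above.

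The main subtlety is the existence of the witness $f$, which rests on the nondegeneracy argument in the previous paragraph. A second point to handle carefully is that Case (2) needs $\sigma|_{\cS}$ to be onto $\bS$, which holds by definition of $\bS = \sigma(\cS)$.
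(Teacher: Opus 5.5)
Your proof is correct and follows essentially the same route as the paper's. In case (1) an anticommuting stabilizer kills $\Pi E\Pi$: you use the orthogonality $(I-S_j)(I+S_j)=0$ for a generator, where the paper uses $T\Pi=\Pi T=\Pi$ to obtain $\Pi E\Pi=-\Pi E\Pi$; case (2) is identical; and in case (3) you build the same witness $f\in\bS^\perp$ with $\omega(e,f)=1$ (via a dimension count rather than directly from $e\notin(\bS^\perp)^\perp$) and reach the same contradiction from an anticommuting logical $F$.
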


\begin{definition}[quantum distance]\label{def:qdist}
For a stabilizer code $Q$ with $\kappa \ge 1$ logical qubits,
\[
\qdist(Q) \;:=\; \min\{\, \wtS(e) \;:\; e \in \bS^\perp \setminus \bS \,\}.
\]
Note that this is well defined for all $Q$ with $\kappa \ge 1$: $\dim \bS^\perp = N + \kappa > N - \kappa = \dim \bS$, so the set is nonempty. By Fact~\ref{fact:trichotomy} it is exactly the minimum weight of an undetectable, logically nontrivial Pauli. A code is \emph{degenerate} if $\min\{ \wtS(e) : e \in \bS \setminus \{0\} \} < \qdist(Q)$; low-weight stabilizer elements never count toward $\qdist$, by definition.
\end{definition}

A stabilizer code is \emph{presented} by a generator matrix $\big( \sigma(S_1); \dots; \sigma(S_r) \big) \in \F_2^{r \times 2N}$, optionally with a sign vector. By Fact~\ref{fact:dim-sign}(b) and Definition~\ref{def:qdist}, both the code dimension and the distance depend only on $\bS$, not on the signs; all reductions in this paper may therefore output all-$(+1)$ signs. Numbers are encoded in binary; the instance size is $\Theta(rN)$ bits.

\subsection{Graph states and CWS codes over linear classical codes}\label{sec:cws}

Throughout, $G$ is a simple undirected graph on vertex set $[N]$ with adjacency matrix $A = A_G \in \F_2^{N \times N}$: symmetric and zero along the diagonal. Graph states were introduced in \cite{HEB04} (see also the survey \cite{HDERVB06}); the CWS construction is due to \cite{CSSZ09}. We use only the case of a \emph{linear} classical code, in which the CWS code is a stabilizer code \cite{CSSZ09}.

\begin{definition}[graph state {\cite{HEB04}}]\label{def:graphstate}
For $v \in [N]$ let $g_v := X_v \prod_{u \sim v} Z_u$, i.e.\ the Hermitian Pauli with $\sigma(g_v) = (e_v \mid A e_v)$ (and $+1$ sign).
\end{definition}

The following lemma is standard \cite{HEB04,HDERVB06}; we include the proof for completeness.

\begin{lemma}\label{lem:graphstate}
For $a \in \F_2^N$ define $g(a) := \prod_v g_v^{a_v}$.
\leavevmode
\begin{enumerate}[(a)]
\item The $g_v$ pairwise commute, so $g(a)$ is order-independent, Hermitian, $g(a)^2 = I$, and $\sigma(g(a)) = (a \mid A a)$; thus $g(a) = \varepsilon_a X(a) Z(A a)$ for some $\varepsilon_a \in \{\pm 1\}$. 
\item $\cS_G := \{ g(a) : a \in \F_2^N \}$ is a stabilizer group with $r = N$ generators $g_1, \dots, g_N$ and $\bS_G = \{ (a \mid A a) : a \in \F_2^N \}$, an $N$-dimensional isotropic subspace.
\item The code space of $\cS_G$ is one-dimensional, spanned by a unit vector $\ket{G}$ (\emph{the graph state}), and $g(a) \ket{G} = \ket{G}$ for all $a$.
\end{enumerate}
\end{lemma}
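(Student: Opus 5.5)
We prove the three parts of Lemma~\ref{lem:graphstate} in order, using only Fact~\ref{fact:commutation}, Fact~\ref{fact:dim-sign} and the symmetry of $A$.

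\textbf{Part (a).} We compute the symplectic form on pairs of generators. Since $\sigma(g_u)=(e_u\mid Ae_u)$ and $\sigma(g_v)=(e_v\mid Ae_v)$,
\[
\omega\bigl(\sigma(g_u),\sigma(g_v)\bigr)=e_u\cdot Ae_v\oplus Ae_u\cdot e_v=A_{uv}\oplus A_{vu}=0,
\]
by symmetry of $A$. Hence the $g_v$ pairwise commute by Fact~\ref{fact:commutation}, and $g(a)$ does not depend on the order of the factors. Each $g_v$ is Hermitian with $g_v^2=I$: the $X_v$ factor and the $Z_u$ factors act on distinct qubits, because $A_{vv}=0$. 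A product of commuting Hermitian involutions is again a Hermitian involution, so $g(a)$ is Hermitian and $g(a)^2=I$.

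Since $\sigma$ is a homomorphism to $(\F_2^{2N},\oplus)$, we get $\sigma(g(a))=\bigoplus_v a_v(e_v\mid Ae_v)=(a\mid Aa)$. Thus $g(a)=i^\gamma X(a)Z(Aa)$ for some $\gamma\in\mathbb{Z}_4$. Squaring with Fact~\ref{fact:commutation} gives
\[
\bigl(X(a)Z(b)\bigr)^2=(-1)^{a\cdot b}I .
\]
Hermiticity then forces $i^\gamma$ to be real. Here $a\cdot Aa=0$ because $A$ is symmetric with zero diagonal, so $i^{2\gamma}=1$ and $\varepsilon_a:=i^\gamma\in\{\pm1\}$.

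\textbf{Part (b).} The vectors $(e_v\mid Ae_v)$ are linearly independent, since their $X$-parts are the standard basis. By (a) they are pairwise $\omega$-orthogonal, and each $g_v$ is a Pauli with $g_v^2=I$. Fact~\ref{fact:dim-sign}(b), applied with all signs $+1$, then shows that $\langle g_1,\dots,g_N\rangle$ is a stabilizer group with projective image $\{(a\mid Aa)\}$, of dimension $N$ and isotropic.

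It remains to identify this group with $\cS_G$. The group $\langle g_1,\dots,g_N\rangle$ consists exactly of the products $g(a)$, because the generators commute and square to $I$; hence it equals $\cS_G$.

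\textbf{Part (c).} Fact~\ref{fact:dim-sign}(a) with $r=N$ gives a code space of dimension $2^{0}=1$. Choose a unit vector $\ket{G}$ spanning it; by the definition of the code space, $g(a)\ket{G}=\ket{G}$ for every $a$.

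The only step requiring care is the sign bookkeeping in (a), namely showing that $\varepsilon_a$ is real. This reduces to $a^\top Aa=0$ over $\F_2$, which holds because the diagonal of $A$ vanishes and the off-diagonal terms cancel in pairs.
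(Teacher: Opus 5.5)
Your proof is correct and follows essentially the same route as the paper's: the same symplectic-form computation for commutation, additivity of $\sigma$ to get $\sigma(g(a))=(a\mid Aa)$, and Fact~\ref{fact:dim-sign}(a) for part (c). The only differences are presentational. In (b) you invoke Fact~\ref{fact:dim-sign}(b) instead of checking closure and $-I\notin\mathcal{S}_G$ by hand, which is exactly how that fact is proved in the appendix. In (a) you make explicit, via $a^\top A a=0$, why $\varepsilon_a$ is real, a point the paper leaves implicit.
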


\begin{proof}
(a) $\omega\big( (e_v \mid A e_v), (e_u \mid A e_u) \big) = e_v \cdot A e_u \oplus A e_v \cdot e_u = A_{vu} \oplus A_{uv} = 0$ by symmetry, so the $g_v$ commute (Fact~\ref{fact:commutation}); products of commuting Hermitian involutions are Hermitian involutions, and symplectic parts add: $\sigma(g(a)) = \bigoplus_v a_v (e_v \mid A e_v) = (a \mid A a)$.

(b) The map $a \mapsto (a \mid A a)$ is linear and injective, so the images of $g_1, \dots, g_N$ are independent and $|\cS_G| = 2^N$; isotropy is the computation in (a) extended by bilinearity: $\omega( (a \mid A a), (a' \mid A a') ) = a \cdot A a' \oplus A a \cdot a' = 0$ using $A^\top = A$. Closure: $g(a) g(a')$ is a product of $g_v$'s with symplectic part $(a \oplus a' \mid A(a \oplus a'))$, hence equals $g(a \oplus a')$. Finally $-I \notin \cS_G$: an element with trivial symplectic part has $a = 0$, and $g(0) = I$ (the empty product).

(c) Fact~\ref{fact:dim-sign}(a) with $r = N$ gives dimension $2^{N - N} = 1$; each $g(a)$ is a product of the generators, hence fixes $\ket{G}$.
\end{proof}

\begin{lemma} [\cite{HDERVB06}]\label{lem:trace}
For every Pauli $P$: $\bra{G} P \ket{G} \neq 0$ only if $\sigma(P) \in \bS_G$. In particular, for $v \in \F_2^N$, $\bra{G} Z(v) \ket{G} = \delta_{v, 0}$.
\end{lemma}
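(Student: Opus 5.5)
The plan is to exploit the fact that $\ket{G}$ is the unique state stabilized by $\cS_G$, and to use the projector formula of Definition~\ref{def:stabilizer} with $r=N$. By Lemma~\ref{lem:graphstate}(c), $\ket{G}\bra{G}=\Pi_G=2^{-N}\sum_{a\in\F_2^N} g(a)$. Hence for any Pauli $P$,
\[
\bra{G}P\ket{G}=\tr\bigl(P\,\ket{G}\bra{G}\bigr)=2^{-N}\sum_{a}\tr\bigl(P\,g(a)\bigr).
\]
So it suffices to understand traces of products of Paulis.

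The key step is the standard trace identity on $\cP_N$. For $F=i^\gamma X(a)Z(b)$ we have $\tr F = i^\gamma\prod_j \tr(X^{a_j}Z^{b_j})$. Since $\tr X=\tr Z=\tr(XZ)=0$, this vanishes unless $(a\mid b)=0$. By Fact~\ref{fact:commutation}, $\sigma(P\,g(a))=\sigma(P)\oplus(a\mid Aa)$. Therefore $\tr(P\,g(a))\ne0$ forces $\sigma(P)=(a\mid Aa)\in\bS_G$. If $\sigma(P)\notin\bS_G$, every term vanishes and $\bra{G}P\ket{G}=0$, which proves the first claim.

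An alternative route avoids traces. If $\sigma(P)\notin\bS_G$, then since $\bS_G$ is maximal isotropic ($\dim=N$, so $\bS_G^\perp=\bS_G$), we have $\sigma(P)\notin\bS_G^\perp$. Some $g(a)$ therefore anticommutes with $P$, and
\[
\bra{G}P\ket{G}=\bra{G}Pg(a)\ket{G}=-\bra{G}g(a)P\ket{G}=-\bra{G}P\ket{G}.
\]
I would present this version, as it is shorter and uses only Fact~\ref{fact:commutation} and Lemma~\ref{lem:graphstate}.

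For the particular case $P=Z(v)$, we have $\sigma(Z(v))=(0\mid v)$. This lies in $\bS_G$ only if $(0\mid v)=(a\mid Aa)$ for some $a$, which forces $a=0$ and then $v=A\cdot 0=0$. So $v\ne0$ gives $0$, and $v=0$ gives $\braket{G}{G}=1$ since $\ket{G}$ is a unit vector.

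The only mildly delicate point is justifying $\bS_G^\perp=\bS_G$. This follows from the isotropy in Lemma~\ref{lem:graphstate}(b) together with the dimension count $\dim\bS_G^\perp=2N-N$ from nondegeneracy. Nothing here is a genuine obstacle.
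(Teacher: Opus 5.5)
Your proof is correct. Your first route is the paper's own argument: expand $\ket{G}\bra{G}=\Pi_G=2^{-N}\sum_a g(a)$, use that a Pauli is traceless unless its symplectic part is $0$, and note that only a term with $\sigma(P)=(a\mid Aa)$ can survive. Your treatment of the $Z(v)$ case also matches the paper line for line.

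The route you say you would present is genuinely different. It replaces the trace computation with maximal isotropy, $\bS_G^\perp=\bS_G$, which you justify correctly, together with a sign flip that uses $g(a)\ket{G}=\ket{G}$ on the right and $\bra{G}g(a)=\bra{G}$ on the left. The latter needs the Hermiticity of $g(a)$ from Lemma~\ref{lem:graphstate}(a), which you should say explicitly. This is exactly the mechanism of Fact~\ref{fact:trichotomy}(1) specialized to $\Pi_G=\ket{G}\bra{G}$, so you could simply cite it: $\sigma(P)\notin\bS_G=\bS_G^\perp$ gives $0=\Pi_G P\Pi_G=\bra{G}P\ket{G}\,\ket{G}\bra{G}$.

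As for what each buys: your version needs neither the projector expansion nor trace orthogonality of Paulis. However, it relies essentially on $\ket{G}$ being a stabilizer \emph{state} ($\kappa=0$), since it must find the anticommuting element inside the stabilizer itself. The paper's trace expansion needs no dimension count and gives $\tr(P\Pi)=0$ whenever $\sigma(P)\notin\bS$, for any stabilizer group. When $\sigma(P)\in\bS_G$, it also exhibits $\bra{G}P\ket{G}$ as the single surviving term, a phase of modulus one.
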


\begin{proof}
$\ket{G}\bra{G} = \Pi_G = 2^{-N} \sum_a g(a)$, so $\bra{G} P \ket{G} = \tr( P \, \Pi_G ) = 2^{-N} \sum_a \tr( P \, g(a) )$. Each summand is the trace of a Pauli with symplectic part $\sigma(P) \oplus (a \mid A a)$, which vanishes unless that part is $0$, i.e.\ unless $\sigma(P) = (a \mid A a) \in \bS_G$. For $P = Z(v)$:
$\sigma(P) = (0 \mid v) \in \bS_G$ forces $a = 0$ and then $v = A \cdot 0 = 0$; and $\braket{G}{G} = 1$.
\end{proof}

\begin{definition}[CWS code {\cite{CSSZ09}}]\label{def:cws}
Let $C \subseteq \F_2^N$ be an $[N, k]$ linear code. Define
\[
\CWS(G, C) \;:=\; \spn\{\, Z(c) \ket{G} \;:\; c \in C \,\}.
\]
Let $h_1, \dots, h_{N-k}$ be a basis of $C^\perp$ (e.g.\ the rows of a full-rank parity-check matrix $H$ of $C$), and define the group $\cS_{G,C} := \langle g(h_1), \dots, g(h_{N-k}) \rangle = \{ g(h) : h \in C^\perp \}$.
\end{definition}

The following proposition is essentially present in \cite{CSSZ09} (in a slightly different language); we include the short proof sketches for completeness.

\begin{proposition}[CWS codes are stabilizer codes; dimension is graph-independent]\label{prop:cws}
For \emph{every} graph $G$ on $[N]$ and every $[N, k]$ code $C$:
\begin{enumerate}[(a)]
\item $\{ Z(c) \ket{G} \}_{c \in C}$ is an orthonormal family; hence $\dim \CWS(G, C) = 2^k$.
\item $\cS_{G,C}$ is a stabilizer group with $N - k$ independent generators and $\bS_{G,C} = \{ (h \mid A h) : h \in C^\perp \}$.
\item $\CWS(G, C)$ equals the code space of $\cS_{G,C}$. Thus $\CWS(G, C)$ is an $[[N, k]]$ stabilizer code: $\kappa = \dim C$, for every graph.
\item Given $A_G$ and $H$, a stabilizer presentation of $\CWS(G, C)$ is computable in time $O(N^3)$.
\end{enumerate}
\end{proposition}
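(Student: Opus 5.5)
The plan is to prove the four parts in order, each from the graph-state facts already established (Lemma~\ref{lem:graphstate}, Lemma~\ref{lem:trace}, Fact~\ref{fact:dim-sign}).

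\emph{Part (a).} For $c,c'\in C$ we compute
\[
\bra{G}Z(c)^\dagger Z(c')\ket{G}=\bra{G}Z(c\oplus c')\ket{G}=\delta_{c,c'}
\]
by Lemma~\ref{lem:trace}. The $2^k$ states $Z(c)\ket{G}$ are therefore orthonormal, and $\dim\CWS(G,C)=2^k$.

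\emph{Part (b).} Take $h_1,\dots,h_{N-k}$ linearly independent. By Lemma~\ref{lem:graphstate}(a) the elements $g(h_j)$ pairwise commute, are Hermitian involutions, and have symplectic parts $(h_j\mid Ah_j)$. The map $a\mapsto(a\mid Aa)$ is injective, so these parts are independent. Since $\cS_{G,C}$ is a subgroup of the stabilizer group $\cS_G$, it is abelian and avoids $-I$; closure gives $\{g(h):h\in C^\perp\}$. Its projective image is $\{(h\mid Ah):h\in C^\perp\}$.

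\emph{Part (c).} First we show containment. For $h\in C^\perp$ and $c\in C$, Fact~\ref{fact:commutation} gives
\[
g(h)Z(c)=(-1)^{h\cdot c}Z(c)g(h)=Z(c)g(h),
\]
because the $X$-part of $g(h)$ is $h$ and $h\cdot c=0$. Hence
\[
g(h)Z(c)\ket{G}=Z(c)g(h)\ket{G}=Z(c)\ket{G},
\]
using Lemma~\ref{lem:graphstate}(c). So $\CWS(G,C)$ lies in the code space of $\cS_{G,C}$. That code space has dimension $2^{N-(N-k)}=2^k$ by Fact~\ref{fact:dim-sign}(a), which matches part (a), so the two spaces are equal. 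This also gives $\kappa=k$, independent of $G$.

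\emph{Part (d).} Gaussian elimination on $H$ gives independent rows $h_j$ in $O(N^3)$ time. Each $Ah_j$ costs $O(N^2)$, so all $N-k$ of them cost $O(N^3)$. The generator matrix is then the stack of rows $(h_j\mid Ah_j)$ with all-$+1$ signs, which is allowed by Fact~\ref{fact:dim-sign}(b). The only delicate point in the whole argument is the sign and commutation bookkeeping in part (c), and Fact~\ref{fact:commutation} settles it.
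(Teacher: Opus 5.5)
Your proof is correct and matches the paper's argument essentially step for step: orthonormality from Lemma~\ref{lem:trace}, part (b) as the restriction of Lemma~\ref{lem:graphstate}(b) to $C^\perp$, containment plus the dimension count $2^k$ for (c), and Gaussian elimination with $N-k$ matrix--vector products for (d). The one nuance the paper records and you omit is in (d): since $g(h_j)=\varepsilon_{h_j}X(h_j)Z(Ah_j)$ with $\varepsilon_{h_j}$ possibly $-1$, the all-$+1$ sign choice can shift the code space by a Pauli frame, which is harmless because $\bS_{G,C}$, and hence the distance, is unchanged.
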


\begin{proof}
(a) $Z(c)^\dagger Z(c') = Z(c \oplus c')$, so $\bra{G} Z(c)^\dagger Z(c') \ket{G} = \delta_{c, c'}$ by Lemma~\ref{lem:trace}. Orthonormal vectors are independent, hence the dimension becomes $2^k$.

(b) Restriction of Lemma~\ref{lem:graphstate}(b) to the subgroup indexed by $C^\perp$; the sub-basis $\{h_j\}$ has independent images.

(c) Each basis vector is fixed: for $h \in C^\perp$ and $c \in C$,
\[
g(h) \, Z(c) \, \ket{G} \;=\; (-1)^{h \cdot c} \, Z(c) \, g(h) \, \ket{G}
\;=\; Z(c) \ket{G},
\]
using Fact~\ref{fact:commutation} ($X(h)$ commutes past $Z(c)$ at cost $(-1)^{h \cdot c} = 1$) and Lemma~\ref{lem:graphstate}(c). So $\CWS(G, C) \subseteq \Fix(\cS_{G,C})$; by (a) the left side has dimension $2^k$, and by Fact~\ref{fact:dim-sign}(a) the right side has dimension $2^{N - (N - k)} = 2^k$; they coincide.

(d) Gaussian elimination to obtain a full-rank $H$, then $N - k$ matrix--vector products. Validity of the all-$(+)$ sign choice is Fact~\ref{fact:dim-sign}(b); choosing different signs may change the code space by a Pauli frame but not $\bS_{G,C}$, hence not the distance (Definition~\ref{def:qdist}).
\end{proof}

\begin{definition}[error classicalization map] {\cite{CSSZ09}}]\label{def:cl}
For a Pauli $E$ with $\sigma(E) = (a \mid b)$, the \emph{induced classical error} is
\[
\Cl_G(E) \;:=\; b \,\oplus\, A_G \, a \;\in\; \F_2^N,
\] 
a function of the projective class of $E$, computable in time $O(N^2)$ from $(a \mid b)$ and $A_G$.
\end{definition}

\begin{lemma}[normalizer test]\label{lem:normtest}
$\sigma(E) \in \bS_{G,C}^{\,\perp}$ if and only if  $\Cl_G(E) \in C$.
\end{lemma}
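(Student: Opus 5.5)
The plan is a direct computation with the symplectic form. By Proposition~\ref{prop:cws}(b), $\bS_{G,C} = \{ (h \mid A h) : h \in C^\perp \}$. So $\sigma(E) = (a \mid b)$ lies in $\bS_{G,C}^{\,\perp}$ exactly when $\omega\big((a \mid b), (h \mid A h)\big) = 0$ for every $h \in C^\perp$. Since $\omega$ is bilinear, it suffices to check this against a spanning set, but working with all of $C^\perp$ is just as easy.

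The key step is to expand the form using the definition in Fact~\ref{fact:commutation}:
\[
\omega\big((a \mid b), (h \mid A h)\big) \;=\; a \cdot A h \,\oplus\, b \cdot h .
\]
Using $A^\top = A$, we have $a \cdot A h = (A a) \cdot h$, so the expression equals $(b \oplus A a) \cdot h = \Cl_G(E) \cdot h$. Hence $\sigma(E) \in \bS_{G,C}^{\,\perp}$ if and only if $\Cl_G(E) \cdot h = 0$ for all $h \in C^\perp$. This says precisely that $\Cl_G(E) \in (C^\perp)^\perp = C$, by the duality fact $(C^\perp)^\perp = C$ recalled in Section~\ref{sec:classical}.

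There is no real obstacle here. The only points to get right are the ordering convention in $\omega$ (which term pairs $a$ with the $Z$-part) and the use of symmetry of $A$ to move it across the inner product. The zero diagonal of $A$ is not needed for this lemma. I would also remark that both sides depend only on the projective class of $E$, consistent with Definition~\ref{def:cl}.
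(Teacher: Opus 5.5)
Your proposal is correct and matches the paper's proof essentially line for line: expand $\omega\big((a \mid b),(h \mid Ah)\big)$, use $A^\top = A$ to rewrite it as $(b \oplus Aa)\cdot h$, and conclude via $(C^\perp)^\perp = C$. Your remark that the zero diagonal of $A$ is not needed is also accurate, since the paper uses only symmetry at this step.
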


\begin{proof}
For every $h \in C^\perp$:
\[
\omega\big( (a \mid b), (h \mid A h) \big)
\;=\; a \cdot A h \,\oplus\, b \cdot h
\;=\; (A a) \cdot h \,\oplus\, b \cdot h
\;=\; \big( b \oplus A a \big) \cdot h
\;=\; \Cl_G(E) \cdot h,
\]
using $A^\top = A$ in the second equality. Thus $\sigma(E)$ is $\omega$-orthogonal to all of $\bS_{G,C}$ iff $\Cl_G(E) \cdot h = 0$ for all $h \in C^\perp$, i.e.\ iff $\Cl_G(E) \in (C^\perp)^\perp = C$.
\end{proof}

The following lemma is essentially the error-detection condition of \cite[Thm.~3]{CSSZ09}, (see also \cite[Fact~1]{KK23}), restated in the language of \S\ref{sec:pauli}: as a criterion over $\bS_{G,C}^{\,\perp}$, with the undetected errors classified into stabilizer elements and nontrivial logical operators.
This is the form in which the criterion is used throughout the paper; in this form, it follows in a few lines from Fact~\ref{fact:trichotomy} and Lemma~\ref{lem:normtest}, and we include the proof for completeness.

\begin{lemma}[exact detection criterion for CWS codes]\label{lem:exact}
Let $Q = \CWS(G, C)$ with $k \geq 1$, and let $E$ be a Pauli with $\sigma(E) = (a \mid b)$. Exactly one of the following holds:
\begin{enumerate}[(1)]
\item \textbf{$\Cl_G(E) \notin C$:} $\ \Pi E \Pi = 0$; $E$ is detected.
\item \textbf{$\Cl_G(E) \in C \setminus \{0\}$:} $E$ is undetectable and acts on $Q$ as a nontrivial logical operator. 
\item \textbf{$\Cl_G(E) = 0$} (equivalently $b = A a$):
\begin{enumerate}[(3a)]
\item if $a \in C^\perp$: $E$ is (a phase times) an element of $\cS_{G,C}$; hence has a trivial action on the code;
\item if $a \notin C^\perp$: $E$ is undetectable and logically nontrivial.
\end{enumerate}
\end{enumerate}
Consequently,
\begin{equation}\label{eq:qdist-formula}
\qdist(\CWS(G, C)) \;=\; \min\Big\{ \wtS(a \mid b) \;:\; \underbrace{b \oplus A a \in C \setminus \{0\}}_{\text{class (2)}}
\ \text{ or } \ \underbrace{\big( b = A a \ \text{and} \ a \notin C^\perp \big)}_{\text{class (3b)}} \Big\}.
\end{equation}
\end{lemma}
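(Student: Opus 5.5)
The plan is to read everything off Fact~\ref{fact:trichotomy}, using Lemma~\ref{lem:normtest} to translate membership in $\bS_{G,C}^{\,\perp}$ into the classical condition $\Cl_G(E)\in C$. The three cases $\Cl_G(E)\notin C$, $\Cl_G(E)\in C\setminus\{0\}$ and $\Cl_G(E)=0$ partition all Paulis. Sub-case (3) splits further by whether $a\in C^\perp$, so ``exactly one'' holds automatically. What remains is to identify, inside $\bS_{G,C}^{\,\perp}$, which vectors lie in $\bS_{G,C}$.

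First, case (1): if $\Cl_G(E)\notin C$, then Lemma~\ref{lem:normtest} gives $\sigma(E)\notin\bS_{G,C}^{\,\perp}$, and Fact~\ref{fact:trichotomy}(1) yields $\Pi E\Pi=0$. In the other cases $\sigma(E)\in\bS_{G,C}^{\,\perp}$, so by Fact~\ref{fact:trichotomy}(2)--(3) everything reduces to deciding whether $(a\mid b)\in\bS_{G,C}=\{(h\mid Ah):h\in C^\perp\}$, which is Proposition~\ref{prop:cws}(b).

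The key observation is a characterization of $\bS_{G,C}$. A vector $(a\mid b)$ lies in $\bS_{G,C}$ iff there is $h\in C^\perp$ with $a=h$ and $b=Ah$. Since the first block determines $h$, this holds iff $a\in C^\perp$ and $b=Aa$, i.e.\ iff $a\in C^\perp$ and $\Cl_G(E)=0$. From this the remaining cases follow directly.

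In case (2), $\Cl_G(E)\neq0$, so $(a\mid b)\notin\bS_{G,C}$, and Fact~\ref{fact:trichotomy}(3) gives a nontrivial logical operator. In case (3a), $(a\mid b)\in\bS_{G,C}$, so by Fact~\ref{fact:trichotomy}(2) $E$ is a phase times a stabilizer element. In case (3b), $a\notin C^\perp$ puts $(a\mid b)$ in $\bS_{G,C}^{\,\perp}\setminus\bS_{G,C}$, so $E$ is again a nontrivial logical operator.

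The formula \eqref{eq:qdist-formula} then follows from Definition~\ref{def:qdist}: $\bS_{G,C}^{\,\perp}\setminus\bS_{G,C}$ is exactly the union of classes (2) and (3b), and the minimum is taken over the symplectic weight. This set is nonempty because $k\ge1$ (Definition~\ref{def:qdist} with $\kappa=k$ by Proposition~\ref{prop:cws}(c)).

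No step is a real obstacle. The only point needing care is the characterization of $\bS_{G,C}$, namely that the first coordinate block pins down $h$ uniquely. This is where injectivity of $h\mapsto(h\mid Ah)$ is used.
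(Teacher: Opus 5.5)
Your proof is correct and follows the paper's argument essentially line for line. Lemma~\ref{lem:normtest} dispatches case (1), membership in $\bS_{G,C}$ is characterized as ``$a\in C^\perp$ and $b=Aa$,'' Fact~\ref{fact:trichotomy}(2)--(3) then settles cases (2), (3a) and (3b), and \eqref{eq:qdist-formula} is read off Definition~\ref{def:qdist} by decomposing $\bS_{G,C}^{\perp}\setminus\bS_{G,C}$ into classes (2) and (3b).
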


\begin{proof}
By Lemma~\ref{lem:normtest}, case (1) is $\sigma(E) \notin \bS_{G,C}^{\perp}$, and Fact~\ref{fact:trichotomy}(1) applies. In cases (2) and (3), $\sigma(E) \in \bS_{G,C}^\perp$. Membership in $\bS_{G,C}$ itself means $(a \mid b) = (h \mid A h)$ for some $h \in C^\perp$, i.e.\ exactly: $b = A a$ \emph{and} $a \in C^\perp$. In case (2), $\Cl_G(E) \neq 0$ gives $b \neq A a$, so $\sigma(E) \in \bS_{G,C}^\perp \setminus \bS_{G,C}$ and
Fact~\ref{fact:trichotomy}(3) applies; no property of $G$ beyond $A^\top = A$ and $A_{vv} = 0$ was used anywhere, so this holds for every graph.
In case (3a), $\sigma(E) \in \bS_{G,C}$ and
Fact~\ref{fact:trichotomy}(2) applies. In case (3b), $b = A a$ but $a \notin C^\perp$, so $\sigma(E) \in \bS_{G,C}^\perp \setminus \bS_{G,C}$ and
Fact~\ref{fact:trichotomy}(3) applies. Equation~\eqref{eq:qdist-formula} is
Definition~\ref{def:qdist} with $\bS^\perp \setminus \bS$ decomposed into classes (2) and (3b).
\end{proof}

\begin{remark}\label{rem:fact21}
The correspondence with the criterion as stated in \cite[Fact~1]{KK23} is the following.
Under $\Cl_G(E) \neq 0$, their condition (i) states: detection iff $\Cl_G(E)$ is not a nonzero codeword. 
This is the dichotomy between classes (1) and (2).
Under $\Cl_G(E) = 0$, their condition (ii) states detection iff $E$ commutes with every word operator $Z(c)$, $c \in C$.
This is the split (3a)/(3b), since $X(a)Z(b)$ commutes with $Z(c)$ iff $a \cdot c = 0$ (Fact~\ref{fact:commutation}), i.e.\ iff $a \in C^\perp$.
Following \cite[Def.~2]{KK23}, we define the \emph{graph state distance} $\Gdist(G) := \min\{ \wtS(a \mid A a) : a \neq 0 \}$.
Every error with $a \neq 0$ and $\Cl_G(E) = 0$ has $b = A a$ and hence weight at least
$\Gdist(G)$, a bound over a \emph{superset} of class (3b), since $a \notin C^\perp$ implies $a \neq 0$.
All lower-bound arguments in this paper rule out class (3b) through this superset bound (diagonality at level $w^*$ eliminates everything with $a \neq 0$, $b = A a$ below weight $w^*$), so the finer split is never needed for soundness; its role is to localize degeneracy.
Class (3a) is exactly the set of low-weight stabilizer elements of $\CWS(G, C)$, exactly where degeneracy of the code lives, and Definition~\ref{def:qdist} excludes it by construction, so no separate degenerate-case analysis will be needed anywhere in the paper.
\end{remark}

\begin{remark}\label{rem:onesided}
Two consequences of this subsection hold for every graph $G$ and are invoked repeatedly by the one-sided and derandomized reductions.
\begin{enumerate}[(a)]
\item $\CWS(G, C)$ is a valid $[[N, k]]$ code (Proposition~\ref{prop:cws});
\item taking $c \in C$ of minimum weight and $E = Z(c)$:
$\sigma(E) = (0 \mid c)$, $\Cl_G(E) = c \in C \setminus \{0\}$, so by class (2) of Lemma~\ref{lem:exact}, $E$ is undetectable and nontrivial with $\wtS(E) = \wtH(c)$; hence
\begin{equation}\label{eq:upper-bound}
\qdist(\CWS(G, C)) \;\le\; \dist(C) \qquad \text{for every graph } G.
\end{equation}
This is the upper-bound step in the proof of \cite[Thm.~1]{KK23}; we present it as a standalone inequality because later sections apply it to graphs for which nothing else is known, in particular to the bad graphs in a candidate list.
\end{enumerate}
\end{remark}
\subsection{The problems}
\label{subsec:problems}

\begin{problem}[$\mathrm{MDP}_{n,m}$]
\label{prob:mdp}
The \emph{Minimum Distance Problem} $\mathrm{MDP}_{n,m}$ is the following problem.
\begin{description}
   \item[\textnormal{INSTANCE:}] a parity-check matrix
     $H \in \mathbb{F}_2^{(m-n)\times m}$ of a binary code   $C(H) \subseteq \mathbb{F}_2^{m}$ (of rank $n$), and an integer $t > 0$.
   \item[\textnormal{YES:}] $\mathrm{dist}(C(H)) \le t$.
   \item[\textnormal{NO:}] $\mathrm{dist}(C(H)) \ge t+1$.
\end{description}
\end{problem}

\begin{problem}[$\gamma$-$\mathrm{MDP}_{n,m}$; {\cite[Def.~2.2]{SV19}}]
\label{prob:gapmdp}
For an approximation factor $\gamma \ge 1$, the \emph{Minimum Distance Problem with multiplicative gap $\gamma$}, denoted $\gamma$-$\mathrm{MDP}_{n,m}$, is the following promise problem.
\begin{description}
   \item[\textnormal{INSTANCE:}] a parity-check matrix  $H \in \mathbb{F}_2^{(m-n)\times m}$ of a binary code  $C(H) \subseteq \mathbb{F}_2^{m}$ (of rank $n$), and an integer $t > 0$.
   \item[\textnormal{YES:}] $\mathrm{dist}(C(H)) \le t$.
   \item[\textnormal{NO:}] $\mathrm{dist}(C(H)) > \gamma\, t$.
\end{description}
Instances with $t < \mathrm{dist}(C(H)) \le \gamma t$ satisfy neither promise. Taking $\gamma = 1$ recovers the exact problem $\mathrm{MDP}_{n,m}$ of Problem~\ref{prob:mdp}.
\end{problem}

The reductions we import from \cite{SV19} pass through the Nearest Codeword Problem as an intermediate step, $k\text{-SAT} \to \mathrm{NCP} \to \mathrm{MDP}$. We never state a result about $\mathrm{NCP}$ directly and use it only as a black box inside the cited reductions; we include it here for completeness.

\begin{problem}[$\mathrm{NCP}_{n,m}$; following {\cite[Def.~2.1]{SV19}}]
\label{prob:ncp}
The \emph{Nearest Codeword Problem} $\mathrm{NCP}_{n,m}$ is the following problem.
\begin{description}
   \item[\textnormal{INSTANCE:}] a generator matrix $C \in \mathbb{F}_2^{m\times n}$ of a binary code of rank $n$, a  target vector $y \in \mathbb{F}_2^{m}$, and an integer $t > 0$.
   \item[\textnormal{YES:}] $\mathrm{dist}(y, C) \le t$.
   \item[\textnormal{NO:}] $\mathrm{dist}(y, C) \ge t+1$.
\end{description}
Here $\mathrm{dist}(y, C) := \min_{x \in \mathbb{F}_2^{n}} \mathrm{wt}_H(Cx - y)$ is the distance from the target to the code.\footnote{Our notation deviates from that of \cite{SV19}, who write $t$ for the target vector and $d$ for the threshold; we reserve $d$ for code distances throughout. The dictionary is $y \leftrightarrow t$, $t \leftrightarrow d$, and, for the locally dense gadgets below, $r^\dagger \leftrightarrow d^\dagger$.} Note that $\mathrm{NCP}$ is naturally presented by a \emph{generator} matrix, whereas $\mathrm{MDP}$ is presented by a parity-check matrix; the two presentations are interconvertible in time $O(m^3)$, so the distinction is immaterial to the reductions.
\end{problem}

\begin{problem}[$\QMDP_{\kappa,N}$]
\label{prob:qmindist}
The \emph{Quantum Minimum Distance Problem} $\mathrm{\QMDP}_{\kappa,N}$ is the following problem.
\begin{description}
   \item[\textnormal{INSTANCE:}] a stabilizer presentation of an $[[N, \kappa]]$ stabilizer code $Q$ with $\kappa \ge 1$ logical  qubits, and an integer $t > 0$.
   \item[\textnormal{YES:}] $\mathrm{qdist}(Q) \le t$.
   \item[\textnormal{NO:}] $\mathrm{qdist}(Q) \ge t+1$.
\end{description}
Here $\kappa$ is the number of logical qubits (the fine-grained parameter) and $N$ is the block length (number of physical qubits); $\mathrm{qdist}(Q)$ is the minimum weight of an undetectable, logically nontrivial Pauli.
\end{problem}

\begin{problem}[$\gamma$-$\mathrm{GapQDist}$]
\label{prob:gapmultqdist}
For $\gamma \ge 1$, the multiplicative variant has the same instances, with YES: $\mathrm{qdist}(Q) \le t$ and NO: $\mathrm{qdist}(Q) > \gamma t$.
\end{problem}

The additive-gap variants below are the form in which our gap results are stated; we follow \cite{KK23, GJS25} in measuring the gap as an additive function of the block length rather than as a multiplicative factor.

\begin{problem}[$\mathrm{GapAddDist}_{\tau}$; cf.\ {\cite{DMS03,
GJS25}}]
\label{prob:gapadddist}
For a constant $\tau > 0$, the \emph{Minimum Distance Problem with additive gap $\tau$} is the following promise problem.
\begin{description}
   \item[\textnormal{INSTANCE:}] a parity-check matrix   $H \in \mathbb{F}_2^{(m-n)\times m}$ of a binary code   $C(H) \subseteq \mathbb{F}_2^{m}$, and an integer $t > 0$.
   \item[\textnormal{YES:}] $\mathrm{dist}(C(H)) \le t$.
   \item[\textnormal{NO:}] $\mathrm{dist}(C(H)) > t + \tau m$.
\end{description}
Instances with $t < \mathrm{dist}(C(H)) \le t + \tau m$ satisfy neither promise.
\end{problem}

\begin{problem}[$\mathrm{GapAddQDist}_{g}$; cf.\ {\cite[Def.~4]{KK23}}]
\label{prob:gapaddqdist}
For a gap function $g\colon \mathbb{N} \to \mathbb{N}$, the \emph{Quantum Minimum Distance Problem with additive gap $g$} is the following promise problem.
\begin{description}
   \item[\textnormal{INSTANCE:}] a stabilizer presentation of an  $[[N, \kappa]]$ stabilizer code $Q$ with $\kappa \ge 1$, and an  integer $t > 0$.
   \item[\textnormal{YES:}] $\mathrm{qdist}(Q) \le t$.
   \item[\textnormal{NO:}] $\mathrm{qdist}(Q) > t + g(N)$.
\end{description}
We write $\mathrm{GapAddQDist}_{\alpha N}$ for the case $g(N) = \alpha N$ with $\alpha > 0$ constant, a gap \emph{proportional to the block length}, the regime of Theorem~\ref{thm:nphard-rand} and $\mathrm{GapAddQDist}_{\tau N^{\epsilon}}$ for the sublinear regimes $\epsilon < 1$ of \cite{KK23}. Taking $g = 0$ recovers $\QMDP_{\kappa, N}$ of Problem~\ref{prob:qmindist}.
\end{problem}

Multiplicative and additive gaps are related but not interchangeable.
\paragraph{additive versus multiplicative gaps:}
\label{rem:gap-forms}
A multiplicative gap $\gamma$ at threshold $t$ is the additive gap $(\gamma - 1)t$, so the two forms coincide up to the \emph{relative} size of the threshold; on instances with $t = \Theta(N)$ a constant multiplicative gap is an additive $\Omega(N)$ gap, while on instances with $t = o(N)$ it is an additive $o(N)$ gap. 
This is why hardness at a constant multiplicative factor does not by itself yield hardness at a linear additive gap, and why the classical bases of \S\ref{sec:bases} are stated additively. Lemma~\ref{lem:addbase} converts an additive classical gap into an additive quantum gap with no condition on the threshold profile, whereas the multiplicative route requires $t = \Theta(m)$. For the same reason, we have approximation hardness for $\qdist$ in additive form: since $\qdist(Q) \le N$ always, an additive gap $\alpha N$ with constant $\alpha$ is the strongest scale at which the promise is not vacuous.

\paragraph*{CSS variants}
Recall that a stabilizer code is \emph{CSS} if it admits a generating set in which every generator is of pure $X$-type ($X(u)$ for some $u \in \F_2^N$) or pure $Z$-type ($Z(v)$ for some $v \in \F_2^N$); such a code is presented by a pair of parity-check matrices $(H_X, H_Z)$ with $H_X H_Z^\top = 0$, the condition being equivalent commutativity of the generators.
\begin{problem}[$\mathrm{CSSMDP}_{\kappa,N}$]
\label{prob:cssqmindist}
\begin{description}
   \item[\textnormal{INSTANCE:}] a CSS presentation $(H_X, H_Z)$,  $H_X H_Z^\top = 0$, of an $[[N, \kappa]]$ CSS code $Q$ with   $\kappa \ge 1$, and an integer $t > 0$.
   \item[\textnormal{YES:}] $\qdist(Q) \le t$.
   \item[\textnormal{NO:}] $\qdist(Q) \ge t+1$.
\end{description}
\end{problem}

\begin{problem}[$\gamma$-$\mathrm{GapCSSMDP}$]
\label{prob:gapmultcssqdist}
For $\gamma \ge 1$:
\begin{description}
   \item[\textnormal{INSTANCE:}] a CSS presentation $(H_X, H_Z)$, $H_X H_Z^\top = 0$, of an $[[N, \kappa]]$ CSS code $Q$ with $\kappa \ge 1$, and an integer $t > 0$.
   \item[\textnormal{YES:}] $\qdist(Q) \le t$.
   \item[\textnormal{NO:}] $\qdist(Q) > \gamma t$.
\end{description}
\end{problem}

\begin{problem}[$\mathrm{GapAddCSSMDP}_{g}$]
\label{prob:gapaddcssqdist}
For a gap function $g \colon \mathbb{N} \to \mathbb{N}$:
\begin{description}
   \item[\textnormal{INSTANCE:}] a CSS presentation $(H_X, H_Z)$,  $H_X H_Z^\top = 0$, of an $[[N, \kappa]]$ CSS code $Q$ with  $\kappa \ge 1$, and an integer $t > 0$.
   \item[\textnormal{YES:}] $\qdist(Q) \le t$.
   \item[\textnormal{NO:}] $\qdist(Q) > t + g(N)$.
\end{description}
\end{problem}

\subsection{SAT problems and hypotheses}
\label{sec:hyps}
 
We recall the SAT problems and fine-grained hypotheses on which the hardness of \cite{SV19} rests. All of them are standard; we state only the binary ($q=2$) versions, which is all we use. 
The reduction from these hypotheses to the coding problems passes through the $q$-ary variants of \cite[\S3]{SV19}; we consume that reduction as a black box (Corollary~\ref{imp:ncp}) and do not reproduce its $q$-ary gadget here. Other than that, certain parameters of this gadget would be important for us later, and we will get back to them in the section that follow.
 
\begin{problem}[$k$-\textup{SAT}]
\label{prob:ksat}
For a constant integer $k \ge 3$, $k$-\textup{SAT} is the following problem.
\begin{description}
   \item[\textnormal{INSTANCE:}] a $k$-CNF formula $\Phi$ on $n$ variables,  i.e.\ a conjunction of clauses each a disjunction of $k$ literals.
   \item[\textnormal{YES:}] $\Phi$ is satisfiable.
   \item[\textnormal{NO:}] $\Phi$ is unsatisfiable.
\end{description}
\end{problem}
 
\begin{problem}[\textup{Max-}$k$-\textup{SAT}]
\label{prob:maxksat}
For a constant integer $k \ge 2$, \textup{Max-}$k$-\textup{SAT} is the following problem.
\begin{description}
   \item[\textnormal{INSTANCE:}] a $k$-CNF formula $\Phi$ with $m$ clauses on $n$ variables, and an integer $r \le m$.
   \item[\textnormal{YES:}] some assignment satisfies at least $r$ clauses.
   \item[\textnormal{NO:}] no assignment satisfies $r$ clauses.
\end{description}
Taking $r = m$ recovers $k$-\textup{SAT}: satisfying all $m$ clauses is ordinary satisfiability. 
\end{problem}
 
\begin{hypothesis}[SETH; {\cite[Def.~2.8]{SV19}}]
\label{imp:sat}
For every constant $\varepsilon > 0$ there is a constant integer $k \ge 3$ such that no $2^{(1-\varepsilon)n}$-time deterministic algorithm solves $k$-\textup{SAT} on $n$ variables. \emph{Randomized} \textup{SETH} (resp.\ \emph{non-uniform} \textup{SETH}) is the same statement for randomized algorithms (resp.\ circuit families).
\end{hypothesis}
 
\begin{hypothesis}[Gap-ETH; {\cite[Def.~2.9]{SV19}}]
\label{imp:gapeth}
There is a constant $s \in (0,1)$ such that no $2^{o(n)}$-time algorithm distinguishes satisfiable $3$-\textup{SAT} instances on $n$ variables (with $O(n)$ clauses) from instances in which every assignment satisfies fewer than an $s$-fraction of the clauses. \emph{Randomized} and \emph{non-uniform} \textup{Gap-ETH} are the analogous statements for randomized algorithms and circuit families.
\end{hypothesis}

\subsection{Results from \cite{SV19}}

\begin{theorem}[$k$-SAT $\to$ NCP;
{\cite[Thm.~4.2]{SV19}}]
\label{imp:ncp_original}
There is a deterministic $\mathrm{poly}(n,m,q^{k})$-time Karp reduction mapping any \textup{Max-}$(q,k)$-\textup{SAT} instance with $m$ clauses on $n\le m$ variables and value $r\le m$ to an \textup{NCP} instance of
\[
   \text{rank } \le n,\qquad
   \text{block length } = (q^{k}-1)\,m,\qquad
   \text{threshold } = (q^{k}-1)m - q^{k-1}r .
\]
Moreover, the reduction is gap-preserving: for any $0 < s \le c \le 1$, it maps $(s,c)$-\textup{Gap-}$(q,k)$-\textup{SAT} to $\gamma$-\textup{NCP} with threshold $t = (q^{k}-1)m - q^{k-1}cm$, where
\[
   \gamma \;=\; \frac{1 - s/q - q^{-k}}{\,1 - c/q - q^{-k}\,} .
\]
\end{theorem}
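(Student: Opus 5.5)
This result is imported from \cite[Thm.~4.2]{SV19}, and we would reprove it by the standard clause-gadget construction that converts Max-$(q,k)$-SAT into a nearest-codeword instance. Work over $\F_q$ (the paper only needs $q=2$, but the argument is uniform). Fix a variable assignment $x\in\F_q^n$. For each clause $\phi_j$ on variables $x_{i_1},\dots,x_{i_k}$, introduce one coordinate for every nonzero coefficient vector $u\in\F_q^k\setminus\{0\}$; this gives $q^k-1$ coordinates per clause and $(q^k-1)m$ in total. The coordinate $(j,u)$ of the generator matrix applied to $x$ is the linear form $\sum_\ell u_\ell x_{i_\ell}$. This map is linear, so the code has rank at most $n$. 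Each clause is a disjunction of literals $x_{i_\ell}\neq a_\ell$; equivalently, it is falsified exactly when $(x_{i_1},\dots,x_{i_k})=a^{(j)}$ for one forbidden tuple $a^{(j)}$. Set the target to $y_{(j,u)}:=\sum_\ell u_\ell a^{(j)}_\ell$.

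The key computation is a per-clause count. On clause $j$, the codeword $Cx$ agrees with $y$ at coordinate $u$ exactly when $u\cdot(x_{I_j}-a^{(j)})=0$. If the clause is falsified, then $x_{I_j}-a^{(j)}=0$, so all $q^k-1$ coordinates agree and the distance contribution is $0$. If the clause is satisfied, then $z:=x_{I_j}-a^{(j)}\neq 0$. The nonzero $u$ orthogonal to $z$ number $q^{k-1}-1$, so the disagreements number $(q^k-1)-(q^{k-1}-1)=q^k-q^{k-1}$.

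This is the point where the bookkeeping has to match the stated threshold. With the target chosen as above, a falsified clause gives distance $0$, which is the wrong direction. So we would instead build the target from a shift under which a satisfied clause costs fewer disagreements than a falsified one. Concretely, pick $y$ so that on clause $j$ the disagreement count equals $(q^k-1)-q^{k-1}\mathbf 1[\text{clause satisfied}]$. One way is to replace the single forbidden point by the complementary affine structure. The variant we would try first introduces, per clause, the $q-1$ nonzero scalings, which handles the case $q=2$ directly. We expect the main obstacle to be this gadget choice: obtaining a uniform count of $q^{k-1}$ agreements in the satisfied case, independent of which satisfying assignment is used. We would resolve it by the character-sum identity $\#\{u\neq0: u\cdot z=c\}=q^{k-1}-[c=0]$ for $z\neq0$, choosing the target coordinates so that the relevant $c$ is nonzero exactly when the clause is satisfied.

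Granting the per-clause identity, the total distance of $Cx$ from $y$ is $(q^k-1)m-q^{k-1}\cdot\#\mathrm{sat}(x)$. Minimizing over $x$, the minimum distance is at most $(q^k-1)m-q^{k-1}r$ if and only if some assignment satisfies at least $r$ clauses. This gives the Karp reduction with the stated rank, block length and threshold, in time $\mathrm{poly}(n,m,q^k)$. For the gap statement, a completeness-$c$ instance yields distance at most $t=(q^k-1)m-q^{k-1}cm$. A soundness-$s$ instance yields distance greater than $(q^k-1)m-q^{k-1}sm$. Dividing both by $q^k m$, the ratio of the two bounds is $\gamma=(1-s/q-q^{-k})/(1-c/q-q^{-k})$, as stated.
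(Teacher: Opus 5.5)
\textbf{Gap: the target vector is never specified.} Your generator matrix matches the construction the paper imports from \cite[Thm.~4.2]{SV19} (see the proof of Observation~\ref{lem:fullrank}): per clause, a $(q^k-1)\times k$ block whose rows are all nonzero $u\in\mathbb{F}_q^k$. Your rank and block-length bookkeeping and your final $\gamma$ computation are also correct. You also correctly find that the naive target $y_{(j,u)}=u\cdot a^{(j)}$ points the wrong way.

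However, you never actually fix the target, and everything after that is stated as ``granting the per-clause identity.'' The remarks about a ``complementary affine structure'' and ``the $q-1$ nonzero scalings'' are not constructions. Adding coordinates per clause would also break the stated block length $(q^k-1)m$. Your one concrete recipe cannot be implemented: it asks for a target making the relevant $c$ nonzero \emph{exactly when the clause is satisfied}. But the target is fixed before the assignment $x$ is chosen, and whether a clause is satisfied depends on $x$.

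\textbf{The missing step is one fixed nonzero shift, the same on every coordinate.} Set $y_{(j,u)} := u\cdot a^{(j)} + 1$. On block $j$ the target is then the Hadamard encoding of $a^{(j)}$ plus the all-ones vector; for $q=2$ this is the complement of your naive target. Coordinate $(j,u)$ now agrees with the codeword iff $u\cdot z = 1$, where $z = x_{I_j}-a^{(j)}$.
\begin{itemize}
\item If the clause is falsified, then $z=0$, there are no agreements, and the block contributes $q^k-1$.
\item If the clause is satisfied, then $z\neq 0$. Your identity $\#\{u\neq 0 : u\cdot z = c\} = q^{k-1}-[c=0]$ with $c=1$ gives exactly $q^{k-1}$ agreements, so the block contributes $q^k-1-q^{k-1}$.
\end{itemize}
So $c$ stays fixed at a nonzero value, and the satisfied/falsified distinction comes entirely from whether $z$ is zero.

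With this target the distance of $Cx$ from $y$ is $(q^k-1)m - q^{k-1}\,\#\mathrm{sat}(x)$ for every $x$. The rest of your argument then goes through as written: the equivalence with the threshold $(q^k-1)m-q^{k-1}r$, and the completeness/soundness computation giving $\gamma$.
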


In our case, we use the following special case of the above theorem with $q=2$, $r=m$, and a constant integer $k$.
This turns \textup{Max-}$(2,k)$-\textup{SAT} into ordinary $k$-\textup{SAT}.

\begin{corollary}
    [$k$-SAT $\to$ NCP;
{\cite{SV19}}]
\label{imp:ncp}
There is a deterministic $\mathrm{poly}(n,m)$-time Karp reduction mapping any $k$-\textup{SAT} instance with $m$ clauses on $n\le m$ variables to an \textup{NCP} instance of
   \[
   \text{rank }\le n,\qquad
   \text{block length } =  O(m).
\]

\end{corollary}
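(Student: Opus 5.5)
This corollary is the special case $q=2$, $r=m$ of Theorem~\ref{imp:ncp_original}, so the plan is simply to instantiate that theorem and check the parameters. First I note that a $k$-SAT instance is a Max-$(2,k)$-SAT instance with $r=m$. This is exactly the observation recorded after Problem~\ref{prob:maxksat}: an assignment satisfies at least $m$ of the $m$ clauses if and only if it satisfies $\Phi$. Hence the \YES/\NO\ partitions of $k$-SAT and of Max-$k$-SAT at value $r=m$ coincide, and any Karp reduction for the latter is a Karp reduction for the former.

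Next I apply Theorem~\ref{imp:ncp_original} with $q=2$. Since $k$ is a constant, the running time $\mathrm{poly}(n,m,q^k)=\mathrm{poly}(n,m,2^k)$ collapses to $\mathrm{poly}(n,m)$. The rank bound $\le n$ transfers verbatim. The block length is $(2^k-1)m$, which is $O(m)$ because $2^k-1$ is a constant.

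For completeness I would also record the threshold, although the corollary does not need it: it is $(2^k-1)m-2^{k-1}m=(2^{k-1}-1)m$. This is positive for $k\ge 2$, so the output is a legitimate $\mathrm{NCP}$ instance with $t>0$ as Problem~\ref{prob:ncp} requires. I would also check the side condition $n\le m$ of the theorem. It is part of the hypothesis of the corollary. Otherwise it can be enforced by discarding variables that occur in no clause, or by adding trivially satisfiable clauses; neither step changes satisfiability.

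There is essentially no obstacle here. The only point needing care is the presentation: the theorem outputs a generator matrix, as Problem~\ref{prob:ncp} specifies. If a parity-check form is later wanted, it is obtained by Gaussian elimination in $O(m^3)$ time, which does not affect the $\mathrm{poly}(n,m)$ bound.
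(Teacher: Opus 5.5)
Your proposal is correct and is exactly the paper's argument: the corollary is the specialization of Theorem~\ref{imp:ncp_original} to $q=2$, $r=m$ and constant $k$, and your parameter checks are accurate (running time $\mathrm{poly}(n,m)$, rank $\le n$, block length $(2^k-1)m=O(m)$, threshold $(2^{k-1}-1)m>0$). One aside is imprecise but unused: discarding unused variables only yields $n\le km$, not $n\le m$, which is moot because $n\le m$ is already a hypothesis of the corollary.
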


\begin{observation}[the NCP instance has full rank]
\label{lem:fullrank}
Suppose every variable of the input \textup{Max-}$(q,k)$-\textup{SAT} instance occurs in at least one clause. Then the generator matrix $\Phi \in \F_q^{(q^{k}-1)m \times n}$ produced by the reduction of Theorem~\ref{imp:ncp_original} has full column rank; in particular, the output \textup{NCP} instance has rank exactly $n$.
\end{observation}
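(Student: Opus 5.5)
The plan is to argue directly from the form of the generator matrix built in the reduction of Theorem~\ref{imp:ncp_original}, by showing that its kernel is trivial. Recall the shape of that construction in \cite[Thm.~4.2]{SV19}. The rows of $\Phi$ come in blocks, one block per clause. Within the block of a clause $\phi_\ell$ on variables $x_{j_1},\dots,x_{j_k}$ (positions $1,\dots,k$), the rows are indexed by the nonzero vectors $v\in\F_q^k\setminus\{0\}$, which accounts for the $(q^k-1)m$ rows. The row indexed by $v$ is the linear form
\[
x\;\mapsto\;\sum_{i=1}^{k} v_i\,x_{j_i}.
\]
The target vector and threshold play no role in the rank, so only this linear structure is needed. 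The first step is to restate this description precisely, taking it from the construction being imported.

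The main step is to take any $x\in\F_q^n$ with $\Phi x=0$ and show $x=0$. Fix a variable index $j$. By hypothesis, $x_j$ occurs in some clause $\phi_\ell$, say at position $i$. Consider the row of the $\ell$-th block indexed by the standard basis vector $v=e_i$, which is nonzero and hence a genuine row. Its linear form is exactly $x\mapsto x_{j_i}=x_j$. This stays true even if the clause repeats variables: the coefficients at the other positions are $0$, so no other occurrence contributes. Thus $(\Phi x)_{(\ell,e_i)}=x_j=0$. Since $j$ was arbitrary, $x=0$, so $\Phi$ has trivial kernel. Equivalently, the columns of $\Phi$ are linearly independent over $\F_q$, and $\Phi$ has full column rank $n$.

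Finally, the rank of the output NCP instance is the dimension of the column span of $\Phi$. By the previous step this is exactly $n$, which upgrades the bound ``rank $\le n$'' of Theorem~\ref{imp:ncp_original} and Corollary~\ref{imp:ncp} to equality.

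The only point I expect to need care is the first step: pinning down that the block for each clause contains a row for every nonzero $v$, in particular for each $e_i$, in the exact indexing convention of \cite{SV19}. If their rows were instead indexed by, for example, only the clause-falsifying directions, I would fall back to showing that the $k$ forms in a block span the coordinate functionals of that clause's variables. Given that the block length is exactly $(q^k-1)m$, however, the full-indexing reading is forced, and the argument above goes through.
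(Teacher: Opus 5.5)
Your proof is correct and is essentially the paper's argument: both rest on each clause block's rows being all nonzero vectors of $\F_q^k$, so the standard-basis rows read off the coordinate of every variable in that clause, and the hypothesis that every variable occurs in some clause then forces the kernel of $\Phi$ to be trivial. The paper phrases this as the gadget $C$ having full column rank applied to the nonzero restriction $a|_{\phi_i}$, whereas you read the coordinates off directly from the $e_i$ rows (which also handles clauses with repeated variables cleanly), but the idea is the same.
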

 
\begin{proof}
The reduction of \cite[Thm.~4.2]{SV19} builds $\Phi$ from clause blocks $\Phi_i \in \F_q^{(q^{k}-1)\times n}$: the columns of $\Phi_i$ indexed by the $k$ variables of the clause $\phi_i$ form the Hadamard gadget $C \in \F_q^{(q^{k}-1)\times k}$ of \cite[Claim~4.1]{SV19}, and all other columns of $\Phi_i$ are zero. The rows of $C$ are all non-zero vectors of $\F_q^{k}$; in particular they include the standard basis vectors, so $C$ has full column rank, i.e.\ $Cv \neq 0$ for every $v \in \F_q^{k}\setminus\{0\}$.
 
Now let $a \in \F_q^{n}\setminus\{0\}$, pick a coordinate $j$ with $a_j \neq 0$, and pick a clause $\phi_i$ in which $x_j$ occurs. Then $\Phi_i a = C\,a|_{\phi_i}$, where $a|_{\phi_i} \in \F_q^{k}$ is the restriction of $a$ to the variables of $\phi_i$. This restriction is non-zero (its coordinate for $x_j$ equals $a_j$), so $\Phi_i a \neq 0$, and hence $\Phi a \neq 0$.
\end{proof}

\begin{theorem}[randomized NCP $\to$ exact MDP;
{\cite[Cor.~5.7]{SV19}}]
\label{imp:rs}
For every constant $\varepsilon' \in \left(0,\tfrac12\right)$ there is an efficient \emph{randomized} $\mathrm{poly}_{\varepsilon'}(m)$-time reduction that maps any \textup{NCP} instance of rank $n (\geq 2)$ and block length $m$ to an exact-\textup{MDP} instance of
\[
   \text{rank } \le (1+\varepsilon')\,n,\qquad
   \text{block length } = \mathrm{poly}_{\varepsilon'}(m).
\]
\end{theorem}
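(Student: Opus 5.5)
Since this is imported from \cite[Cor.~5.7]{SV19}, the plan is to reconstruct the Dumer--Micciancio--Sudan style embedding of $\NCP$ into $\MDP$ with a \emph{locally dense code} gadget, while tracking the rank so that the gadget contributes only an $\varepsilon' n$ overhead. Let the $\NCP$ instance be $(C,y,t)$, with $C\in\F_2^{m\times n}$ of full column rank (Observation~\ref{lem:fullrank}).

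\textbf{Gadget.} I would first fix a locally dense gadget $(L,s,r)$, where $L\in\F_2^{m_L\times n_L}$ is a code of minimum distance $d_L>r$ and $s\in\F_2^{m_L}$ is a shift such that the ball of radius $r$ around $s$ contains exponentially many codewords $Lw$. The rank $n_L$ should be at most about $\varepsilon' n$, and $m_L=\mathrm{poly}_{\varepsilon'}(m)$. The parameters must also satisfy $r<d_L$ with $r$ close to $d_L$ in a controlled way.

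\textbf{Combined code.} Next I would choose a random linear map $T\colon\F_2^{n_L}\to\F_2^{n}$ and form the code of vectors
\[
\bigl(\,(Cx\oplus zy)^{\otimes a},\; Lw\oplus zs\,\bigr),\qquad x\in\F_2^n,\ w\in\F_2^{n_L},\ z\in\F_2,\ Tw=x,
\]
where $(\cdot)^{\otimes a}$ denotes $a$ repeated copies. The constraint $Tw=x$ lets $w$ determine $x$. The rank is then at most $n_L+1\le(1+\varepsilon')n$ whenever $n_L\le \varepsilon' n-1$, and the block length is $am+m_L=\mathrm{poly}_{\varepsilon'}(m)$. If the rank bound forces it, I would instead keep $x$ free and add only $\varepsilon' n$ gadget dimensions, tying $w$ to $x$ through the random projection; either variant gives the stated rank budget.

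\textbf{Soundness (NO instances).} Suppose $\mathrm{dist}(y,C)>t$. A codeword with $z=0$ is either nonzero in the first block, where it has weight at least $a\cdot\dist(C)$, or it has $x=0$ and then weight at least $d_L$. A codeword with $z=1$ has weight greater than $a(t+1)$. Choosing $a$ and the threshold $t'=at+r$ so that $d_L>t'$ and $a(t+1)>t'$ makes the distance exceed $t'$ for every choice of $T$.

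\textbf{Completeness (YES instances).} If $Cx^\ast\oplus y$ has weight at most $t$, I need some $w$ with $Tw=x^\ast$ and $\wtH(Lw\oplus s)\le r$. This gives a codeword of weight at most $at+r=t'$.

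\textbf{Main obstacle.} The hard step is completeness. It requires that the random $T$, restricted to the $\ge 2^{\Omega(n_L)}$ dense codewords $w$ in the ball, hits the specific target $x^\ast$, i.e.\ that it is surjective onto $\F_2^n$ or at least covers $x^\ast$. I would try a pairwise-independence / leftover-hash argument: if the number of dense codewords exceeds $2^{n}\cdot\mathrm{poly}$, then $T$ maps them onto all of $\F_2^n$ with high probability. This pulls in the opposite direction from the rank budget $n_L\le\varepsilon' n$, so the real content is a gadget whose dense-codeword count is at least $2^{n}$ while its rank exceeds $n$ only by the factor $(1+\varepsilon')$. I would draw this from the \cite{SV19} construction, concatenating a Reed--Solomon/BCH-type locally dense code and accepting $\mathrm{poly}_{\varepsilon'}$ blowup in block length. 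Finally I would balance $a$, $r$, $d_L$ so that both inequalities in the soundness step hold simultaneously.
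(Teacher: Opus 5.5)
\textbf{The soundness step fails as set up.} The paper imports this statement from \cite[Cor.~5.7]{SV19} without proof. Your outline matches that route in its architecture: a gadget, a random $T$ with $x=Tw$, and output rank $n_L+1$. Your soundness step, however, cannot be completed, and this is a missing idea rather than a balancing detail.

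For $z=1$ you use only the NCP block, which implicitly bounds $\wtH(Lw\oplus s)\ge 0$. You then amplify the exact NCP gap ($\le t$ versus $\ge t+1$) by repeating that block $a$ times. With $t'=at+r$ this needs $a>r$ and $d_L>at+r$. But two distinct codewords in the radius-$r$ ball around $s$ force $d_L\le 2r$. Hence $2r\ge d_L>at+r\ge (r+1)t+r$, i.e.\ $r>(r+1)t$, which is false for every $t\ge 1$. No choice of $a,r,d_L$ rescues this. It is exactly why Dumer--Micciancio--Sudan-style arguments start from NCP with a large constant gap, which exact NCP does not have.

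Your $z=0$ bound $a\cdot\mathrm{dist}(C)$ is also vacuous, since $\mathrm{dist}(C)$ may be $1$. The usable bound is that $x=Tw\neq 0$ forces $w\neq 0$, hence gadget weight at least $d_L$.

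\textbf{The missing idea.} You need a gadget whose centre is at distance \emph{exactly} $r$ from the code, with exponentially many codewords at that minimum distance and $d_L\ge r+1$. This is the kind of gadget \cite{SV19} use. A Reed--Solomon instance works as follows. Take the dimension-$k$ code on all of $\mathbb{F}_Q$ and $s=(\alpha^k)_{\alpha\in\mathbb{F}_Q}$. Every codeword $f$ (of degree $<k$) leaves $X^k-f(X)$ monic of degree $k$, so it agrees with $s$ in at most $k$ places. Each $k$-set $S\subseteq\mathbb{F}_Q$ gives the codeword $X^k-\prod_{\beta\in S}(X-\beta)$ at distance exactly $Q-k$. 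Hadamard concatenation multiplies all these distances by $2^{\kappa-1}$, so exactness survives.

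With this gadget, do not repeat the NCP block. Instead stack the gadget $\ell\ge t+1$ times and use threshold $t+\ell r$. Then:
\begin{itemize}
\item NO codewords with $z=1$ weigh at least $(t+1)+\ell r$.
\item Nonzero NO codewords with $z=0$ have $w\neq0$ and weigh at least $\ell(r+1)>t+\ell r$.
\item The YES witness gives weight at most $t+\ell r$.
\item The block length $m+\ell m_L$ stays polynomial.
\end{itemize}

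\textbf{Rank and completeness.} Your rank bookkeeping must also change. With $n_L\le\varepsilon' n-1$ the image of $T$ has dimension below $n$ and generally misses $x^*$. The ``keep $x$ free'' variant does not help. There $x$ ranges over a space of dimension larger than $n_L$, so some nonzero $x$ is paired with $w=0$. That yields $z=0$ codewords $((Cx)^{\otimes a},0)$ of weight $a\,\wtH(Cx)$, which nothing controls.

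The right budget is $n_L\approx(1+\varepsilon')n-1$ with at least $10\cdot 2^n$ such minimum-distance codewords. The Reed--Solomon gadget attains this with $Q=2^\kappa$ and $\kappa=\Theta(\log n/\varepsilon')$, since $\binom{Q}{k}\ge 2^{\kappa k-k\log k}$. Completeness only needs the fixed $x^*$ to be hit, not surjectivity of $T$. That follows from pairwise independence of $Tw$ over distinct nonzero $w$ and Chebyshev, giving failure probability at most $2^n/|Z|$.
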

 
\begin{theorem}[deterministic NCP $\to$ exact MDP;
{\cite[Thm.~5.16]{SV19}}]
\label{imp:turing}
For every constant $\varepsilon' \in \left(0,\tfrac12\right)$ there is a \emph{deterministic} (Turing) reduction, running in time
\[
  T \leq  2^{\,3(1+\varepsilon')n/4}\cdot \mathrm{poly}_{\varepsilon'}(n,m),
\]
that maps any \textup{NCP} instance of rank $n$ and block length $m$ to
\[
   Q \;=\; 2^{\,3n/4}\quad\text{exact-\textup{MDP} instances, each of rank}
\]
\[
   n' =  \left\lceil \tfrac{(1+\varepsilon')\,n}{4} \right\rceil + 1 \;\le\; \tfrac{(1+\varepsilon')\,n}{4} + 2, \qquad \text{ and each of block length $= \mathrm{poly}_{\varepsilon'}(m)$}.
\]

\end{theorem}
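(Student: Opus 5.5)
This result is imported from \cite[Thm.~5.16]{SV19}. The plan is to reconstruct their argument in two layers: a deterministic \emph{rank-splitting} step that trades time for rank, followed by a homogenization step through a locally dense code whose random ingredient is replaced by exhaustive search. Only the parameter bookkeeping matters to us, namely the number of instances, their rank $n'$, and the running time $T$.

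\textbf{Step 1 (splitting the rank).} Let $(C,y,t)$ be the \textup{NCP} instance with generator $C\in\F_2^{m\times n}$. Split the message $x=(x_1,x_2)$ with $x_1\in\F_2^{3n/4}$ and $x_2\in\F_2^{n/4}$, and write $C=[C_1\mid C_2]$. Then
\[
\dist(y,C)=\min_{x_1}\dist(y\oplus C_1x_1,\;C_2).
\]
Enumerating all $2^{3n/4}$ values of $x_1$ therefore yields $2^{3n/4}$ \textup{NCP} instances $(C_2,\,y\oplus C_1x_1,\,t)$, each of rank at most $n/4$. The original instance is a \YES~instance iff at least one of them is, so the reduction is a Turing (OR) reduction. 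This accounts for $Q=2^{3n/4}$.

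\textbf{Step 2 (\textup{NCP} to \textup{MDP} at rank $n/4$).} For each subinstance, apply the locally dense code reduction of \cite[\S5]{SV19}. Concretely, form the code spanned by $C_2$ and the target $y'$ and concatenate it with a locally dense gadget $(L,s,r^\dagger)$. The gadget has exponentially many codewords of $L$ at distance at most $r^\dagger$ from a centre $s$, and a linear map $T$ is chosen so that these codewords surject onto $\F_2^{n/4}$. Adjoining the target costs the $+1$ in $n'=\lceil(1+\varepsilon')n/4\rceil+1$. The gadget contributes the $(1+\varepsilon')$ blow-up in rank, since its dimension is only slightly larger than the message space it must cover. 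Completeness follows because a close codeword to $y'$ combines with a gadget codeword near $s$ to give a low-weight nonzero codeword. Soundness follows because any low-weight codeword either uses the target coordinate, and so yields a close codeword, or lies in $L$ alone and is heavy by the distance of $L$. Block length stays $\mathrm{poly}_{\varepsilon'}(m)$ by taking polynomially many copies to balance weights.

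\textbf{Step 3 (derandomization, the main obstacle).} In \cite[Cor.~5.7]{SV19} the shift $s$ and the map $T$ are random. Deterministically we must find a good pair by search. The plan is to enumerate candidate centres and maps from a family of size $2^{(1+\varepsilon')\cdot 3n/4}$, or equivalently to search over the $2^{3(1+\varepsilon')n/4}$-sized ambient space that governs the gadget, and to verify each candidate in $\mathrm{poly}_{\varepsilon'}(n,m)$ time. This matches the claimed bound on $T$. The delicate point is checking that a fixed good gadget works \emph{uniformly} for all $2^{3n/4}$ subinstances, since these share $C_2$ and differ only in the target. That lets us build the gadget once rather than per query, so the construction cost adds to, rather than multiplies, the enumeration cost. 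We would follow \cite{SV19} for this verification rather than redo it.
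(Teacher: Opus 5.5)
The paper gives no proof of this statement. It imports it from \cite[Thm.~5.16]{SV19} and uses only its bookkeeping (number of queries, rank $n'$, construction time, and the fact that construction and queries run sequentially) in the proof of Theorem~\ref{thm:strong}(a). Your Steps~1--2 agree with that bookkeeping: enumerating $x_1$ gives an OR over $2^{3n/4}$ NCP instances of rank $n/4$, and a gadget of rank $\lceil(1+\varepsilon')n/4\rceil$ plus one target column gives $n'$. One point in Step~2 needs tightening for the \emph{exact} problem. Having many codewords of $L$ within $r^\dagger$ of $s$ is not enough: soundness for codewords that use the target column needs $\mathrm{wt}_H(Lz\oplus s)\ge r^\dagger$ for \emph{every} $z$. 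Otherwise a lighter gadget part absorbs the single unit of slack between $\mathrm{dist}(y',C_2)\le t$ and $\mathrm{dist}(y',C_2)\ge t+1$.

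The genuine gap is Step~3, which is the only non-routine content of the theorem. A deterministic reduction does not know $x_2$, so you need one pair $(s,T)$ with $T(Z)=\mathbb{F}_2^{n/4}$, where $Z$ is the whole set of locally dense message vectors. The randomized analysis does not supply this. A density condition of the form $M\ge 10\cdot 2^{\mathrm{rank}}$, as in \cite[Cor.~5.3]{SV19}, is what a second-moment argument needs to hit one \emph{fixed} target with constant probability. At $|Z|\le 2^{n^\dagger}$, $n^\dagger=\lceil(1+\varepsilon')n/4\rceil$, the same argument bounds the expected number of missed targets only by $2^{n/2}/|Z|$, which is exponentially large.

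Your search over a family of $2^{3(1+\varepsilon')n/4}$ candidates, each verified in $\mathrm{poly}(n,m)$ time, is reverse-engineered from the target bound. You do not specify the family, nothing shows it contains a good pair, and polynomial-time verification is not available in general. Deciding $u\in T(Z)$ is itself a nearest-codeword question: whether $s\oplus Lz_u$, for a fixed $z_u$ with $Tz_u=u$, lies within $r^\dagger$ of the code $L(\ker T)$. Certifying that no codeword of $L$ is closer than $r^\dagger$ to $s$ is another such question. Generically, checking one candidate means enumerating the $2^{n^\dagger}$ gadget messages.

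Note that $2^{3(1+\varepsilon')n/4}=(2^{n^\dagger})^{3}$ up to rounding. The budget is polynomial in the size of the gadget's message space, which is exactly what shrinking the rank to $n/4$ buys: the cost stays below $2^n$ when $\varepsilon'<1/3$. A proof must exhibit a procedure within this budget that provably outputs an onto map. Finally, the uniformity over the $2^{3n/4}$ subinstances that you flag as delicate is automatic. Once $T(Z)=\mathbb{F}_2^{n/4}$, the gadget depends on neither $C_2$ nor the shifted target, so building it once costs nothing extra; the delicate point is surjectivity itself.
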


The following is not stated explicitly in \cite{SV19}, who bound the ambient dimension only by an unspecified $\mathrm{poly}(m)$; it follows by composing \cite[Cor.~5.5 and 5.6]{SV19} along with careful tracking of parameters of the \emph{Reed--Solomon gadget}. 
For our purposes, it does not matter what the exact gadget is; it suffices to treat it as a black box.
\begin{observation}[block length; {\cite{SV19}}]
\label{lem:blocklen}
On the hard instances produced by the Reed--Solomon gadget family of \cite{SV19}, the block length $m'$ satisfies
\[
   m' \;=\; n^{\Theta(1/\varepsilon')}
   \;=\; \mathrm{poly}(n)\qquad\text{for every fixed }\varepsilon' .
\]
In particular, for constant $\varepsilon'$ every factor $\mathrm{poly}(m')$ is $\mathrm{poly}(n) \leq 2^{o(n)}$.
\end{observation}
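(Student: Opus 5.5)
Since Observation~\ref{lem:blocklen} is a statement about parameters inside \cite{SV19}, I would prove it by unwinding the two corollaries it composes, \cite[Cor.~5.5 and 5.6]{SV19}, and recording every block length as it goes. The plan has three steps. First, fix the Reed--Solomon locally dense code gadget $(A^\dagger,s^\dagger,r^\dagger)$ used for the exact (non-gap) reduction. Second, express its block length $m^\dagger$ in terms of the field size and the NCP rank $n$. Third, add the block length of the embedded NCP instance, which is $O(m)$ with $m=O(n)$ by Corollary~\ref{imp:ncp} whenever the SAT instance has linearly many clauses. The reduction of Theorem~\ref{imp:rs} outputs an MDP instance whose generator is obtained by stacking a constant number of copies of the NCP code (a number depending only on $\varepsilon'$) together with the gadget. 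Hence $m'=O_{\varepsilon'}(m)+m^\dagger$, and the whole claim reduces to bounding the gadget.

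For the gadget, I would take the SV19 construction from a Reed--Solomon code over $\F_{2^\ell}$ with evaluation set of size $\approx 2^\ell$, binarized and restricted so that its rank is $(1+\varepsilon')n$ and its local density yields $2^{n}$ codewords in a ball of radius $r^\dagger$. The requirement that the number of codewords in the ball exceed $2^{n}$, while the minimum distance stays above the radius by a $(1+\Theta(\varepsilon'))$ factor, forces the degree parameter to be about $n/\varepsilon'$ relative to $\log$ of the field size. Solving the resulting inequality gives $2^\ell = n^{\Theta(1/\varepsilon')}$, and so $m^\dagger = 2^\ell\cdot\poly(\ell) = n^{\Theta(1/\varepsilon')}$.

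I would establish both directions of the $\Theta$. The upper bound comes from the explicit choice of $\ell$. The lower bound comes from the counting inequality: any field size smaller than $n^{c/\varepsilon'}$ for a suitable $c$ violates it. The final remark is then immediate. For fixed $\varepsilon'$ the exponent is constant, so $m'=\poly(n)$, and $\poly(m') = 2^{O(\log n)} = 2^{o(n)}$.

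The main obstacle is the second step: extracting the precise dependence of the gadget's field size on $\varepsilon'$ from \cite{SV19}, which states it only as $\poly(m)$. There I would redo the local-density count, namely a binomial lower bound on the number of degree-bounded polynomials agreeing with the target on many points, and check that the needed exponent is $\Theta(1/\varepsilon')$.
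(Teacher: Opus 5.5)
Your route is the paper's: take the gadget block length from \cite[Cor.~5.5]{SV19}, substitute the extension degree fixed in \cite[Cor.~5.6]{SV19}, and note that the remaining steps of \cite[Cor.~5.7, Thm.~5.16]{SV19} add only polynomial overhead. The step you single out as the main obstacle is not one, because those corollaries state the parameters explicitly. The gadget has block length $\kappa(q^{\kappa}-1)^{2}$ with $\kappa = 10\lceil (1+\log_q n)/\varepsilon' \rceil$ (your $\ell$ is their $\kappa$, with $q=2$). So $q^{2\kappa} = n^{\Theta(1/\varepsilon')}$ by direct substitution, and no local-density count has to be redone.

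A few details of your sketch are off, although none of them changes the conclusion.

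The gadget length is quadratic in the field size, not $2^{\ell}\cdot\mathrm{poly}(\ell)$.

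The block that \cite{SV19} repeats is the gadget, not the NCP code (compare the subcode $W$ in the proof of Observation~\ref{prop:linprofile}). The accounting is therefore $m_{\mathrm{NCP}} + k\,m^{\dagger}$ with a polynomially bounded number $k$ of copies, rather than $O_{\varepsilon'}(m) + m^{\dagger}$.

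The lower half of the $\Theta$ follows directly from the explicit choice of $\kappa$ together with $m' \ge m^{\dagger}$. Your plan to show that smaller fields violate the counting inequality is unnecessary. It is also not valid as stated: failure of a binomial lower bound, which is only a sufficient condition for local density, does not show that a smaller field cannot work. In any case the statement concerns the instances \cite{SV19} actually build.

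Only the upper bound $m' = \mathrm{poly}(n)$ is used downstream, in Theorem~\ref{thm:strong}.
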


\begin{proof}
We track the two parameters of the gadget through \cite[Cor.~5.5, 5.6]{SV19}. By \cite[Cor.~5.5]{SV19}, the concatenated Reed--Solomon gadget over $\mathbb{F}_q$ has block length
\begin{equation}\label{eq:cor55}
   m \;=\; \kappa\,(q^{\kappa}-1)^{2} \;=\; \Theta\!\big(\kappa\, q^{2\kappa}\big).
\end{equation}
By \cite[Cor.~5.6]{SV19}, to obtain $M \ge q^{(1-\varepsilon')n}$ locally dense codewords one takes
\begin{equation}\label{eq:cor56}
   \kappa \;=\; 10\Big\lceil \tfrac{1+\log_q n}{\varepsilon'}\Big\rceil
   \;=\; \Theta\!\Big(\tfrac{\log_q n}{\varepsilon'}\Big).
\end{equation}
Substituting \eqref{eq:cor56} into the dominant term of \eqref{eq:cor55} and using $q^{\log_q n} = n$,
\[
   q^{2\kappa}
   \;=\; q^{\,2\cdot\Theta\!\left(\frac{\log_q n}{\varepsilon'}\right)}
   \;=\; \Big(q^{\log_q n}\Big)^{\Theta(1/\varepsilon')}
   \;=\; n^{\Theta(1/\varepsilon')}.
\]
The prefactor $\kappa = \Theta(\log_q n/\varepsilon') = n^{o(1)}$ is subpolynomial and is absorbed into the exponent, so $m = n^{\Theta(1/\varepsilon')}$. 
The reductions of \cite[Cor.~5.7, Thm.~5.16]{SV19} enlarge the block length only polynomially (in $m$ and $q=O(1)$), and a polynomial of $n^{\Theta(1/\varepsilon')}$ is again $n^{\Theta(1/\varepsilon')}$; hence the final MDP block length is $m' = n^{\Theta(1/\varepsilon')}$.
For fixed $\varepsilon'$ this is $\mathrm{poly}(n)$,  and hence, $\mathrm{poly}(m') = \mathrm{poly}(n) \leq  2^{o(n)}$.
\end{proof}

\begin{theorem}[hardness of gap MDP under non-uniform Gap-ETH;
{\cite[Thm.~1.2; \S5.2]{SV19}}]
\label{imp:gapmdp}
There is a constant $\gamma_2 > 1$ such that no $2^{o(n)}$-time algorithm solves $\gamma_2$-$\mathrm{MDP}$ on binary codes of rank $\Theta(n)$, where $n$ is the number of variables of the underlying \textup{Gap-3-SAT} instance, unless non-uniform \textup{Gap-ETH} is false.
\end{theorem}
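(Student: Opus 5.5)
This statement is imported from \cite[Thm.~1.2; \S5.2]{SV19}, so the plan is to recall how the cited argument is assembled and to check that its output is measured in the rank, as the statement requires.

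\textbf{Step 1: from Gap-ETH to a gap NCP instance.} Start from a Gap-3-SAT instance on $n$ variables with $O(n)$ clauses and soundness $s<1$. If necessary, first apply a standard sparsification and degree-reduction step so that every variable occurs in some clause. Then apply Theorem~\ref{imp:ncp_original} with $q=2$, $k=3$ and $c=1$. This produces a $\gamma$-NCP instance with
\[
\gamma=\frac{1-s/2-1/8}{1-1/2-1/8}>1 .
\]
The instance has rank exactly $n$ by Observation~\ref{lem:fullrank}, and its block length and threshold are both $\Theta(n)$.

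\textbf{Step 2: from gap NCP to gap MDP.} Embed the NCP instance into a minimum-distance instance using the locally dense code gadget of \cite{SV19}, built from the codes of \cite{ABV01}. The gadget has rank, block length and radius all $\Theta(n)$. It must have exponentially many codewords within radius $r^\dagger$ of a target $s$, together with a minimum distance exceeding $r^\dagger$ by a constant factor. Concretely, form the code generated by a direct sum of a constant number of copies of the NCP code with the gadget, and adjoin the target through a single extra generator $(y,s)$. Then the analysis goes as follows.
\begin{itemize}
\item On YES instances, a close codeword combined with a dense-gadget codeword gives a short vector.
\item On NO instances, a short codeword either avoids the target generator, and is then long because of the gadget's minimum distance and the rank of the NCP code, or uses the target generator, and is then long because $y$ is far from the code.
\end{itemize}
To ensure that the YES witness lies in the sublattice selected by the gadget's linear map, use a non-uniform advice string, namely a fixed good gadget and a fixed good projection. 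This is where non-uniformity enters, and why the hypothesis is non-uniform Gap-ETH.

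\textbf{Step 3: parameter bookkeeping.} The resulting $\gamma_2$-MDP instance must still have rank $\Theta(n)$. This holds because only a constant number of copies of the NCP code is used, and the gadget has rank $\Theta(n)$. Consequently, a $2^{o(n)}$ algorithm for $\gamma_2$-MDP would give a $2^{o(n)}$ non-uniform algorithm for Gap-3-SAT, contradicting non-uniform Gap-ETH.

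\textbf{Main obstacle.} The hardest step is the tensoring/gap amplification needed to make the ratio of the gadget's distance to its radius beat the NCP gap; the constant $\gamma_2$ arises from this balance. For the statement as used here, however, it suffices to cite \cite{SV19} for Step 2. Within this paper, the only fact that needs re-verification is the rank bound $\Theta(n)$, which is exactly Step 3.
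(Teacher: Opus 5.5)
Your overall plan matches how the paper handles this statement. The theorem is imported from \cite{SV19}, and the only in-paper verification is the parameter tracking in the appendix proof of Observation~\ref{prop:linprofile}, whose Step~2 is your Step~1: the same $\gamma=(7-4s)/3$, and rank exactly $n$ by Observation~\ref{lem:fullrank} once unused variables are deleted.

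The gap is in Step~3, the one step you say needs re-verification, because you argue it from a misdescribed construction. The reduction of \cite[Thm.~5.2, Cor.~5.3]{SV19} is not a direct sum of copies of the NCP code with the gadget. It uses a single copy of the NCP generator, composed with the linear map $T\colon\mathbb{F}_2^{n^\dagger}\to\mathbb{F}_2^{n}$ and stacked above $k_{\mathrm{rep}}$ copies of the gadget $C^\dagger$. All blocks are driven by the same message $z^\dagger$, plus one target column, so the codewords are $(CTz^\dagger+wy,\,C^\dagger z^\dagger+wt^\dagger,\dots,C^\dagger z^\dagger+wt^\dagger)$ with $w\in\{0,1\}$.

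A literal direct sum would be unsound. It contains $(c,0,\dots,0)$ for every codeword $c$ of the NCP code. The Hadamard-block code of Theorem~\ref{imp:ncp_original} has codewords of weight $O(1)$: message $e_j$ has weight $4$ for each clause containing $x_j$, hence $O(1)$ on bounded-occurrence instances. So NO instances would have constant distance. In the correct construction, the $w=0$ codewords are long only because $z^\dagger\neq 0$ forces weight at least $\mathrm{dist}(C^\dagger)$ on each gadget copy. The rank of the NCP code plays no role.

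Consequently, ``only a constant number of copies of the NCP code is used'' is not why the rank is $\Theta(n)$. The fact that actually carries the bound, that the gadget has rank $\Theta(n)$, you only assert. The paper's argument runs as follows:
\begin{itemize}
\item The output rank lies in $\{n^\dagger,n^\dagger+1\}$. It is at most $n^\dagger+1$ since there are $n^\dagger+1$ generator columns. It is at least $n^\dagger$ because $z^\dagger\mapsto C^\dagger z^\dagger$ is injective on the $w=0$ subcode.
\item Trivially $n^\dagger\le m^\dagger$.
\item Conversely $n^\dagger\ge\varepsilon^\dagger m^\dagger$, because the $M\ge 2^{\varepsilon^\dagger m^\dagger}$ dense messages $z^\dagger$ are distinct elements of $\mathbb{F}_2^{n^\dagger}$.
\end{itemize}
With $m^\dagger=\lfloor(1+1/\varepsilon^\dagger)n\rfloor$, as chosen in \cite[Cor.~5.10]{SV19}, this gives $n^\dagger=\Theta(n)$.

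Two smaller corrections. First, the constant gap $\gamma_2$ comes from repeating the gadget $k_{\mathrm{rep}}=\lceil 2t_{\mathrm{NCP}}/((\gamma^\dagger-1)r^\dagger)\rceil$ times, not from tensoring. Second, non-uniformity is needed mainly because the Ashikhmin--Barg--Vl\u{a}du\c{t} codes underlying the gadget have no known efficient construction.
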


The reduction of \cite{SV19} establishing Theorem~\ref{imp:gapmdp} is itself non-uniform: it relies on the existence of the kissing-number codes of Ashikhmin--Barg--Vl\u{a}du\c{t} \cite{ABV01}, for which no efficient construction is known \cite[\S5.2]{SV19}. Hence,  the \emph{non-uniform} form of \textup{Gap-ETH} is required.

\section{Upgrading \textup{\cite{SV19}}}
\label{sec:worked}

 \subsection{additional $\mathrm{poly}(m)$ factor}
Our reductions use the classical hardness results of \cite{SV19} as a black box, but not in the exact form in which they are stated.
Here we state the two strengthenings we need.
Both are established by carefully tracking parameters through the reductions of \cite{SV19}. 
The proofs can be found in the appendix, where we state the changes and the parameter tracking required for the reader's convenience.

The first concerns a polynomial factor. \cite{SV19} state their fine-grained lower bound for exact \MDP in the form ``no $2^{(1-\eps)n}$-time algorithm'', with no polynomial factor in the block length attached.
Our reduction preserves the rank exactly but enlarges the block length polynomially, so the composition produces an algorithm carrying such a factor, and we need a form of the classical hardness that rules it out.

\begin{theorem}
\label{thm:strong}
Let $\varepsilon>0$ be a constant. 
Then, there is no algorithm solving $\mathrm{MDP}_{n',m'}$ in time $2^{(1-\varepsilon)\,n'}\cdot\mathrm{poly}(m')$, in either of the following senses:
\begin{enumerate}
   \item[(a)\label{item:strong_a}]  no  such \emph{deterministic} algorithm, unless \textup{SETH} is false;
   \item[(b)] \label{item:strong_b} no such \emph{randomized} algorithm, unless randomized \textup{SETH} is false.
\end{enumerate}
\end{theorem}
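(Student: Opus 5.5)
We argue by contradiction and compose the hypothetical algorithm with the chain $k$-SAT $\to$ NCP $\to$ MDP of \cite{SV19}, tracking the polynomial factor in the block length explicitly. Suppose $\mathcal{A}$ solves $\mathrm{MDP}_{n',m'}$ in time $2^{(1-\varepsilon)n'}\cdot\mathrm{poly}(m')$. Fix a small $\varepsilon'\in(0,\tfrac12)$, to be chosen in terms of $\varepsilon$, and let $k$ be the SETH constant for a target exponent $1-\delta$, with $\delta>0$ depending on $\varepsilon$. Given a $k$-SAT formula on $n$ variables and $m$ clauses, we may first sparsify and discard variables that occur in no clause. The sparsification lemma lets us assume $m=O_{k,\delta}(n)$ at a $2^{\delta n/4}$ cost in the number of instances. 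Corollary~\ref{imp:ncp} then produces an NCP instance of block length $O(m)=O(n)$, and by Observation~\ref{lem:fullrank} its rank is exactly $n$.

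For part (b) we apply the randomized reduction of Theorem~\ref{imp:rs}. It yields an exact MDP instance of rank $n'\le(1+\varepsilon')n$ and block length $m'=\mathrm{poly}_{\varepsilon'}(m)$. By Observation~\ref{lem:blocklen}, $m'=n^{\Theta(1/\varepsilon')}$, so $\mathrm{poly}(m')=2^{O(\log n/\varepsilon')}=2^{o(n)}$ for fixed $\varepsilon'$. Running $\mathcal{A}$ therefore costs
\[
2^{(1-\varepsilon)(1+\varepsilon')n}\cdot 2^{o(n)}\cdot\mathrm{poly}(n).
\]
We choose $\varepsilon'$ with $(1-\varepsilon)(1+\varepsilon')<1-\varepsilon/2$ and then set $\delta<\varepsilon/2$, which gives a randomized $2^{(1-\delta)n}$ algorithm for $k$-SAT and contradicts randomized SETH. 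The error probability of the reduction composes with that of $\mathcal{A}$ and is amplified by repetition in the usual way.

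For part (a) we use the deterministic Turing reduction of Theorem~\ref{imp:turing}. It takes time $2^{3(1+\varepsilon')n/4}\mathrm{poly}$ and outputs $2^{3n/4}$ instances, each of rank $n'\le(1+\varepsilon')n/4+2$ and block length $\mathrm{poly}_{\varepsilon'}(m)$, which is again $2^{o(n)}$ by Observation~\ref{lem:blocklen}. Running $\mathcal{A}$ on every instance costs
\[
2^{3n/4}\cdot 2^{(1-\varepsilon)(1+\varepsilon')n/4+O(1)}\cdot 2^{o(n)}=2^{\left(1-\frac{\varepsilon}{4}+\frac{(1-\varepsilon)\varepsilon'}{4}\right)n+o(n)},
\]
in addition to the construction cost $2^{3(1+\varepsilon')n/4+o(n)}$. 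Taking $\varepsilon'<\min\{\varepsilon/2,\,1/4\}$ makes both exponents at most $1-\varepsilon/8$. The first is below $1-\varepsilon/8$ because $(1-\varepsilon)\varepsilon'/4<\varepsilon/8$. The second is below it because $3(1+\varepsilon')/4<15/16$, which suffices after also requiring $\varepsilon\le 1/2$ (shrinking $\varepsilon$ only weakens the claim). This gives a deterministic $2^{(1-\delta)n}$ $k$-SAT algorithm for suitable $\delta$, contradicting SETH.

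The main obstacle is the polynomial factor. The whole argument hinges on $\mathrm{poly}(m')$ being $2^{o(n)}$ rather than merely polynomial in the original input size. This is where the block length in SV19's theorem, which is left as an unspecified $\mathrm{poly}(m)$, must be controlled, and it requires two things. First, the input to the gadget step must have $m=\mathrm{poly}(n)$, which sparsification or the full-rank, $O(m)$-length NCP guarantees. Second, the gadget dimension must satisfy $m'=n^{\Theta(1/\varepsilon')}$ (Observation~\ref{lem:blocklen}). A secondary check is that the rank parameter $n'$ bounded in SV19 is genuinely the rank $\dim C$ appearing in $\mathcal{A}$'s running time. The full-rank observation ensures this matches the parameter used in the statement.
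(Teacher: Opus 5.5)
Your proposal is correct and follows the paper's proof essentially step for step. It uses the same chain through Corollary~\ref{imp:ncp} and Theorem~\ref{imp:rs} (resp.\ Theorem~\ref{imp:turing}), invokes Observation~\ref{lem:blocklen} to absorb $\mathrm{poly}(m')$ into $2^{o(n)}$, and chooses $\varepsilon'$ small in terms of $\varepsilon$. Your one addition, sparsification, is unnecessary: after removing duplicate clauses, $m\le(2n)^k$, so $m=\mathrm{poly}(n)$ already. If you keep it, you must shrink $\delta$ to pay for its $2^{\delta n/4}$ overhead, e.g.\ $\delta<2\varepsilon/5$ rather than $\delta<\varepsilon/2$ in part (b).
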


\subsection{On parameters for the gap instances}
\label{sec:linprofile}

The second concerns the gap instances.
\cite{SV19} bound the ambient dimension of their hard $\gamma_2$-\MDP instances only by an unspecified $\poly_{\gamma,\eps}(m)$, which is too loose for our use: on these instances, every parameter is in fact linear in the number of Gap-3-SAT variables.
\begin{observation}[parameters of the gap instances]
\label{prop:linprofile}
The hard $\gamma_2$-$\mathrm{MDP}$ instances of Theorem~\ref{imp:gapmdp}
can be taken with
\[
   \text{rank } = \Theta(n), \qquad
   \text{block length } m = \Theta(n), \qquad
    t = \Theta(m),
\]
where $n$ is the number of variables of the underlying
\textup{Gap-3-SAT} instance.
\end{observation}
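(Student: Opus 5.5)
The plan is to trace the non-uniform Gap-ETH chain behind Theorem~\ref{imp:gapmdp} and keep every parameter linear in $n$. The chain is Gap-3-SAT $\to$ $\gamma$-NCP (Theorem~\ref{imp:ncp_original} with $q=2$, $k=3$) $\to$ $\gamma_2$-MDP via the locally dense gadget of \cite[\S5.2]{SV19} built from kissing-number codes \cite{ABV01}.

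\textbf{Step 1 (Gap-3-SAT to NCP).} The Gap-3-SAT instance has $n$ variables and $m_0=O(n)$ clauses, and we may assume every variable occurs in some clause. The NCP instance then has:
\begin{itemize}
\item rank exactly $n$, by Observation~\ref{lem:fullrank};
\item block length $7m_0=\Theta(n)$;
\item threshold $7m_0-4cm_0=\Theta(m_0)$.
\end{itemize}
So all NCP parameters are already $\Theta(n)$.

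\textbf{Step 2 (the gadget).} Instantiate the ABV family. For a chosen rank $n$, it gives a code of block length $\ell=\Theta(n)$, a center $s$, and a radius $r^\dagger$ with $r^\dagger=\Theta(\ell)$. The ball of radius $r^\dagger$ around $s$ contains $2^{\Omega(\ell)}$ codewords, which is exponentially many, and the minimum distance is more than $r^\dagger/\gamma'$. These are the standard linear-rate kissing-number parameters.

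The point not stated in \cite{SV19} is the \emph{lower} bound $r^\dagger=\Omega(\ell)$. I would derive it from the ABV construction itself: the distance is a constant fraction of $\ell$, and $r^\dagger$ is at least a constant times the distance. This is the step I expect to be the main obstacle. It needs a careful reading of \cite{ABV01} to confirm that the relative radius is bounded away from $0$, and not only from above.

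\textbf{Step 3 (the embedding).} The \cite{SV19} MDP instance has the form
\[
\begin{pmatrix} A & y \\ 0 & \ldots \end{pmatrix},
\]
obtained by stacking $a$ copies of the NCP code against $b$ copies of the gadget. The multiplicities $a,b$ are chosen so that $a\cdot t_{\mathrm{NCP}}$ and $b\cdot r^\dagger$ are comparable. Since both $t_{\mathrm{NCP}}$ and $r^\dagger$ are $\Theta(n)$, this forces $a,b=O(1)$. The other ingredients also stay linear:
\begin{itemize}
\item the random linear map (sampling a subcode that hits a dense codeword) keeps rank $\Theta(n)$;
\item the tensoring or concatenation used to amplify $\gamma$ to the constant $\gamma_2$ is applied a constant number of times.
\end{itemize}
Hence the block length is $m=\Theta(n)$ and the rank is $\Theta(n)$. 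The threshold equals $a\,t_{\mathrm{NCP}}+b\,r^\dagger$, or a constant power of it after amplification, which is $\Theta(n)=\Theta(m)$.

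\textbf{Step 4 (verification).} The YES/NO analysis of \cite{SV19} is unchanged, because only the parameter bookkeeping differs. The rank stays $\Theta(n)$ rather than shrinking, since the final rank is at least the rank devoted to the local-density coset, which is $\Omega(n)$. With $n$, $m$ and $t$ all linear, the $2^{o(n)}$ lower bound of Theorem~\ref{imp:gapmdp} transfers to these instances.
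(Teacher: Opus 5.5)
Your outline follows the same route as the paper's proof. You take the linear Gap-3-SAT$\to$NCP step with full rank via Observation~\ref{lem:fullrank}, and an ABV gadget at block length $m^\dagger=\Theta(n)$. You obtain the missing lower bound $r^\dagger=\Omega(n)$ from $r^\dagger=\lfloor \mathrm{dist}(\widehat{C})/\gamma^\dagger\rfloor$ with $\mathrm{dist}(\widehat{C})\ge\delta^\dagger m^\dagger$, and hence a constant number of gadget copies. The rank lower bound then comes from the gadget subcode. That core is correct and is the paper's argument.

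\textbf{The amplification step is wrong and should be deleted, not patched.} In Step~3 you posit a ``tensoring or concatenation used to amplify $\gamma$ to the constant $\gamma_2$''. You then assert that the threshold is ``a constant power of'' $a\,t_{\mathrm{NCP}}+b\,r^\dagger$, ``which is $\Theta(n)$''. But a constant power of a $\Theta(n)$ quantity is $\Theta(n^j)$, not $\Theta(n)$. A single tensoring already squares the block length, the rank and the distance, giving $m=\Theta(n^2)$ and rank $\Theta(n^2)$. On such instances the $2^{o(n)}$ bound is only $2^{o(\sqrt{m})}$, which is exactly what the observation must exclude.

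The resolution is that no amplification occurs in this chain. The constant $\gamma_2>1$ comes directly from the locally dense embedding of \cite[Cor.~5.3]{SV19}. That embedding stacks $k_{\mathrm{rep}}=\lceil 2t_{\mathrm{NCP}}/(\delta r^\dagger)\rceil$ copies of the gadget, with $\delta=\gamma^\dagger-1$. The NCP code appears once, through the random map $T$, not $a$ times. What the lower bound on $r^\dagger$ makes $O(1)$ is this one-sided requirement $k_{\mathrm{rep}}\,\delta r^\dagger\ge 2t_{\mathrm{NCP}}$, not ``comparable'' multiplicities. With it, $m=7m_{\mathrm{SAT}}+k_{\mathrm{rep}}m^\dagger$ and $t=t_{\mathrm{NCP}}+k_{\mathrm{rep}}r^\dagger$ are linear with no further step.

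\textbf{Two smaller corrections.}
\begin{itemize}
\item The gadget condition is $\mathrm{dist}(C^\dagger)\ge\gamma^\dagger r^\dagger$ with $\gamma^\dagger>1$, not ``more than $r^\dagger/\gamma'$''.
\item The rank lower bound in Step~4 should be stated explicitly. It follows from $2^{n^\dagger}\ge M\ge 2^{\varepsilon^\dagger m^\dagger}$, since the locally dense messages are distinct, together with injectivity of $z^\dagger\mapsto C^\dagger z^\dagger$ on the first $n^\dagger$ columns.
\end{itemize}
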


Our reduction requires all three linearities.
The rank determines the number of logical qubits and the block length determines the number of physical qubits, so when both are $\Theta(n)$ the two are proportional to one another.
The threshold determines the size of the gap: when $t=\Theta(m)$, a constant multiplicative classical gap becomes an additive quantum gap linear in the block length.
Neither consequence would survive if the classical instances merely had $m=\poly(n)$ or $t=o(m)$.
One step of the proof goes beyond what \cite{SV19} state: their gadget radius is bounded only from above, and the matching lower bound is what forces the number of gadget copies to be constant.
We include the proof in the appendix.

\section{Compatible graphs}
\label{sec:blockgraphs}

This section introduces two graph-theoretic ingredients required for the reductions of Section~\ref{sec:finegrained}: the \emph{threshold form} (Lemma~\ref{lem:threshold}), which sandwiches $\qdist$ between the classical distance and the level, and \emph{random compatibility with NP-certified failure} (Lemma~\ref{lem:randomcompat}), which supplies the graphs the construction requires. 
Definition~\ref{def:compatible} introduces compatibility, the condition on which every result in this paper rests; Section~\ref{sec:compatdef} explains how it differs from the global condition used in earlier work. Everything here uses only the detection criterion of Lemma~\ref{lem:exact} and elementary first-moment arguments.

Throughout this section, $C'$ is an $[N, k']$ binary linear code with $\lambda := \dist(C')$, presented by a full-rank parity-check matrix $H' \in \F_2^{(N-k') \times N}$, and $\varepsilon_1 > 0$ is a constant small enough for the bounds of Lemma~\ref{lem:randomcompat} to hold; the admissible range is fixed in Section~\ref{sec:finegrained}.

\begin{claim}\label{claim:half_uniformity}
    Let $\Phi_x$ be a map from graphs to $\mathbb{F}_2^n$ as follows: $
\Phi_x(\G) = A_\G x .
$
Then for any $y$, 
\begin{equation*} \label{eq:phi_distribution}
    \Pr_{\G }[\Phi_x(\G) = y]
=
\begin{cases}
0, & \text{if } \langle y, x \rangle \neq 0, \\[6pt]
2^{-(n-1)}, & \text{if } \langle y, x \rangle = 0 .
\end{cases}
\end{equation*}
That is,  for any non-zero $x$, $\Phi_x(\G)$ is uniformly distributed over the orthogonal subspace of $x$.
\end{claim}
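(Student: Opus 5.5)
The plan is to treat the random graph $\G$ as a uniformly random element of the $\F_2$-vector space $\mathcal{A}$ of symmetric $n\times n$ matrices with zero diagonal. This space has dimension $\binom{n}{2}$, with one free bit per unordered pair $\{u,v\}$. For fixed $x\neq 0$, the map $\Phi_x\colon A\mapsto Ax$ is $\F_2$-linear on $\mathcal{A}$. The pushforward of the uniform distribution under a linear map is uniform on the image, with each fibre a coset of $\ker\Phi_x$ of equal size. So it suffices to show that $\mathrm{im}\,\Phi_x = x^\perp$, a subspace of dimension $n-1$. That gives probability $2^{-(n-1)}$ on $x^\perp$ and $0$ off it, which is exactly the displayed formula. (For $x=0$ the statement degenerates, so I will assume $x\neq0$, as in the final sentence of the claim.)

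\emph{Inclusion $\mathrm{im}\,\Phi_x\subseteq x^\perp$.} Compute $\langle Ax,x\rangle = x^\top A x = \sum_{u,v}A_{uv}x_ux_v$. The diagonal terms vanish since $A_{vv}=0$. The off-diagonal terms pair up as $A_{uv}x_ux_v + A_{vu}x_vx_u = 2A_{uv}x_ux_v = 0$ over $\F_2$, by symmetry. Hence $\langle \Phi_x(\G),x\rangle=0$ always, which gives the zero-probability case.

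\emph{Reverse inclusion.} This is the only step that requires a construction. Fix $j$ with $x_j=1$. For a target $y\in x^\perp$, I would try to exhibit an explicit $A\in\mathcal{A}$ with $Ax=y$. The natural candidate is a star centred at $j$: set $A_{jv}=A_{vj}=y_v$ for $v\neq j$, and all other entries $0$. Then for $v\neq j$ we get $(Ax)_v = A_{vj}x_j = y_v$. At the centre, $(Ax)_j=\sum_{v\neq j}y_vx_v$. Since $\langle y,x\rangle=0$ and $x_j=1$, this sum equals $y_jx_j = y_j$. So $Ax=y$ and the reverse inclusion holds.

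The main point to get right is this last identity at coordinate $j$, which is exactly where the constraint $y\perp x$ is used. Everything else is routine linear algebra. An alternative to the construction is a dimension count, $\dim\mathrm{im}\,\Phi_x = \binom n2-\dim\ker\Phi_x$, but that requires computing the kernel. The explicit star construction avoids this, so I would use it.
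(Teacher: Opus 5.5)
Your proposal is correct and is essentially the paper's proof. Both view $\Phi_x$ as a linear map on symmetric zero-diagonal matrices, whose pushforward is uniform on its image; both get $\mathrm{im}\,\Phi_x\subseteq x^\perp$ from $x^\top A x=0$, and both get the reverse inclusion from the same star construction centred at a coordinate with $x_j=1$, using $\langle y,x\rangle=0$ at the centre. Your caveat that $x\neq 0$ is needed is also apt: the displayed formula fails for $x=0$, and the paper's proof tacitly assumes $x\neq0$ when it picks $p$ with $x_p=1$.
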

\begin{proof}
Let $V = \{A \in \mathbb{F}_2^{n\times n} : A \text{ symmetric},\ A_{ii}=0\}$, now $\Phi_x : V \to \mathbb{F}_2^n$ is $\mathbb{F}_2$-linear. Since $\Phi_x$ is linear, its nonempty fibres are cosets of $\ker \Phi_x$ and hence all of the same size. So $\Phi_x(\G)$ is uniform on $\operatorname{im}\Phi_x$ for $\G$ uniform, and it suffices to show $\operatorname{im}\Phi_x = x^\perp$.

\medskip \noindent 
For the inclusion $\operatorname{im}\Phi_x \subseteq x^\perp$, note that for any $A \in V$,
\[
\langle Ax, x\rangle = x^\top\! A x = \sum_{i\neq j} x_i A_{ij} x_j + \sum_i A_{ii}x_i = 0,
\]
since the off-diagonal terms cancel in pairs by symmetry and $A_{ii}=0$.

\medskip \noindent 
For the reverse inclusion, fix $y \in x^\perp$ and choose $p$ with $x_p = 1$. Define $A \in V$ by setting $A_{ip} = A_{pi} = 1$ for every $i \neq p$ with $y_i = 1$, and all other entries zero. For $i \neq p$ we get $(Ax)_i = A_{ip}x_p = y_i$, and
\[
(Ax)_p = \sum_{i \neq p} A_{pi}x_i = \sum_{\substack{i \neq p \\ y_i = 1}} x_i = \langle y,x\rangle + y_p x_p = y_p ,
\]
using $\langle y, x\rangle = 0$ and $x_p = 1$. Hence $Ax = y$, so $\operatorname{im}\Phi_x = x^\perp$, which has size $2^{n-1}$.
\end{proof}

\subsection{Compatibility and diagonality}
\label{sec:compatdef}

\begin{definition}[compatible; diagonal]
\label{def:compatible}
Let $C \subseteq \F_2^{n}$ be a linear code, $G$ a graph on $n$ vertices with adjacency matrix $A$, and $w \ge 1$ an integer (the \emph{level}).
\begin{enumerate}
    \item The graph $G$ is \emph{$(C, w)$-compatible} if for every Pauli $E$ with $\sigma(E) = (a \mid b)$, $a \neq 0$, and $\wtS(a \mid b) < w$,
\[
   \Cl_G(E) \;\notin\; C \setminus \{0\}.
\]
\item 
It is \emph{$w$-diagonal} if $\Gdist(G) \ge w$, i.e.\ (with the graph state distance of Remark~\ref{rem:fact21}) every $a \neq 0$ satisfies $\wtS(a \mid A a) \ge w$.
\end{enumerate}

\end{definition}
Definition~\ref{def:compatible} is extracted from the detection criterion of Lemma~\ref{lem:exact}, but is not a restatement of it: the criterion classifies errors given $(G, C)$, whereas compatibility specifies graphs given $(C, w)$. Three deliberate choices separate the two.

First, only the nonzero-syndrome clause is kept; the zero-syndrome errors are delegated to the code-independent property of diagonality. The probabilistic analysis of Section~\ref{sec:randomcompat} and the derandomization both follow this split exactly.

Second, pure-$Z$ errors ($a = 0$) are excluded. There, the graph drops out of $\Cl_G(E) = b$, and the pure-$Z$ errors at codewords are precisely the intended logicals carrying $\dist(C)$ into $\qdist$ (Remark~\ref{rem:onesided}(b)); forbidding them would be unsatisfiable in the YES case. Their weight is instead controlled by $\dist(C)$ itself, in the proof of Lemma~\ref{lem:threshold}.

Third, the condition is truncated at a level $w$. This is what makes it achievable by a random graph, monotone in $w$, and refutable by an $O(n)$-bit certificate, the three properties required for our proofs.

The same criterion admits other extractions. \cite{KK23} keep the zero-syndrome clause as the global requirement $\Gdist(\G) \ge \tau$, and handle the nonzero-syndrome errors instead by a \emph{weight} argument, showing that $\Cl_G(E)$ is too heavy to be a codeword. 
Their argument requires two things compatibility does not: the graph must be sparse, with any two columns of $A$ sharing at most one coordinate, that is, a $4$-cycle-free graph, and every codeword must be light-weight, which forces the classical code into a vanishing fraction of the block length. Compatibility replaces the weight argument by a probabilistic one, and imposes neither condition.

\begin{remark}[level monotonicity]
\label{rem:monotone}
Both conditions are universally quantified over errors of weight below the level, so they are monotone in $w$: if $G$ is $(C, w)$-compatible (resp.\ $w$-diagonal) and $w' \le w$, then $G$ is $(C, w')$-compatible (resp.\ $w'$-diagonal). Reductions may therefore construct or sample a graph at one level $w^*$ and invoke the properties at any lower level.
\end{remark}

\subsection{The threshold form}
\label{sec:threshold}

The next lemma is the form in which every reduction of Section~\ref{sec:finegrained} uses the CWS machinery. Part~(a) holds for \emph{every} graph, good or bad; this one-sidedness is what the truth-table reductions of Section~\ref{sec:hypD} exploit. Part~(b) is the two-sided bracket, provable conditioned on a good graph. Following Remark~\ref{rem:fact21}, the degenerate class~(3a) of Lemma~\ref{lem:exact} needs no separate treatment, and class~(3b) is excluded through the superset bound by diagonality.

\begin{lemma}[threshold form]
\label{lem:threshold}
Let $C'$ be an $[N, k', \lambda]$ code and let $t, g \ge 0$ be integers, and set $w^* := t + g + 1$.
For $\G \in \mathcal{G}_N$ with adjacency matrix $A_\G$, define $Q_\G := \CWS(\G, C')$.
\begin{enumerate}[(a)]
\item For \emph{every} $\G$: $Q_\G$ is an $[[N, k']]$ stabilizer code, a presentation of which is computable in time $O(N^3)$ from $(A_\G, H')$, and
\[
   \qdist(Q_\G) \;\le\; \lambda,
\]
witnessed by $E^* = Z(c)$ at a minimum-weight $c \in C'$. Explicitly, $E^*$ is of \emph{pure $Z$-type}, is undetectable and logically nontrivial for every $\G$, and has $\wtS(E^*) = \lambda$.
\item If $\G$ is $(C', w^*)$-compatible and $w^*$-diagonal, then
\begin{enumerate}[(b1)]
\item for $\lambda \le t$: $\qdist(Q_\G) = \lambda$;
\item for $\lambda > t + g$: $\qdist(Q_\G) \ge w^* = t + g + 1$.
\end{enumerate}
\end{enumerate}
\end{lemma}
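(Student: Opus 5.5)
The plan is to read part~(a) off the structural results already in hand, and to obtain part~(b) from a lower bound $\qdist(Q_\G)\ge\min(w^*,\lambda)$, proved by a case split over the formula~\eqref{eq:qdist-formula} of Lemma~\ref{lem:exact}.

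For part~(a), Proposition~\ref{prop:cws}(c),(d) gives directly that $Q_\G$ is an $[[N,k']]$ stabilizer code with a presentation computable in $O(N^3)$ from $(A_\G,H')$. Here the hypothesis $k'\ge1$ is implicit, since $\lambda$ is finite. Next take $c\in C'$ of weight $\lambda$ and $E^*=Z(c)$, so that $\sigma(E^*)=(0\mid c)$ and $\Cl_\G(E^*)=c\in C'\setminus\{0\}$. Class~(2) of Lemma~\ref{lem:exact} then makes $E^*$ undetectable and logically nontrivial for every graph. This is exactly Remark~\ref{rem:onesided}(b), and it gives $\qdist(Q_\G)\le\wtS(0\mid c)=\lambda$.

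For part~(b), let $(a\mid b)$ attain the minimum in~\eqref{eq:qdist-formula}. There are three cases.
\begin{enumerate}[(i)]
\item Suppose $a=0$. Class~(3b) cannot occur, because $a\notin C'^\perp$ forces $a\neq0$. So the error lies in class~(2) with $\Cl_\G(E)=b\in C'\setminus\{0\}$, and its weight is $\wtH(b)\ge\lambda$.
\item Suppose $a\neq0$ and the error is in class~(2). If $\wtS(a\mid b)<w^*$, then $(C',w^*)$-compatibility forbids $\Cl_\G(E)\in C'\setminus\{0\}$, which is a contradiction. Hence the weight is at least $w^*$.
\item Suppose the error is in class~(3b). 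Then $a\neq0$ and $b=A_\G a$, so by $w^*$-diagonality the weight is at least $\Gdist(\G)\ge w^*$. This uses the superset bound of Remark~\ref{rem:fact21}.
\end{enumerate}
In every case $\qdist(Q_\G)\ge\min(w^*,\lambda)$.

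Now combine this with (a). If $\lambda\le t<w^*$, the lower bound is $\lambda$ and the upper bound is $\lambda$, which gives (b1). If $\lambda>t+g$, then $\lambda\ge w^*$, so the minimum is $w^*$, which gives (b2).

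There is no serious obstacle. The only point requiring care is case~(i): it must be checked that class~(3b) is genuinely excluded when $a=0$, so that pure-$Z$ errors are bounded by $\lambda$ and not by the graph properties. This is precisely why compatibility was defined only for $a\neq0$.
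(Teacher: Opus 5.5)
Your proposal is correct and follows the paper's proof essentially step for step. Part (a) comes from Proposition~\ref{prop:cws} and Remark~\ref{rem:onesided}(b), and part (b) uses the same three-way split over classes (2) and (3b): diagonality for class (3b), compatibility for class (2) with $a\neq0$, and $\lambda$ for class (2) with $a=0$, giving $\qdist(Q_\G)\ge\min(w^*,\lambda)$. Your remarks that $k'\ge1$ is implicit and that class (3b) cannot contain $a=0$ are small clarifications the paper leaves unstated.
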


\begin{proof}
(a) Dimensions and computability are Proposition~\ref{prop:cws} applied to the graph $\G$ on $N$ vertices and the $[N, k']$ code $C'$; the dimension is graph-independent. The upper bound and the witness are Remark~\ref{rem:onesided}(b): for a minimum-weight $c \in C'$, the pure-$Z$ Pauli $E^* = Z(c)$ has $\sigma(E^*) = (0 \mid c)$ and $\Cl_{\G}(E^*) = c \in C' \setminus \{0\}$, so by class~(2) of Lemma~\ref{lem:exact} it is undetectable and logically nontrivial for every graph, with $\wtS(E^*) = \wtH(c) = \lambda$.

\smallskip\noindent 
(b) By eq.~\eqref{eq:qdist-formula}, $\qdist(Q_\G)$ is the minimum weight over classes (2) and (3b) of Lemma~\ref{lem:exact}. We bound each class from below on a \emph{good} graph. Let $e = (a \mid b)$.

\smallskip
\noindent\emph{Class (3b): $a \neq 0$ and $b = A_\G a$.} Such $e$ has $\wtS(e) = \wtS(a \mid A_\G a) \ge \Gdist(\G) \ge w^*$ by $w^*$-diagonality; this is the superset bound of Remark~\ref{rem:fact21}.

\smallskip
\noindent\emph{Class (2) with $a \neq 0$.} If $\wtS(e) < w^*$, then $(C', w^*)$-compatibility gives $\Cl_{\G}(E) \notin C' \setminus \{0\}$, so $e$ is not in class (2), a contradiction. Hence $\wtS(e) \ge w^*$.

\smallskip
\noindent\emph{Class (2) with $a = 0$.} Then $\Cl_{\G}(E) = b \in C' \setminus \{0\}$, so $\wtS(e) = \wtH(b) \ge \lambda$.

\smallskip
\noindent Combining the three cases, $\qdist(Q_\G) \ge \min(w^*, \lambda)$.

\smallskip
\noindent If $\lambda > t + g$, then $\lambda \ge t + g + 1 = w^*$ since $\lambda$ is an integer, so $\qdist(Q_\G) \ge \min(w^*, \lambda) = w^*$.

\smallskip
\noindent If $\lambda \le t$, then $\lambda < w^*$. 
So $\qdist(Q_\G) \ge \min(w^*, \lambda) = \lambda$; combining with part~(a) gives $\qdist(Q_\G) = \lambda$.
\end{proof}

The explicit clause in~(a) that the YES-side witness is of pure $Z$-type is consumed by the conversion to CSS codes. 
The doubling map of \cite{BTL10} is lossless on pure-$Z$ logicals, which is what keeps the additive gap after conversion. The clause is also graph-independent, which is what makes the one-sided selection in Section~\ref{sec:hypD} valid on bad candidates.

\subsection{Random graphs are good}
\label{sec:randomcompat}

\begin{lemma}[random compatibility, with NP-certified failure]
\label{lem:randomcompat}
Let $C'$ be an $[N, k']$ code with $k' \le \frac{N}{4}$, and let $w^* \le \varepsilon_1 N$ for a small enough constant $\varepsilon_1 > 0$. For $\G$ uniform in $\mathcal{G}_N$,
\[
   \Pr\big[\, \G \text{ is not } (C', w^*)\text{-compatible or not } w^*\text{-diagonal} \,\big] \;\le\; 2^{-\delta(\varepsilon_1) N},
\]
for some constant $\delta(\varepsilon_1) > 0$. Moreover, failure is witnessed by a vector $(a \mid b) \in \F_2^{2N}$ of $O(N)$ bits, verifiable in time $O(N^2)$ given $A_\G$ and $H'$.
\end{lemma}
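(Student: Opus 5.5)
The plan is a first-moment argument built on Claim~\ref{claim:half_uniformity}, followed by a union bound over all low-weight symplectic vectors. Failure of either property is an event of the form ``some $(a \mid b)$ with $a \neq 0$ and $\wtS(a\mid b) < w^*$ satisfies a linear condition on $A_\G a$.'' This single form covers both properties:
\begin{itemize}
\item[] \emph{Non-compatibility:} $b \oplus A_\G a \in C' \setminus \{0\}$.
\item[] \emph{Non-diagonality:} $A_\G a = b$ with $a \neq 0$ and $\wtS(a \mid b) < w^*$.
\end{itemize}
So I will bound each event for a fixed $(a \mid b)$, then sum over the at most $\sum_{j < w^*} \binom{N}{j} 3^j$ vectors of symplectic weight below $w^*$.

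For the per-vector bounds, fix $a \neq 0$. By Claim~\ref{claim:half_uniformity}, $A_\G a$ is uniform on $a^\perp$, a set of size $2^{N-1}$. Hence $b \oplus A_\G a$ is uniform on the coset $b \oplus a^\perp$, which also has $2^{N-1}$ elements. This gives
\[
\Pr_\G\bigl[b \oplus A_\G a \in C'\bigr] \le |C'| / 2^{N-1} = 2^{k' - N + 1} \le 2^{-3N/4 + 1},
\]
using $k' \le N/4$. Similarly, $\Pr_\G[A_\G a = b] \le 2^{-(N-1)}$.

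The union bound is then the key computation. With $w^* \le \varepsilon_1 N$ and $\varepsilon_1 \le 1/2$, the count of low-weight vectors is at most $w^* \binom{N}{\varepsilon_1 N} 3^{\varepsilon_1 N} \le N \cdot 2^{(h(\varepsilon_1) + \varepsilon_1 \log 3) N}$, where $h$ is the binary entropy function. The total failure probability is therefore at most
\[
2 N \cdot 2^{(h(\varepsilon_1) + \varepsilon_1 \log 3 - 3/4) N + 1}.
\]
Since $h(\varepsilon_1) + \varepsilon_1 \log 3 \to 0$ as $\varepsilon_1 \to 0$, choosing $\varepsilon_1$ so that this sum is below $3/4$ gives a bound of the form $2^{-\delta(\varepsilon_1) N}$. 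For example, take $\delta = \tfrac12\bigl(3/4 - h(\varepsilon_1) - \varepsilon_1 \log 3\bigr)$; the polynomial factor $2N$ is absorbed for $N$ large, and small $N$ can be handled by shrinking $\delta$ or treating them as trivial cases. This step is the main obstacle, but it is only a matter of tuning the constants. What makes it work is that the uniformity is over a set of size $2^{N-1}$ regardless of the weight of $a$, so the bound does not depend on sparsity as in \cite{KK23}.

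The certificate is the offending vector $(a \mid b) \in \F_2^{2N}$. Given $A_\G$ and $H'$, the verifier proceeds as follows:
\begin{itemize}
\item[] It checks $a \neq 0$ and $\wtS(a \mid b) < w^*$, in $O(N)$ time.
\item[] It computes $A_\G a$, in $O(N^2)$ time.
\item[] It then checks one of two conditions. For non-diagonality it tests $A_\G a = b$. For non-compatibility it sets $v = b \oplus A_\G a$ and tests $v \neq 0$ and $H' v = 0$, which takes $O(N^2)$ time since $H'$ has at most $N$ rows.
\end{itemize}
Correctness of this verifier follows directly from Definition~\ref{def:compatible}.
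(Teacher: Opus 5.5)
Your proposal is correct and follows the paper's proof. Claim~\ref{claim:half_uniformity} gives the per-vector bound $2^{k'-N+1}\le 2^{-3N/4+1}$ on the coset $b\oplus a^\perp$, a union bound over the at most $\sum_{j<w^*}\binom{N}{j}3^j$ low-weight vectors finishes the probability estimate, and the offending $(a \mid b)$ serves as the $O(N^2)$-time certificate. The only difference is cosmetic: you fold non-diagonality into the same union bound via the exact event $A_\G a = b$ (probability at most $2^{-(N-1)}$ per pair), whereas the paper bounds $\Pr[\wtH(A_\G a)<w^*]$ for each $a$ and then unions over $a$, which gives the same count up to constants in the exponent; your verifier also spells out the diagonality test, which the paper leaves implicit.
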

\begin{proof}
\emph{Compatibility.} Fix $e = (a \mid b)$ with $a \neq 0$ and $\wtS(e) < w^*$. By Claim~\ref{claim:half_uniformity}, $A_\G a$ is uniform on $a^\perp$, so $\Cl_\G(E) = b \oplus A_\G a$ is uniform on the coset $b \oplus a^\perp$, which has size $2^{N-1}$. Hence
\[
   \Pr\big[\Cl_\G(E) \in C'\big] \;=\; \frac{\big|C' \cap (b \oplus a^\perp)\big|}{2^{N-1}} \;\le\; 2^{k'-N+1} \;\le\; 2^{-\frac{3N}{4}+1},
\]
using $k' \le \frac{N}{4}$. The number of $e \in \F_2^{2N}$ with $\wtS(e) < w^* \le \varepsilon_1 N$ is at most $\sum_{w<w^*}\binom{N}{w}3^w$, so by a union bound
\begin{align*}
    \Pr\big[\G \text{ is not } (C', w^*)\text{-compatible}\big]
    & \le 2^{-\frac{3N}{4}+1} \sum_{w<w^*}\binom{N}{w}3^{w} \\
    & \le 2^{-\frac{3N}{4}+1}\, w^*\left(\frac{3eN}{w^*}\right)^{w^*} \\
    & \le 2^{-\frac{3N}{4}+1} \cdot \varepsilon_1 N \left(\frac{3e}{\varepsilon_1}\right)^{\varepsilon_1 N} \\
    & \le 2^{-\delta N} \qquad \text{for some constant } \delta > 0 .
\end{align*}

\smallskip\noindent
\emph{Diagonality.} Suppose $a \neq 0$ satisfies $\wtS(a \mid A_\G a) < w^*$; then in particular $\wtH(a) < w^*$ and $\wtH(A_\G a) < w^*$. By Claim~\ref{claim:half_uniformity}, for a fixed $a \neq 0$,
\[
   \Pr\big[\wtH(A_\G a) < w^*\big] \;\le\; 2^{-(N-1)}\sum_{i<w^*}\binom{N}{i} \;\le\; w^* 2^{-(N-1)}\left(\frac{eN}{w^*}\right)^{w^*}.
\]
A union over the choices of $a$ with $\wtH(a) < w^*$ contributes another factor $w^*\left(\frac{eN}{w^*}\right)^{w^*}$, giving
\[
   \Pr\big[\G \text{ is not } w^*\text{-diagonal}\big] \;\le\; {w^*}^2\, 2^{-(N-1)}\left(\frac{eN}{w^*}\right)^{2w^*} \;\le\; 2^{-\delta' N}
\]
for some constant $\delta' > 0$.

\smallskip\noindent
\emph{Total.} $2^{-\delta N} + 2^{-\delta' N} \le 2^{-\delta(\varepsilon_1) N}$ for some constant $\delta(\varepsilon_1) > 0$.

\smallskip\noindent
\emph{Certificates.} By Definition~\ref{def:compatible}, failure of either property is the existence of the displayed vector. The verifier computes $A_\G a$ in time $O(N^2)$, forms $\Cl_\G(E) = b \oplus A_\G a$, tests $\wtS(\cdot) < w^*$, and tests membership in $C' \setminus \{0\}$ by checking $H'\Cl_\G(E) = 0$ and $\Cl_\G(E) \neq 0$, all in time $O(N^2)$.
\end{proof}

\section{Fine-grained hardness of the quantum minimum distance}
\label{sec:finegrained}
This section assembles the machinery of earlier sections into the fine-grained lower bounds. Throughout, $\varepsilon_1 \in (0,1)$ is an \emph{admissible} parameter: small enough that the failure exponent $\delta(\varepsilon_1) > 0$ of Lemma~\ref{lem:randomcompat} exists, i.e.\ $c_1(\varepsilon_1) < \frac{3}{4}$, where
\[
   c_1(\varepsilon_1) \;:=\; \varepsilon_1\log_2\left(\frac{3e}{\varepsilon_1}\right)
\]
is the exponent governing the number of low-weight witnesses at level $w=\varepsilon_1 N$. All constants introduced below are functions of $\varepsilon_1$ alone (and never of the instance), and the hypothesis constants (for e.g. $\varepsilon$ in SETH and $\beta$ in Conjecture~\ref{conj:BL}) are quantified before $\varepsilon_1$ is chosen.

\subsection{The reduction template}
\label{sec:template}

\begin{definition}[padded code]
\label{def:pad}
For any $[m, k]$ code $C$ (given by a parity check matrix $H$) and $N \ge m$, the code $\mathrm{pad}(C, N) := \{ (c, 0^{N-m}) : c \in C \}$ is an $[N, k]$ code with the same minimum distance, and a parity-check matrix for it is computable in time $O(N^2)$. Given a level $w$,
\[
   N \;:=\; \max\!\Bigg( 4k,\; m,\; \bigg\lceil\frac{w}{\varepsilon_1}\bigg\rceil \Bigg)
\]
yields $w \le \varepsilon_1 N$ and $k \le \frac{N}{4}$, with $N = O_{\varepsilon_1}(w + m)$.
\end{definition}

\begin{lemma}[randomized classical code to stabilizer code]
\label{lem:template}
Let $C \subseteq \F_2^m$ be a linear code of rank $\kappa$, and let $t, g \ge 0$ be integers. Set $w^* := t + g + 1$ and let $N$ be as in Definition~\ref{def:pad} with $w = w^*$. There is a randomized $\mathrm{poly}(N)$-time map producing an $[[N, \kappa]]$ stabilizer code $Q$ such that:
\begin{enumerate}[(i)]
\item \emph{(YES side)} if $\dist(C) \le t$ then $\qdist(Q) \le t$, with probability $1$ (that is, for every choice of the randomness).
\item \emph{(NO side)} if $\dist(C) > t + g$ then $\qdist(Q) > t + g$, with probability at least $1 - 2^{-\delta(\varepsilon_1)N}$.
\end{enumerate}
\end{lemma}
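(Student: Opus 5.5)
The map has three steps: pad, sample, build. Given the parity-check matrix $H$ of $C$ and the integers $t,g$, first compute $w^*=t+g+1$ and $N$ as in Definition~\ref{def:pad}, and form $C':=\mathrm{pad}(C,N)$. By Definition~\ref{def:pad}, $C'$ is an $[N,\kappa]$ code with $\dist(C')=\dist(C)=:\lambda$, a parity-check matrix $H'$ for it is computable in $O(N^2)$ time, and $k'=\kappa\le N/4$ and $w^*\le\varepsilon_1 N$. Next, sample a uniformly random graph $\G\in\mathcal{G}_N$, which takes $\binom{N}{2}$ random bits. Finally, output the stabilizer presentation of $Q:=Q_\G=\CWS(\G,C')$, computed from $(A_\G,H')$ in $O(N^3)$ time by Lemma~\ref{lem:threshold}(a) (equivalently Proposition~\ref{prop:cws}(d)). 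That same statement says $Q$ is an $[[N,\kappa]]$ code for every $\G$, because the number of logical qubits equals $\dim C'$, which is graph-independent. Since $N=O_{\varepsilon_1}(w^*+m)$ is polynomial in the input size (with $t,g$ in unary, or at least $t+g\le$ some polynomial, which is implicit in the statement's $\mathrm{poly}(N)$ running time), the whole map runs in $\mathrm{poly}(N)$ time.

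For the YES side, Lemma~\ref{lem:threshold}(a) applies to every graph and gives $\qdist(Q)\le\lambda$. If $\dist(C)\le t$, then $\qdist(Q)\le\lambda\le t$ holds deterministically, whatever $\G$ was sampled, so this side holds with probability $1$.

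For the NO side, suppose $\lambda>t+g$. Call $\G$ good if it is $(C',w^*)$-compatible and $w^*$-diagonal. Lemma~\ref{lem:randomcompat} applies because its hypotheses $k'\le N/4$ and $w^*\le\varepsilon_1N$ were arranged by the padding, so $\Pr[\G\text{ good}]\ge 1-2^{-\delta(\varepsilon_1)N}$. On a good graph, Lemma~\ref{lem:threshold}(b2) gives $\qdist(Q)\ge w^*=t+g+1>t+g$, which is the claim.

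There is no real obstacle left, since the probabilistic work sits in Lemma~\ref{lem:randomcompat}. The only points to check are bookkeeping: padding must preserve both rank and distance exactly, and Lemma~\ref{lem:randomcompat} must be invoked with the padded code $C'$ rather than $C$, since it is $C'$ that satisfies its hypotheses. One degenerate case needs a note: if $\kappa=0$ the stabilizer code has no logical qubits, so we assume $\kappa\ge1$, as in Definition~\ref{def:qdist}.
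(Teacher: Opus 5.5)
Your proposal is correct and follows the paper's proof essentially verbatim. You pad to $C'$, sample a uniform graph, and output $\CWS(\G,C')$. The YES side is Lemma~\ref{lem:threshold}(a) for every graph. The NO side is Lemma~\ref{lem:randomcompat}, applied to the padded code $C'$, combined with Lemma~\ref{lem:threshold}(b2). Your extra remarks on running time and the $\kappa\ge 1$ convention are harmless bookkeeping that the paper leaves implicit.
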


\begin{proof}
Pad $C$ to get $C'$ of length $N$ (Definition~\ref{def:pad}; rank and distance preserved), sample $\G$ uniformly in $\mathcal{G}_N$, and output $Q := Q_\G = \CWS(\G, C')$ with the presentation of Lemma~\ref{lem:threshold}(a).
Item (i) is Lemma~\ref{lem:threshold}(a) verbatim. For (ii), Lemma~\ref{lem:randomcompat} (applicable since $w^* \le \varepsilon_1 N$ and $\kappa \le \frac{N}{4}$) gives that $G$ is $(C', w^*)$-compatible and $w^*$-diagonal except with probability $2^{-\delta(\varepsilon_1) N}$, and on that event Lemma~\ref{lem:threshold}(b2) gives $\qdist(Q_G) \ge w^* = t+g+1 >  t + g$.
\end{proof}

The reduction errs only on the NO side, and only through the graph: the YES conclusion holds for every sampled $\G$.
This one-sidedness is what the deterministic selection rule of derandomization exploits, and hence, composing with a deterministic classical base yields $\mathsf{coRP}$-type rather than $\mathsf{BPP}$-type consequence there.

\subsection*{Exact hardness under SETH}
\label{sec:fgexact}

\begin{theorem}[SETH-hardness of exact \QMDP]
\label{thm:fg}
Assume randomized SETH. For every $\varepsilon > 0$, there is no randomized algorithm that, given a stabilizer code $Q$ on $N$ qubits with $\kappa$ logical qubits and an integer $T$, decides $\qdist(Q) \le T$ in time $2^{(1-\varepsilon)\kappa} \cdot \mathrm{poly}(N)$. 
\end{theorem}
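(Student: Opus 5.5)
We argue by contradiction. Suppose that for some $\varepsilon>0$ there is a randomized algorithm $\mathcal{A}$ that decides $\qdist(Q)\le T$ in time $2^{(1-\varepsilon)\kappa}\poly(N)$. We compose $\mathcal{A}$ with the randomized template of Lemma~\ref{lem:template} at additive gap $g=0$. This gives a randomized algorithm for exact $\mathrm{MDP}_{n',m'}$ running in time $2^{(1-\varepsilon)n'}\poly(m')$, which contradicts Theorem~\ref{thm:strong}(b) under randomized SETH.

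\textbf{The reduction.} Take an $\mathrm{MDP}_{n',m'}$ instance $(H,t)$, with code $C$ of rank $n'$ and length $m'$. Set $g=0$, so $w^*=t+1$, and pad as in Definition~\ref{def:pad}. This gives $N=\max(4n',m',\lceil (t+1)/\varepsilon_1\rceil)$. Since $t\le m'$, we have $N=O_{\varepsilon_1}(m')$. We sample $\G$ uniformly, output $Q=\CWS(\G,\mathrm{pad}(C,N))$ and $T:=t$, and run $\mathcal{A}$ on $(Q,T)$.

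The template gives two guarantees.
\begin{enumerate}[(i)]
\item If $\dist(C)\le t$, then $\qdist(Q)\le t$ with probability $1$.
\item If $\dist(C)\ge t+1$, then with probability at least $1-2^{-\delta(\varepsilon_1)N}$ the graph is $(C',t+1)$-compatible and $(t+1)$-diagonal. In that case Lemma~\ref{lem:threshold}(b2) gives $\qdist(Q)\ge t+1$.
\end{enumerate}
For exact MDP the NO case is exactly $\dist(C)\ge t+1$, so $g=0$ suffices and no gap is lost.

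By Proposition~\ref{prop:cws}(c), the number of logical qubits is $\kappa=\dim C'=n'$ exactly, since padding preserves rank. Constructing the presentation costs $O(N^3)=\poly(m')$.

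\textbf{Error and running time.} The composed algorithm errs if the graph is bad, with probability at most $2^{-\delta N}$, or if $\mathcal{A}$ errs, with probability at most $1/3$. Its total error is therefore at most $1/3+2^{-\delta N}$. For instances with $N$ below some constant, we solve the MDP instance by brute force in constant time. Otherwise the error is at most $0.34$, which standard majority amplification brings back to $1/3$ at a constant-factor cost.

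The running time is $\poly(m')+2^{(1-\varepsilon)n'}\poly(N)=2^{(1-\varepsilon)n'}\poly(m')$. This is exactly the form excluded by Theorem~\ref{thm:strong}(b).

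\textbf{Main obstacle.} The delicate point is that $\mathcal{A}$ is allowed a $\poly(N)$ factor, and $N$ is polynomial in $m'$. That factor is absorbed only because Theorem~\ref{thm:strong} is stated with the $\poly(m')$ factor; the original form in \cite{SV19} would not suffice.

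A second point to check is that padding never raises the rank. This holds because the padded codewords are simply $(c,0^{N-m'})$. The choice $N\ge 4n'$ is needed so that Lemma~\ref{lem:randomcompat} applies with $k'\le N/4$, and $w^*=t+1\le\varepsilon_1 N$ holds by the choice of $N$. Beyond these checks, the argument is direct composition.
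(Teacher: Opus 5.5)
Your proposal is correct and follows the paper's proof step for step: you compose the assumed algorithm with Lemma~\ref{lem:template} at $g=0$, use $\kappa=n'$ and $N=\poly(m')$, and contradict Theorem~\ref{thm:strong}(b), exactly as the paper does. The only cosmetic differences are that the paper amplifies $\mathcal{A}$ to error $\tfrac14$ before composing rather than afterwards, and that your assumption $t\le m'$ should be stated as without loss of generality, since instances with $t\ge m'$ are trivially YES.
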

\begin{proof}
Suppose such an algorithm $A$ exists. Assume $A$ has error at most $\frac{1}{4}$ otherwise amplify by repeating. By Theorem~\ref{thm:strong}(b), under randomized SETH there is no randomized algorithm deciding exact MDP instances $(C, t)$ of rank $n'$ and block length $m'$ in time $2^{(1-\varepsilon) n'} \cdot \mathrm{poly}(m')$.
Given such an instance of MDP, fix any admissible $\varepsilon_1$ and run Lemma~\ref{lem:template} with $g := 0$; this takes $\mathrm{poly}(m')$ time and produces $Q$ on $N = O_{\varepsilon_1}(t + m') = \mathrm{poly}(m')$ qubits with $\kappa = n'$ logical qubits. Query $A$ on $(Q, t)$ and return its answer.

\smallskip \noindent
If $\dist(C) \le t$: $\qdist(Q) \le t$ always, and $A$ answers YES.

\smallskip \noindent
If $\dist(C) \ge t + 1$: with probability $\ge 1 -
2^{-\delta(\varepsilon_1) N}$, $\qdist(Q) \ge t + 1> t$, and $A$ answers NO.

\smallskip \noindent
The composition is a randomized algorithm that errs only if the sampled graph is bad (with probability at most $\le 2^{-\delta(\varepsilon_1) N}$) or $A$ errs (with probability at most $\le \tfrac14$), so its total error is at most $\tfrac14 + 2^{-\delta(\varepsilon_1) N} < \tfrac13$ for large enough $N$.
Since $N = \mathrm{poly}(m')$, its running time is $2^{(1-\varepsilon)n'} \cdot \mathrm{poly}(N) = 2^{(1-\varepsilon)n'} \cdot \mathrm{poly}(m')$. 
This is a randomized algorithm deciding the exact MDP in that time, contradicting Theorem~\ref{thm:strong}(b).
\end{proof}

Note that the above reduction is one-sided: by Lemma~\ref{lem:threshold}(a) the YES conclusion $\qdist(Q) \le t$ holds for \emph{every} $\G$, so the sampling can only cause an error on NO instances. 
If $A$ is itself one-sided, so is the composition, matching the RUR structure of the classical base.

\subsection*{Gap hardness under the non-uniform Gap-ETH}
\label{sec:fggap}

\begin{theorem}[Gap-ETH-hardness with a linear additive gap]
\label{thm:fg-gap}
Assume non-uniform Gap-ETH. For every admissible $\varepsilon_1$ there are constants $\alpha = \alpha(\varepsilon_1) > 0$ and $c_\kappa = c_\kappa(\varepsilon_1) > 0$ such that no $2^{o (N)}$ time randomized algorithm solves the promise problem $\mathrm{GapAddQDist}_{\alpha N}$ restricted to stabilizer codes on $N$ qubits with $\kappa \ge c_\kappa N$ logical qubits.
In particular, the hardness holds for instances simultaneously linear in the block length, the number of logical qubits, and the additive gap.
\end{theorem}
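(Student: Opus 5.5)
We argue by contradiction, composing the hard classical gap instances with the randomized template of Lemma~\ref{lem:template}. Start from Theorem~\ref{imp:gapmdp} together with Observation~\ref{prop:linprofile}. Under non-uniform Gap-ETH there is $\gamma_2>1$ such that no $2^{o(n)}$-time algorithm solves $\gamma_2$-$\mathrm{MDP}$ on instances $(C,t)$ with rank $\kappa_0=\Theta(n)$, block length $m=\Theta(n)$ and threshold $t=\Theta(m)$. Say $t\ge \theta m$ for a constant $\theta>0$. On such instances the multiplicative gap becomes an additive one: the NO promise $\dist(C)>\gamma_2 t$ gives $\dist(C)>t+g$ with $g:=\lfloor(\gamma_2-1)t\rfloor\ge(\gamma_2-1)\theta m-1$. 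So these are $\mathrm{GapAddDist}$-type instances with additive gap $\Theta(m)$.

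Given such an instance, we apply Lemma~\ref{lem:template} with this $t$ and $g$, and with $w^*=t+g+1\le\gamma_2 t+1$. By Definition~\ref{def:pad} the padded length is
\[
N=\max\bigl(4\kappa_0,\ m,\ \lceil w^*/\varepsilon_1\rceil\bigr)=O_{\varepsilon_1}(m),
\]
and $N\ge m$. Since $m=\Theta(n)$, both $N$ and $m$ are within constant factors of $n$. The output $Q$ has $\kappa=\kappa_0=\Theta(n)$ logical qubits, so $\kappa\ge c_\kappa N$ for a constant $c_\kappa$ depending on $\varepsilon_1$ and the hidden constants of Observation~\ref{prop:linprofile}. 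The additive gap satisfies $g\ge\alpha N$ once we set $\alpha:=\tfrac12(\gamma_2-1)\theta\cdot(m/N)_{\min}$, which is a constant for large $n$. The quantum promise problem then follows from the template:
\begin{itemize}
\item on YES instances, $\qdist(Q)\le t$ with certainty;
\item on NO instances, $\qdist(Q)>t+g\ge t+\alpha N$ except with probability $2^{-\delta(\varepsilon_1)N}$.
\end{itemize}
Hence $(Q,t)$ is a valid instance of $\mathrm{GapAddQDist}_{\alpha N}$. The NO conclusion for $\qdist(Q)>t+\alpha N$ uses only that $\alpha N\le g$, which holds by the choice of $\alpha$.

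Now suppose a randomized $2^{o(N)}$-time algorithm solves this restricted problem with error at most $1/4$. Composing it with the reduction gives a randomized algorithm for $\gamma_2$-$\mathrm{MDP}$ with error at most $1/4+2^{-\Omega(N)}<1/3$. Its running time is $\mathrm{poly}(N)+2^{o(N)}=2^{o(n)}$, because $N=\Theta(n)$. This contradicts Theorem~\ref{imp:gapmdp}.

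One subtlety remains: Theorem~\ref{imp:gapmdp} is stated against algorithms in the non-uniform setting. We handle it with the standard step. A randomized $2^{o(n)}$-time algorithm with error below $1/3$ is amplified to error below $2^{-2n'}$, where $n'$ bounds the input length. Adleman's argument then fixes the random coins, including the graph choice of Lemma~\ref{lem:template}, into advice. This yields circuits of size $2^{o(n)}$, which is exactly what the non-uniform hypothesis rules out. It also explains why a randomized, non-uniform reduction is harmless here.

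The main obstacle is quantitative bookkeeping rather than new ideas. We need $t\ge\theta m$ with an explicit constant, and $N=O(m)$ rather than $N=\Theta(w^*/\varepsilon_1)$ blowing up relative to $\kappa$. Both are secured by Observation~\ref{prop:linprofile}, which supplies the linear lower bound on $t$, and by fixing $\varepsilon_1$ before choosing $\alpha$ and $c_\kappa$. I would verify these inequalities first, checking $\kappa_0\le N/4$ and $w^*\le\varepsilon_1 N$, since the rest is direct composition.
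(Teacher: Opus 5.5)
Your proposal is correct and follows the paper's proof essentially step for step. Like the paper, you convert the $\gamma_2$-MDP instances of Observation~\ref{prop:linprofile} to additive gap $g=\lfloor(\gamma_2-1)t\rfloor$, push them through Lemma~\ref{lem:template}, and read off $N,\kappa,g=\Theta(n)$.

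Two of your choices are slightly more careful than the paper's. Your safety factor $\tfrac12$ in $\alpha$ guarantees $g\ge\alpha N$ for every large $n$, which the paper's choice $\alpha:=\liminf g/N$ does not. Your explicit Adleman step spells out what the paper covers only in a one-line remark, namely that the non-uniform hypothesis also rules out randomized algorithms.
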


\begin{proof}
By Theorem~\ref{imp:gapmdp} and Observation~\ref{prop:linprofile}, under non-uniform Gap-ETH there is a constant $\gamma_2 > 1$ and a family of $\gamma_2$-MDP instances $(C, t)$ with rank $\kappa = \Theta(n)$, block length $m = \Theta(n)$, and threshold $t = \Theta(m)$, on which no randomized $2^{o(n)}$-time algorithm works. Set $g := \lfloor (\gamma_2 - 1) t \rfloor = \Theta(m)$.
The NO promise takes the form $\dist(C) > \gamma_2 t \ge t + g$.
Now apply the map from Lemma~\ref{lem:template}.
The YES instance remains $\dist(C) \leq t$. 
Moreover, $w^* = t + g + 1 = \Theta(m)$, so $N = \Theta_{\varepsilon_1}(m) = \Theta(n)$ (linear block length); $\kappa = \Theta(n) = \Theta(N)$ (linear rank); and the quantum additive gap $g = \Theta(N)$ (linear additive gap).
Since $g$ and $\kappa$ are both $\Theta(N)$, the constants $\alpha := \liminf_{n} g/N$ and $c_\kappa := \liminf_{n} \kappa/N$ of the statement are positive and depend only on $\varepsilon_1$ and on the constants of Observation~\ref{prop:linprofile}.
A $2^{o(N)}$-time algorithm for the quantum promise problem, composed with the template, decides the classical MDP in randomized time $2^{o(N)} + \mathrm{poly}(n) = 2^{o(n)}$, contradicting the hypothesis (which, being non-uniform, also covers randomized algorithms).
\end{proof}

\section{Derandomization under a circuit hypothesis}
\label{sec:hypD}

The randomness in the reductions of Section~\ref{sec:finegrained} enters in exactly one place: the sampling of the graph $\G$. This section removes it, under a standard derandomization hypothesis.

We first replace the sampling by a deterministic polynomial-time \emph{list} of candidate graphs, at least one of which is \emph{good}. Because the reductions of Section~\ref{sec:finegrained} err only on NO instances, the YES conclusion of Lemma~\ref{lem:threshold}(a) holds for every graph and hence a list suffices, at the price of Karp reductions becoming nonadaptive Turing (truth-table) reductions (\S\ref{sec:detred}).
We then give the transfer principle (\S\ref{sec:transfer}): under \emph{circuit hypothesis}, block-length hardness of the classical problem, the open question of \cite{SV19}, in its original unrestricted form, forwards to block-length hardness of \QMDP.

Similar to the earlier sections, $\varepsilon_1$ is admissible in the sense of Section~\ref{sec:finegrained}, and $\delta(\varepsilon_1) > 0$ and $c_1(\varepsilon_1)$ are the constants of Lemma~\ref{lem:randomcompat}, throughout this section.
\subsection*{The derandomization objects}
\label{sec:derandprimer}

\emph{Nondeterministic circuits.} A nondeterministic circuit $D(x, y)$ has a designated witness input $y$; it accepts $x$ if and only if $\exists y$ with $D(x, y) = 1$. Size is the gate count of the circuit.
A co-nondeterministic circuit accepts $x$ if and only if every $y$ satisfies $D(x, y) = 1$; a set is co-nondeterministically accepted if its complement is nondeterministically accepted. 
Hence, a statement of the form \emph{$L$ requires nondeterministic circuits of size $s(n)$} gives a non-uniform lower bound: no size-$s(n)$ circuit family accepts exactly $L \cap \{0,1\}^n$.

\medskip \noindent 
\emph{Hitting-set generators vs.\ pseudorandom generators.} A PRG's output distribution is indistinguishable from uniform by every test in a class (two-sided).
A hitting-set generator guarantees only: every test in the class accepting at least half of all strings accepts some string on the list (one-sided).
HSGs are the weaker object and are all we need: our ``test'' is the set of good graphs, of density $\ge 1 - 2^{-\delta(\varepsilon_1) N}$ (Lemma~\ref{lem:randomcompat}); we never need a small error, only that the list cannot consist entirely of the exponentially rare bad graphs.

\medskip \noindent
\emph{Why a hitting set suffices.} Our reduction is one-sided in the graph: by Lemma~\ref{lem:threshold}(a) the YES bound $\qdist(Q_\G) \le \lambda$ holds for \emph{every} graph $\G$, good or bad. Consequently, the generator needs to only \emph{intersect} the set of good graphs, never estimate their measure, and the weaker hitting-set object, which follows from a weaker hardness assumption than a full PRG, is exactly the tool the reduction can use.

\subsection{The hypothesis and the imported generator}
\label{sec:hyp}

\begin{hypothesis}[circuit hypothesis]
\label{hyp:circuit}
There exist $\delta_D>0$ and a language $L\in\E=\DTIME(2^{O(n)})$ such that for all sufficiently large $n$, $L\cap\{0,1\}^n$ has no nondeterministic Boolean circuit of size $2^{\delta_Dn}$.
\end{hypothesis}

The following is not stated explicitly in \cite{MV05}, but follows from \cite[Cor.~1.8]{MV05} together with the observation that hardness against nondeterministic circuits implies hardness against SV-nondeterministic ones, and a padding argument. We give the proof for completeness.
\begin{fact}[hitting sets from \emph{circuit hypothesis}; {\cite[Cor.~1.8]{MV05}}]
\label{fact:hsg}
Assume \emph{circuit hypothesis}. For every constant $c \ge 2$ there is a deterministic algorithm that, on input $\ell$, runs in time $\poly(\ell)$ and outputs a list $x_1, \dots, x_r \in \{0,1\}^{\ell}$ with $r = \poly(\ell)$, such that the list intersects every set $T \subseteq \{0,1\}^{\ell}$ of density $\ge \tfrac12$ accepted by a co-nondeterministic circuit of size $\ell^{c}$.
\end{fact}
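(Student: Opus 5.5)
The plan is to derive Fact~\ref{fact:hsg} from \cite[Cor.~1.8]{MV05}, which (in their language) says: if some language in $\E$ requires SV-nondeterministic circuits of size $2^{\Omega(n)}$, then for every constant $c$ there is a polynomial-time computable hitting-set generator against co-nondeterministic circuits of size $\ell^c$. The result is usually phrased as a $\poly(\ell)$-time computable function $H\colon\{0,1\}^{O(\log \ell)}\to\{0,1\}^{\ell}$ such that every co-nondeterministic circuit of size $\ell^c$ accepting at least half of $\{0,1\}^\ell$ accepts some $H(s)$. Enumerating all $2^{O(\log\ell)}=\poly(\ell)$ seeds then gives the list $x_1,\dots,x_r$ with $r=\poly(\ell)$, computed in $\poly(\ell)$ time. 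So the work is in checking that the circuit hypothesis meets the hypothesis of \cite{MV05}.

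The first step is the implication from nondeterministic to SV-nondeterministic hardness. An SV-nondeterministic circuit computing $f$ of size $s$ has outputs $(\mathrm{flag},\mathrm{value})$ such that some witness raises the flag, and every flagged witness reports $f(x)$. From such a circuit, we obtain a nondeterministic circuit of size $O(s)$ accepting $f^{-1}(1)$: accept on witness $y$ iff the flag is raised and the value is $1$. Hence, if $L\cap\{0,1\}^n$ has no nondeterministic circuit of size $2^{\delta_D n}$, it has no SV-nondeterministic circuit of size $2^{\delta_D n}/O(1)$. This is still $2^{\delta' n}$ for any $\delta'<\delta_D$ and large $n$.

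The second step is a padding/rescaling argument to match the exact form of \cite{MV05}. Their statement fixes a hard function on inputs of length $n=\Theta(\log \ell)$ and produces a generator for size $\ell^{c}$. Given $c$, I would choose $n=\lceil (c'/\delta')\log\ell\rceil$ for a suitable constant $c'$ depending on $c$ and on the loss in their construction (their hardness-versus-randomness tradeoff is polynomial). This ensures the hardness $2^{\delta' n}=\ell^{c'}$ exceeds what their reconstruction requires. Since $L\in\E$, its truth table on length $n$ is computable in time $2^{O(n)}=\poly(\ell)$, which keeps the generator polynomial-time. The density threshold $\tfrac12$ and the ``for all sufficiently large $n$'' clause need no change, since small $\ell$ are handled by a finite lookup.

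The main obstacle I expect is bookkeeping in this second step. The constants in \cite[Cor.~1.8]{MV05} are stated for their SV-hardness parameter, with their own conventions on circuit size and seed length. One must confirm that a fixed $\delta'>0$ suffices for every $c$, with only the seed-length constant depending on $c$, so that $r$ remains polynomial. I would handle it by citing their theorem in the ``high-end'' form (exponential hardness gives logarithmic seed) and absorbing all constants into the choice of $n$ as a multiple of $\log\ell$.
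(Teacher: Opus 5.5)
Your route matches the paper's: invoke \cite[Cor.~1.8]{MV05}, pass from nondeterministic to SV-nondeterministic hardness via the flag/value circuit, and use $L\in\mathsf{E}$ to compute the truth table at input length $\Theta_c(\log\ell)$ in $\mathrm{poly}(\ell)$ time. Your choice of that input length is the paper's $\mu=\lceil\gamma^{-1}\log(\ell^{c})\rceil$.

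The one step you skip is separating the output length from the circuit size. As the paper reads \cite[Cor.~1.8]{MV05}, a $2^{\tau_0\mu}$-hard truth table yields a hitting set in $\{0,1\}^{n}$ against co-nondeterministic circuits of size $n$, with $n=\lceil 2^{\gamma\mu}\rceil$. Output length and size are therefore the same parameter. Rescaling $\mu$, which is all your ``padding'' step does, gives strings of length at least $\ell^{c}$, not $\ell$.

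The paper fixes this as follows. It sets $n:=\ell^{c}$, runs the procedure to get $H\subseteq\{0,1\}^{n'}$ with $n'\ge n$, and truncates every string to its first $\ell$ bits. To justify the truncation, it lifts $T$ to $T'=\{x\in\{0,1\}^{n'} : x|_{[\ell]}\in T\}$. This set has the same density as $T$ and is accepted by a co-nondeterministic circuit of size $\ell^{c}\le n'$ that runs $D$ on the first $\ell$ bits and ignores the rest. Hence some $x\in H$ has $x|_{[\ell]}\in T$.

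The argument is easy, but you need to include it. Otherwise you are assuming that \cite{MV05} is already stated as a generator with output length $\ell$ against size $\ell^{c}$.
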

    \begin{proof}
Consider \cite[Cor.~1.8]{MV05}: for every constant $\tau_0 > 0$ there is a constant $\gamma > 0$ and a \emph{deterministic} polynomial-time procedure which, given the truth table of a Boolean function $f\colon \{0,1\}^{\mu} \to \{0,1\}$ of SV-nondeterministic circuit complexity at least $2^{\tau_0 \mu}$, outputs a hitting set in $\{0,1\}^{n}$ with threshold $\tfrac12$ for co-nondeterministic circuits of size $n$, where $n = \lceil 2^{\gamma \mu} \rceil$.

\medskip\noindent
Three things separate this from the statement we want in Fact~\ref{fact:hsg}: the hypothesis is about SV-nondeterministic rather than plain nondeterministic circuits; the procedure expects a truth table rather than a language; and it ties the string length and the circuit size to the same parameter $n$, whereas we need length $\ell$ against size $\ell^{c}$ for an arbitrary constant $c$. We address these one by one.

\medskip\noindent
Let $L \in \mathsf{E}$ and $\delta_D > 0$ be as in \emph{circuit hypothesis}. Since an SV-nondeterministic circuit of size $s$ yields a nondeterministic circuit of size $O(s)$ \cite[\S2]{MV05}, hardness against nondeterministic circuits implies hardness against SV-nondeterministic circuits; hence $L \cap \{0,1\}^{\mu}$ has SV-nondeterministic circuit complexity at least $2^{\tau_0 \mu}$ for all sufficiently large $\mu$ and some constant $\tau_0 = \tau_0(\delta_D) > 0$. Let $\gamma = \gamma(\tau_0)$ be the constant of \cite[Cor.~1.8]{MV05}; given $\ell$ and $c$, set $n := \ell^{c}$ and $\mu := \lceil \gamma^{-1}\log n\rceil = O(\log \ell)$. 
Since $L \in \mathsf{E}$, the truth table of $L \cap \{0,1\}^{\mu}$ ($2^{\mu} = \poly(\ell)$ entries, each computable in time $2^{O(\mu)} = \poly(\ell)$) is computable deterministically in time $\poly(\ell)$.\footnote{This is where $L \in \mathsf{E}$, rather than $L \in \mathsf{NE} \cap \mathsf{coNE}$, is used, and it is what makes the generator deterministic rather than single-valued nondeterministic \textup{(}cf.\ \cite[Cor.~3.4 vs.\ Cor.~3.2]{MV05}\textup{)}.} Using the procedure of \cite[Cor.~1.8]{MV05}, obtain a hitting set $H \subseteq \{0,1\}^{n'}$, $n' \ge n$, with threshold $\tfrac12$ for co-nondeterministic circuits of size $n'$. Truncate each string to its first $\ell$ bits. The procedure runs in time $\poly(\ell)$, and hence the resulting list has $r \le \poly(\ell)$ entries.

\medskip\noindent
It remains to show that truncation preserves the guarantee, and this follows from the padding argument used in \cite[Cor.~1.8, Cor.~3.2]{MV05} itself. Let $T \subseteq \{0,1\}^{\ell}$ have density $\ge \tfrac12$ and be accepted by a co-nondeterministic circuit $D$ of size $\ell^{c}$.
Lift it to
\[
   T' \;:=\; \{ x \in \{0,1\}^{n'} : x|_{[\ell]} \in T \}.
\]
The set $T'$ has the same density as $T$, and it is accepted by the circuit that runs $D$ on the first $\ell$ input bits and ignores the rest, which is co-nondeterministic of size $\ell^{c} = n \le n'$. Both hypotheses of the hitting guarantee at length $n'$ are therefore met, so some $x \in H$ lies in $T'$, that is, $x|_{[\ell]} \in T$. Since $T$ was arbitrary, the truncated list is as claimed.
\end{proof}

\subsection{Additive gap versions}
\label{sec:bases}

Both of our classical bases are additive-gap versions of the minimum-distance problem, differing only in the reduction type: the first is randomized with one-sided error, the second deterministic. 
\begin{fact}[randomized base; {\cite[Thm.~32]{DMS03}}]
\label{fact:randbase}
There is a constant $\tau > 0$ such that $\mathrm{GapAddDist}_\tau$ is NP-hard under polynomial-time RUR reductions: NO instances map to NO instances always; YES instances map to YES instances except probability $2^{-s}$ for a security parameter $s$, in time $\mathrm{poly}(s)$.
\end{fact}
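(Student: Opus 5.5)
The plan is to follow the locally-dense-code paradigm of \cite{DMS03}, reducing from an NP-hard gap version of $\NCP$ and keeping every parameter at linear scale so that the final gap is additive $\tau m$. The starting point is NP-hardness of $\NCP$ with a constant gap in \emph{additive} form, $\mathrm{dist}(y,C)\le t$ versus $\mathrm{dist}(y,C)>\gamma t$ with $t=\Theta(m)$. Theorem~\ref{imp:ncp_original} supplies this: compose with a PCP-based $(s,c)$-Gap-$k$-SAT, which is NP-hard for suitable constants and has $m=\Theta(n)$ clauses, and set $q=2$. The reduction is then deterministic, and both the threshold and the block length are $\Theta(m)$.

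The second ingredient is a locally dense code. This is a linear code $L\subseteq\F_2^{\ell}$ with $\dist(L)\ge d$, a center $s$, and a radius $r<d$ (with $r/d$ a constant below $1$, and all of $\ell$, $d$, $r$ linear in the size). The ball $B(s,r)$ must contain exponentially many codewords of $L$. I would take this gadget as in \cite{DMS03}, from concatenated or Reed--Solomon-based codes together with a random center. The key probabilistic step follows: choose a random linear map $T\colon L\to\F_2^{n}$. Because the codewords in $B(s,r)$ are numerous and pairwise well spread, a second-moment or leftover-hash argument shows that with probability $1-2^{-s}$ the image $T(L\cap B(s,r))$ is all of $\F_2^{n}$. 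The failure probability is driven down by taking the gadget slightly larger, with polynomial cost in the security parameter $s$.

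The combined code is assembled next. Its codewords are $(\,(Cx + y z)^{\otimes \text{rep}_1},\ (Lu - s z)^{\otimes \text{rep}_2}\,)$ with $z\in\F_2$ and $x = T u$, where the repetition counts are chosen so that the two blocks are balanced. In the NO case every nonzero codeword is heavy, and I would split this by $z$. If $z=0$ and $u\neq0$, the weight is at least $\text{rep}_2\cdot d$. If $z=1$, the first block contributes at least $\text{rep}_1\cdot\gamma t$. Both bounds hold deterministically, which gives the one-sidedness required by RUR: NO instances always map to NO. In the YES case, take a solution $x$ with $z=1$ and $u\in B(s,r)$ with $Tu=x$, which surjectivity guarantees. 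Its weight is at most $\text{rep}_1 t + \text{rep}_2 r$, and this holds whenever $T$ is onto the ball, that is, except with probability $2^{-s}$. Choosing the repetition counts so that $\text{rep}_1\gamma t$ and $\text{rep}_2 d$ both exceed $\text{rep}_1 t+\text{rep}_2 r$ by a constant fraction of the total block length gives an additive gap of $\tau m$.

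I expect the main obstacle to be this gadget step. It requires a locally dense code whose radius, distance and block length are all linear, with $r\le(1-\Omega(1))d$ and exponentially many codewords in the ball, so that surjectivity of $T$ holds with exponentially small failure probability. The additive requirement forbids the tensoring or boosting tricks that only amplify a multiplicative gap at sublinear thresholds. I would first try to import \cite[Thm.~32]{DMS03} directly, with its explicit parameter accounting, rather than rebuild the gadget.
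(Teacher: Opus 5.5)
The paper gives no proof of this statement; it imports it verbatim from \cite[Thm.~32]{DMS03}. Your outline follows the locally-dense-code route of that source, and its skeleton is sound: the NO side is deterministic, the only error comes from $T$, and surjectivity follows from pairwise independence plus a union bound. As instantiated, however, it fails at the balancing step, because the $\mathrm{NCP}$ gap you start from is too small.

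\emph{Why the balancing has no solution.}
\begin{itemize}
\item Your two requirements are $\mathrm{rep}_1\gamma t > \mathrm{rep}_1 t+\mathrm{rep}_2 r$ and $\mathrm{rep}_2 d > \mathrm{rep}_1 t+\mathrm{rep}_2 r$. These read $\mathrm{rep}_1(\gamma-1)t > \mathrm{rep}_2 r$ and $\mathrm{rep}_2(d-r) > \mathrm{rep}_1 t$.
\item Multiplying them gives $(\gamma-1)(d-r) > r$, i.e.\ $\gamma > d/(d-r)$.
\item A ball of radius $r$ that contains two distinct codewords forces $d\le 2r$. Hence $d/(d-r)\ge 2$, and you need $\gamma>2$.
\item Theorem~\ref{imp:ncp_original} at $q=2$ gives $\gamma=(1-s/2-2^{-k})/(1-c/2-2^{-k})\le(1-s/2-2^{-k})/(1/2-2^{-k})$.
\item Gap-$k$-SAT is non-trivial only for $s>1-2^{-k}$, since a random assignment satisfies a $(1-2^{-k})$-fraction of the clauses. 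So $\gamma<(2^k-1)/(2^k-2)\le 3/2$.
\item For $k=3$ this is the paper's own $(7-4s)/3<7/6$ from the appendix.
\end{itemize}
So no choice of repetition counts gives any gap at all, let alone $\tau m$.

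\emph{What is missing.} You need an NP-hard $\mathrm{NCP}$ source with an arbitrarily large constant gap at linear threshold.
\begin{itemize}
\item H\aa stad's Max-3-LIN theorem supplies one directly. Distinguishing whether $m$ three-variable equations over $\mathbb{F}_2$ are $(1-\epsilon)$-satisfiable or at most $(1/2+\epsilon)$-satisfiable is an $\mathrm{NCP}$ instance with $t=\epsilon m$ and $\gamma=(1-2\epsilon)/(2\epsilon)$.
\item Take $\epsilon$ small enough that $\gamma>d/(d-r)$ with a constant margin. Then $\mathrm{rep}_1/\mathrm{rep}_2$ can be placed strictly inside the nonempty interval $\bigl(r/((\gamma-1)t),\,(d-r)/t\bigr)$.
\item Since $t=\Theta(m)$ and $r,d=\Theta(\ell)$, both margins are then a constant fraction of $\mathrm{rep}_1 m+\mathrm{rep}_2\ell$.
\end{itemize}
The only alternative is to add $\mathrm{rep}_2\,r$ to the $z=1$ branch, which would make any $\gamma>1$ suffice. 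That requires a gadget whose center is at distance at least $r$ from every codeword, and random-center locally dense codes do not guarantee this. So the real obstacle is not the gadget, which you flagged, but the compatibility between $\gamma$ and $r/d$.
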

\begin{fact}[deterministic base; {\cite[\S1.1]{GJS25}}]
\label{fact:detbase}
There is a constant $\tau > 0$ such that $\mathrm{GapAddDist}_\tau$ is NP-hard under deterministic polynomial-time Karp reductions.%
\footnote{Stated in \cite[\S1.1]{GJS25}, based on the methods of
\cite{BGLR25}.}
\end{fact}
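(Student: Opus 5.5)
Since this is an imported fact, my plan is to reconstruct the deterministic chain behind \cite[\S1.1]{GJS25} and check that it lands in the additive regime of Problem~\ref{prob:gapadddist}. I would not reprove the pieces. The first step is a deterministic PCP-based starting point: NP-hardness, under Karp reductions, of a gap version of the Nearest Codeword Problem (Problem~\ref{prob:ncp}) in which both the threshold and the gap are linear in the block length. Concretely, I would start from Håstad-type hardness of Max-3LIN with completeness $1-\eps$ and soundness $\tfrac12+\eps$ and read it directly as an $\mathrm{NCP}$ instance. The generator matrix is the equation matrix and the target is the right-hand-side vector. Then $\mathrm{dist}(y,C)\le \eps m$ on YES instances and $\mathrm{dist}(y,C)\ge(\tfrac12-\eps)m$ on NO instances. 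This step is fully deterministic.

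The second step converts NCP into MDP without randomness. The randomized route of \cite{DMS03} (Fact~\ref{fact:randbase}) needs a random linear map onto a locally dense code. The deterministic replacement I have in mind is the one underlying \cite{BGLR25}: an explicit locally dense code, or equivalently the Austrin--Khot style construction that makes the NCP target "homogeneous". The aim is an MDP instance whose distance is at most $t$ on YES and greater than $\gamma t$ on NO, for a constant $\gamma>1$. Crucially, $t$ should also be a constant fraction of the block length $m$.

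The third step is bookkeeping. Once $t=\Theta(m)$ and $\gamma>1$ is constant, the remark on additive versus multiplicative gaps after Problem~\ref{prob:gapaddqdist} gives an additive gap $(\gamma-1)t=\tau m$ for a constant $\tau>0$. That is exactly $\mathrm{GapAddDist}_\tau$. If the construction only yields a gap close to $1$, I would allow a constant number of tensor-power or concatenation steps, since these are deterministic. Each such step preserves $t=\Theta(m)$ up to constants because both relative distances multiply.

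I expect the main obstacle to be the second step, specifically keeping the relative threshold $t/m$ bounded below by a constant. Embedding through a locally dense gadget typically appends many gadget copies, or a Reed--Solomon-type code with large block length relative to its radius. That dilutes $t/m$, and this dilution is exactly what kills linear additive gaps in weaker derandomizations. To control it, I would first check that the explicit gadget of \cite{BGLR25} has radius, rank and block length all $\Theta$ of the NCP length, mirroring the tracking done in Observation~\ref{prop:linprofile}. I would then verify that only a constant number of copies is used, so that the final instance has $t=\Theta(m)$ and the additive gap survives.
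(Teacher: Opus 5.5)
The paper gives no proof of this statement. It imports it as a black box from \cite{GJS25}, which in turn rests on \cite{BGLR25}, so there is no argument here to compare yours against. Read as a proof, your outline has a gap exactly where you place the obstacle.

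Steps~1 and~3 are routine and correct. H{\aa}stad's $3$LIN gives a deterministic gap-NCP with $t_{\mathrm{NCP}}=\varepsilon\, m_{\mathrm{NCP}}$ and a large constant gap. A constant factor $\gamma>1$ at $t=\Theta(m)$ is an additive gap $(\gamma-1)t=\Omega(m)$.

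The whole content of the fact is Step~2: a \emph{deterministic} homogenization whose output threshold is a constant fraction of the block length. You do not supply it. You name two candidate mechanisms and then propose to ``check'' a gadget in the cited work, so the argument bottoms out in the same citation the paper uses.

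The check you set yourself is also miscalibrated. Linear rank, linear block length and $O(1)$ gadget copies are what Observation~\ref{prop:linprofile} needs for the $2^{o(n)}$ bound of Theorem~\ref{thm:fg-gap}. Plain NP-hardness tolerates polynomial blow-up. Take an embedding with $\ell$ copies of the NCP block and $k$ copies of a gadget $(C^\dagger,y^\dagger,r^\dagger)$ of length $m^\dagger$. Then
\[
\frac{t}{m}\;=\;\frac{\ell\, t_{\mathrm{NCP}}+k\, r^\dagger}{\ell\, m_{\mathrm{NCP}}+k\, m^\dagger}\;\ge\;\min\Big(\frac{t_{\mathrm{NCP}}}{m_{\mathrm{NCP}}},\,\frac{r^\dagger}{m^\dagger}\Big).
\]
The NO-side analysis constrains only the ratio $k r^\dagger/(\ell\, t_{\mathrm{NCP}})$. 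Words not using the target weigh at least $k\gamma^\dagger r^\dagger$, and words using it weigh more than $\ell\gamma\, t_{\mathrm{NCP}}$. Polynomially bounded $\ell,k$ can tune this ratio, so copies never dilute $t/m$.

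What must actually be supplied is:
\begin{itemize}
\item an explicit gadget with $r^\dagger=\Omega(m^\dagger)$;
\item an explicit linear map $T$ whose image of the codewords in the ball covers $\mathbb{F}_2^{n}$, replacing the random map of \cite{DMS03}.
\end{itemize}
Tensoring is unnecessary for the additive form, and concatenation does not amplify the gap, so drop both.

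A concrete deterministic source that avoids gadgets is Austrin and Khot's elementary tensor-code reduction from SAT. It gives constant-factor hardness of MDP already on asymptotically good codes. If every output code has distance at least $\delta_0 m$, then NO instances have $\mathrm{dist}>\gamma t$ and $\mathrm{dist}\ge\delta_0 m$. Splitting on whether $t\ge\delta_0 m/\gamma$ gives $\mathrm{dist}>t+(1-1/\gamma)\delta_0 m$. That is the statement with $\tau=(1-1/\gamma)\delta_0$.
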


The following lemma is the additive-gap analogue of Lemma~\ref{lem:template}: it converts a classical instance with additive gap $\tau m$ into a quantum instance with additive gap linear in the block length, at a constant loss depending only on $\varepsilon_1$ and $\tau$. Unlike the exact and multiplicative routes, it imposes no condition on the profile of the classical instance.
In particular, $t$ is not restricted by $\sqrt{m}$.
\begin{lemma}
\label{lem:addbase}
Let $(C \subseteq \F_2^m, t)$ be a $\mathrm{GapAddDist}_\tau$ instance with $m \ge 3$ and $t \le m$. Set
\[
   g \;:=\; \lfloor \tau m \rfloor \qquad \text{and} \qquad w^* \;:=\; t + g + 1 .
\]
Let $C'$ and $N$ be as in Definition~\ref{def:pad}, and let $Q_\G := \CWS(\G, C')$ on $N$ qubits with $\kappa = \mathrm{rank}(C)$ logical qubits. Then:
\begin{enumerate}[(i)]
\item for \emph{every} $\G$: a YES instance of $\mathrm{GapAddDist}_\tau$ gives $\qdist(Q_\G) \le t$ and
\item for every $\G$ that is $(C', w^*)$-compatible and $w^*$-diagonal: a NO instance of $\mathrm{GapAddDist}_\tau$ gives $\qdist(Q_\G) > t + \tau m$.
\end{enumerate}
Moreover the additive quantum gap satisfies $\tau m \ge \alpha(\varepsilon_1,\tau)\cdot N$ for
\[
   \alpha(\varepsilon_1,\tau) \;:=\; \frac{\tau\varepsilon_1}{2+\tau} .
\]
\end{lemma}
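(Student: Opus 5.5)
The plan is to derive items (i) and (ii) directly from Lemma~\ref{lem:threshold}, applied to the padded code $C'$, and then to check the ratio $\tau m/N$ by bounding each of the three terms in the definition of $N$ in Definition~\ref{def:pad}.

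\emph{Step 1: the two items.} Padding preserves rank and distance (Definition~\ref{def:pad}), so $C'$ is an $[N,\kappa,\lambda]$ code with $\lambda=\dist(C)$. For (i), a YES instance has $\lambda\le t$, and Lemma~\ref{lem:threshold}(a) gives $\qdist(Q_\G)\le\lambda\le t$ for every graph $\G$. For (ii), a NO instance has $\lambda>t+\tau m\ge t+\lfloor\tau m\rfloor=t+g$. On a $(C',w^*)$-compatible, $w^*$-diagonal graph, Lemma~\ref{lem:threshold}(b2) then gives
\[
\qdist(Q_\G)\ge w^*=t+\lfloor\tau m\rfloor+1>t+\tau m,
\]
where the last inequality holds because $\lfloor x\rfloor+1>x$. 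Note that the hypotheses $k\le N/4$ and $w^*\le\varepsilon_1 N$ guaranteed by the padding are needed only to invoke Lemma~\ref{lem:randomcompat} later. Lemma~\ref{lem:threshold} itself needs only compatibility and diagonality.

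\emph{Step 2: the gap constant.} We need $\tau m\ge\frac{\tau\varepsilon_1}{2+\tau}N$, which is equivalent to
\[
N\le\frac{(2+\tau)m}{\varepsilon_1}.
\]
We check this for each term of $N=\max\bigl(4\kappa,\,m,\,\lceil w^*/\varepsilon_1\rceil\bigr)$.
\begin{enumerate}[(a)]
\item The term $m$ is trivially at most $(2+\tau)m/\varepsilon_1$, since $\varepsilon_1<1$.
\item For the term $4\kappa$: $\kappa\le m$, so $4\kappa\le 4m$, and $4m\le(2+\tau)m/\varepsilon_1$ holds once $\varepsilon_1\le\frac{2+\tau}{4}$.
\item For the term $\lceil w^*/\varepsilon_1\rceil$: using $t\le m$ and $g\le\tau m$, we get $w^*\le(1+\tau)m+1$. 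Hence
\[
\lceil w^*/\varepsilon_1\rceil\le\frac{(1+\tau)m+1}{\varepsilon_1}+1,
\]
and this is at most $(2+\tau)m/\varepsilon_1$ if and only if $1+\varepsilon_1\le m$. That holds because $m\ge3$ and $\varepsilon_1<1$.
\end{enumerate}

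\emph{Main obstacle.} The only nonroutine point is the condition $\varepsilon_1\le\frac{2+\tau}{4}$ needed for the $4\kappa$ term in (b). I would discharge it from admissibility. Admissibility requires $c_1(\varepsilon_1)=\varepsilon_1\log_2(3e/\varepsilon_1)<\frac34$. Already at $\varepsilon_1=\frac12$ we have $c_1>2$, and $c_1$ is increasing on $(0,\frac12]$, so every admissible $\varepsilon_1$ lies below $\frac12\le\frac{2+\tau}{4}$. The remaining inequalities are direct, so I expect no further difficulty.
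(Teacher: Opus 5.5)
Your proposal is correct and follows the paper's proof essentially step for step. You obtain (i) and (ii) from Lemma~\ref{lem:threshold}(a) and (b2) using $\lfloor \tau m\rfloor + 1 > \tau m$, and you get the gap constant by bounding each of the three terms of $N$ by $(2+\tau)m/\varepsilon_1$. Your one addition is deriving $\varepsilon_1<\tfrac12$ from admissibility, which the paper only asserts; to close it fully, note that $c_1$ is increasing on all of $(0,1)$ (its derivative $\log_2(3e/\varepsilon_1)-1/\ln 2$ is positive there), since monotonicity on $(0,\tfrac12]$ alone does not exclude $\varepsilon_1\in(\tfrac12,1)$.
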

\begin{proof} Let $\lambda:=\dist(C)$.
(i) follows directly from Lemma~\ref{lem:threshold}(a) with $\lambda \le t$.

\smallskip \noindent
(ii) Since $\lambda$ is an integer and we are in the NO instance of  $\mathrm{GapAddDist}_\tau$, we get $\lambda \ge t + \lfloor \tau m \rfloor + 1 > t + g$.
Hence Lemma~\ref{lem:threshold}(b2) at gap parameter $g$ gives $\qdist(Q_\G) \ge w^* = t + g + 1 > t + \tau m$.

\smallskip \noindent
By Definition~\ref{def:pad}, $N = \max(4\kappa, m, \lceil w^*/\varepsilon_1 \rceil)$ with $w^* = t+g+1 \le m + \tau m + 1$, so
\[\Bigg\lceil \frac{w^*}{\varepsilon_1} \Bigg\rceil \le \frac{(1+\tau)m + 1 + \varepsilon_1}{\varepsilon_1} \le \frac{(2+\tau)m}{\varepsilon_1}.\]
Also $4\kappa \le 4m \le (2+\tau)m/\varepsilon_1$ since $\varepsilon_1 < \frac{1}{2}$.
And hence, $N \le (2+\tau)m/\varepsilon_1$ giving $\frac{\tau m}{N} \;\ge\; \frac{\tau\,\varepsilon_1}{2 + \tau}$.
\end{proof}

\begin{theorem}[Hardness for a linear additive gap]
\label{thm:nphard-rand}
There exist constants $\alpha_1, \alpha_2> 0$ such that 
\begin{enumerate}[(a)]
\item 
$\mathrm{GapAddQDist}_{\alpha_1 N} \notin \mathsf{BPP}$ unless $\mathsf{NP} \subseteq \mathsf{BPP}$,
\item $\mathrm{GapAddQDist}_{\alpha_2 N} \notin \mathsf{coRP}$ unless $\mathsf{NP} \subseteq \mathsf{coRP}$.
\end{enumerate}
\end{theorem}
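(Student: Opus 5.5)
The plan is to compose one of the two classical additive-gap bases of \S\ref{sec:bases} with the randomized CWS map. Lemma~\ref{lem:addbase} converts the classical gap into a quantum gap, and Lemma~\ref{lem:randomcompat} supplies the graph. The two parts differ only in which base is used and in how the one-sided errors line up.

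For part (b), I would use the deterministic base, Fact~\ref{fact:detbase}. Given an NP instance, apply that Karp reduction to obtain $(C\subseteq\F_2^m,t)$. Without loss of generality $t\le m$ and $m\ge 3$; otherwise the instance is trivially decided, since $\dist(C)\le m$. Pad $C$ to $C'$ with $N$ as in Definition~\ref{def:pad} at level $w^*=t+\lfloor\tau m\rfloor+1$. Sample $\G$ uniformly and output $(Q_\G,t)$. Lemma~\ref{lem:addbase} gives gap $\tau m\ge\alpha_2 N$ with $\alpha_2:=\tau\varepsilon_1/(2+\tau)$, so $(Q_\G,t)$ is a legal $\mathrm{GapAddQDist}_{\alpha_2N}$ instance. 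The error analysis then runs as follows.
\begin{itemize}
\item \YES{} instances map to \YES{} for every $\G$, by Lemma~\ref{lem:addbase}(i).
\item \NO{} instances map to \NO{} whenever $\G$ is good. By Lemma~\ref{lem:randomcompat} this fails with probability at most $2^{-\delta(\varepsilon_1)N}$; its hypotheses $k'\le N/4$ and $w^*\le\varepsilon_1N$ hold by the choice of $N$.
\end{itemize}
Now suppose $\mathrm{GapAddQDist}_{\alpha_2N}\in\coRP$, via an algorithm that always accepts \YES{} and rejects \NO{} with probability at least $1/2$. The composition always accepts \YES{} instances of the NP language. On \NO{} instances it rejects with probability at least $(1-2^{-\delta N})/2$, which is a constant. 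Hence the language is in $\coRP$, and so $\NP\subseteq\coRP$. The collapse to $\ZPP$ is standard: $\NP\subseteq\coRP$ gives $\coNP\subseteq\RP\subseteq\NP$, so $\NP=\coNP$ and $\RP=\coRP=\NP$.

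For part (a), either base works. The randomized base, Fact~\ref{fact:randbase}, with security parameter $s$, gives two-sided error overall. Its \YES{} instances fail with probability at most $2^{-s}$ and the graph fails with probability at most $2^{-\delta N}$. A $\BPP$ algorithm for the quantum problem, amplified to error $1/4$, then composes to error below $1/3$ for large inputs. This yields $\NP\subseteq\BPP$, with $\alpha_1=\tau'\varepsilon_1/(2+\tau')$ for that base's constant $\tau'$. Alternatively, the deterministic base already gives (a) with $\alpha_1=\alpha_2$.

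The only points needing care are bookkeeping. First, promise-violating outputs must occur only on bad graphs; this is exactly the one-sidedness of Lemma~\ref{lem:threshold}(a). Second, $\alpha$ must be a uniform constant independent of $t$. This is where the hypothesis $t\le m$ in Lemma~\ref{lem:addbase} is used, and I expect that normalization of $t$ to be the main, though mild, obstacle.
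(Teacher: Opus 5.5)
Your proposal is correct and follows essentially the paper's route: compose the classical additive-gap base with the randomized CWS map of Lemma~\ref{lem:addbase}. For (b), the graph-independent YES bound of Lemma~\ref{lem:threshold}(a) keeps the deterministic base one-sided; for (a), the errors are counted on both sides when the randomized base of Fact~\ref{fact:randbase} is used. Your explicit normalization $t\le m$, $m\ge 3$ (required by Lemma~\ref{lem:addbase} but left implicit in the paper's proof) is a correct minor addition, as is your remark that the deterministic base alone already gives (a) with $\alpha_1=\alpha_2$.
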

\begin{proof}
    Given a $\mathrm{GapAddDist}_\tau$ instance $(C,t)$ with $C\subseteq\F_2^m$, sample $\G$ uniformly.
    Apply Lemma~\ref{lem:addbase}, and output $(Q_\G,t)$ with additive gap $\tau m\ge \alpha_1 N$, where $\alpha_1 := \tau\varepsilon_1/(2+\tau)$.
    By Lemma~\ref{lem:randomcompat}, $\G$ is $(C',w^*)$-compatible and $w^*$-diagonal except with probability $2^{-\delta(\varepsilon_1)N}=2^{-\Omega(N)}$. Conclusion~(i) of Lemma~\ref{lem:addbase} holds for every $\G$; conclusion~(ii) holds on that event.
    Thus, every YES instance of $\mathrm{GapAddDist}_\tau$ is mapped to a YES instance of $\mathrm{GapAddQDist}_{\alpha_1 N}$, for every $\G$; and every NO instance is mapped to a NO instance except with probability $2^{-\Omega(N)}$ over the sampling of $\G$. In other words, our construction is a UR-reduction from $\mathrm{GapAddDist}_\tau$ to $\mathrm{GapAddQDist}_{\alpha_1 N}$. Composing with the RUR-reduction of Fact~\ref{fact:randbase} gives a randomized reduction from an NP-complete language $\mathrm{L}$ to $\mathrm{GapAddQDist}_{\alpha_1 N}$ that is unfaithful on either side with probability $2^{-\Omega(N)}$. Hence a polynomial-time randomized algorithm for $\mathrm{GapAddQDist}_{\alpha_1 N}$ with error at most $\tfrac14$ decides $\mathrm{L}$ with total error below $\tfrac13$, so $\mathsf{NP}\subseteq\mathsf{BPP}$.

    \smallskip \noindent
    (b) follows by the same argument, with the RUR-reduction of Fact~\ref{fact:randbase} replaced by the deterministic Karp reduction of Fact~\ref{fact:detbase} (with a constant $\alpha_2$ replacing $\alpha_1$). Being faithful on both sides, that base preserves the one-sidedness of our construction: the composition is again a UR-reduction from $\mathrm{L}$, faithful on YES instances and unfaithful on NO instances with probability $2^{-\Omega(N)}$. Hence if $\mathrm{GapAddQDist}_{\alpha_2 N}\in\mathsf{coRP}$ then $\mathrm{L}\in\mathsf{coRP}$ resulting in $\mathsf{NP}\subseteq\mathsf{coRP}$.
\end{proof}

The conclusion of \Cref{thm:nphard-rand}(b) can be stated more strongly.
Suppose $\NP\subseteq\coRP$ and let $L\in\coRP$.
Then $\bar L\in\RP\subseteq\NP\subseteq\coRP$, so $\bar L\in\coRP$ and hence $L\in\RP$.
Thus $\coRP\subseteq\RP$; and $\RP\subseteq\NP\subseteq\coRP$ gives the reverse inclusion, so $\RP=\coRP$ and $\ZPP=\RP\cap\coRP=\RP$.
Finally $\NP\subseteq\coRP=\RP\subseteq\NP$, so
\[
  \NP=\RP=\coRP=\ZPP.
\]
This is noted in \cite[\S2.2]{DMS03}, who obtain only $\NP=\RP$ because their reduction is faithful on \NO~instances where ours is faithful on \YES.

\begin{lemma}[list construction]
\label{lem:listB}
Assume \emph{circuit hypothesis}. There is a deterministic polynomial-time algorithm that, given an $[N, k']$ code $C'$ (by parity check matrix $H'$) with $k' \le \frac{N}{4}$ and a level $w^* \le \varepsilon_1 N$, outputs graphs $\G_1, \dots, \G_r \in \mathcal{G}_N$ with $r = \poly(N)$, such that at least one $\G_i$ is $(C', w^*)$-compatible and $w^*$-diagonal.
\end{lemma}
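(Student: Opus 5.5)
We will instantiate Fact~\ref{fact:hsg} with the set of good graphs as the test. Encode a graph $\G\in\mathcal{G}_N$ by the $\ell:=\binom{N}{2}$ bits of the upper triangle of $A_\G$; this is a bijection between $\{0,1\}^\ell$ and $\mathcal{G}_N$, and the uniform distribution on strings is the uniform distribution on graphs used in Lemma~\ref{lem:randomcompat}. Let $T\subseteq\{0,1\}^\ell$ be the set of graphs that are both $(C',w^*)$-compatible and $w^*$-diagonal. The algorithm runs the generator of Fact~\ref{fact:hsg} on input $\ell$, with a constant $c$ chosen below, decodes each output string $x_i$ into a graph $\G_i$, and outputs the list. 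Its running time is $\poly(\ell)=\poly(N)$, and $r=\poly(\ell)=\poly(N)$. Correctness needs two things: $T$ has density at least $\tfrac12$, and $T$ is accepted by a co-nondeterministic circuit of size at most $\ell^{c}$.

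\emph{Density.} Since $k'\le N/4$ and $w^*\le\varepsilon_1 N$ with $\varepsilon_1$ admissible, Lemma~\ref{lem:randomcompat} gives $\Pr[\G\notin T]\le 2^{-\delta(\varepsilon_1)N}$. This is at most $\tfrac12$ once $N$ exceeds a constant $N_0(\varepsilon_1)$. For $N\le N_0$ the instance has constant size, so we can find a good graph by brute force and output it alone. A good graph exists at these sizes because $\G$ ranges over finitely many graphs and the same first-moment bound applies; if it does not hold for very small $N$, we simply take $N_0$ large enough.

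\emph{Circuit complexity.} The complement $\bar T$ is the set of bad graphs. By the certificate clause of Lemma~\ref{lem:randomcompat}, $\G\in\bar T$ iff there exists $(a\mid b)\in\F_2^{2N}$ such that
\begin{itemize}
\item[] $a\neq0$ and $\wtS(a\mid b)<w^*$ and $H'(b\oplus A_\G a)=0$ and $b\oplus A_\G a\neq0$ (compatibility failure), \emph{or}
\item[] $a\neq 0$ and $\wtS(a\mid A_\G a)<w^*$ (diagonality failure).
\end{itemize}
The nondeterministic circuit $D(\G,(a\mid b))$ hardwires $H'$ and $w^*$. These depend on the instance, which is fine because Fact~\ref{fact:hsg} quantifies over all circuits of the given size. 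It then computes $A_\G a$, two weight comparisons and the product with $H'$ using $O(N^2)$ gates. Hence $\bar T$ is nondeterministically accepted, and $T$ is co-nondeterministically accepted, by a circuit of size $O(N^2)=O(\ell)\le \ell^{2}$ for large $N$. So $c=2$ suffices. Fact~\ref{fact:hsg} then guarantees some $x_i\in T$, i.e.\ some $\G_i$ is $(C',w^*)$-compatible and $w^*$-diagonal.

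The main point needing care is that the test set $T$ depends on the input $(H',w^*)$. The argument works because the hitting guarantee is universal over all small co-nondeterministic circuits, so a single instance-independent list for each length $\ell$ hits $T$ for every instance. The only remaining checks are routine: the circuit size bound is a fixed polynomial in $\ell$, independent of $H'$ beyond hardwiring, and the density threshold $\tfrac12$ holds for all but finitely many $N$.
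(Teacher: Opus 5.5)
Your proposal is correct and follows the paper's proof essentially step for step. You encode graphs by the $\ell=\binom{N}{2}$ upper-triangle bits, note that the bad set is accepted by a polynomial-size nondeterministic circuit (witness $(a\mid b)$, with $H'$ and $w^*$ hardwired), get density of the good set at least $\tfrac12$ from Lemma~\ref{lem:randomcompat}, and invoke the hitting-set generator of Fact~\ref{fact:hsg}. Your small-$N$ fallback sentence is circular, since enlarging $N_0$ cannot create good graphs at small $N$, but the paper likewise claims the density bound only for sufficiently large $N$, so nothing essential is lost.
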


\begin{proof}
Identify $\mathcal{G}_N$ with $\{0,1\}^{\ell}$ for $\ell := \binom{N}{2}$, via the entries of the adjacency matrix above the diagonal, and let $\mathcal{B} \subseteq \{0,1\}^{\ell}$ be the set of \emph{bad} graphs for $C'$, that is,
\[
   \mathcal{B} \;:=\; \big\{\, \G \in \mathcal{G}_N \;:\; \G \text{ is not } (C', w^*)\text{-compatible, or not } w^*\text{-diagonal} \,\big\}.
\]
By the witness clause of Lemma~\ref{lem:randomcompat}, $\G \in \mathcal{B}$ if and only if there exists a witness $(a \mid b) \in \F_2^{2N}$, verifying failure in time $O(N^2)$ given $A_\G$ and $H'$. Hardwiring the parity check of $C'$, the verifier becomes a Boolean circuit $D_{C'}\big(\G, (a \mid b)\big)$ of size $\ell^{c_0}$ for an absolute constant $c_0 \ge 2$. Since $\G \in \mathcal{B}$ exactly when a valid witness exists, $D_{C'}$ is a nondeterministic circuit accepting $\mathcal{B}$, and hence $\overline{\mathcal{B}}$ is accepted by a co-nondeterministic circuit of the same size.

Moreover, by Lemma~\ref{lem:randomcompat} a uniformly random $\G$ lies in $\mathcal{B}$ with probability at most $2^{-\delta(\varepsilon_1)N}$, so $\overline{\mathcal{B}}$ has density at least $\tfrac12$ for large enough $N$. Running the generator of Fact~\ref{fact:hsg} at length $\ell$ with parameter $c_0$ therefore yields a list intersecting $\overline{\mathcal{B}}$, that is, containing at least one good graph.
\end{proof}

\subsection{Deterministic reductions}
\label{sec:detred}

\begin{definition}[promise-respecting truth-table reductions]
\label{def:ttred}
A promise problem is a pair $(\Pi_Y, \Pi_N)$ of disjoint sets. An oracle solves it if it answers YES on $\Pi_Y$ and NO on $\Pi_N$ (answers elsewhere arbitrary).
A deterministic polynomial-time nonadaptive Turing (truth-table) reduction from $L$ to $(\Pi_Y, \Pi_N)$ computes, from the instance $x$ alone, a list of queries together with a Boolean selection rule, and must decide $x$ correctly for every oracle solving the promise problem.
\end{definition}

The reductions below are nonadaptive but not Karp reductions. The queries are the instances built from a list of candidate graphs, and the selection rule decides based on their answers. 

\begin{theorem}[deterministic reductions under \emph{circuit hypothesis}]
\label{thm:detred}
Assume \emph{circuit hypothesis} and fix an admissible $\varepsilon_1$. Throughout, $N$ denotes the number of physical qubits of the stabilizer code and $\kappa$ its number of logical qubits.
\begin{enumerate}[(i)]
\item There is a constant $\alpha = \alpha(\varepsilon_1) > 0$ such that $\mathrm{GapAddQDist}_{\alpha N}$, the problem of deciding whether $\qdist(Q) \le T$ or $\qdist(Q) > T + \alpha N$ for a stabilizer code $Q$ on $N$ qubits and a threshold $T$, is NP-hard under deterministic polynomial-time truth-table reductions.
\item Assuming (deterministic) SETH, for any $\varepsilon > 0$ there is no deterministic $2^{(1-\varepsilon)\kappa}\cdot\poly(N)$-time algorithm for exact $\mathrm{\QMDP}$.
\item Assuming non-uniform Gap-ETH, the linear-gap hardness of Theorem~\ref{thm:fg-gap}, on instances with $\kappa = \Theta(N)$, holds with the reduction made deterministic.
\end{enumerate}
\end{theorem}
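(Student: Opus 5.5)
The plan is to replace, in each of the three randomized arguments, the single sampled graph by the deterministic list of Lemma~\ref{lem:listB}, and to combine the oracle answers with the AND selection rule: accept iff every query is answered YES. The common core is a one-sided list lemma. Let $(C,t)$ be a classical instance and $g$ the gap parameter. Pad as in Definition~\ref{def:pad} with $w^*=t+g+1$, so that $k'\le N/4$ and $w^*\le\varepsilon_1 N$. Run Lemma~\ref{lem:listB} to obtain $\G_1,\dots,\G_r$ with $r=\poly(N)$, and query $(Q_{\G_i},t)$ for all $i$. The queries depend only on the instance, so the reduction is nonadaptive.

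Correctness of the AND rule comes from the two halves of Lemma~\ref{lem:threshold}.
\begin{itemize}
\item On a YES instance, part (a) gives $\qdist(Q_{\G_i})\le\lambda\le t$ for \emph{every} $i$. Every query therefore satisfies the YES promise, and any oracle solving the promise answers YES to all of them.
\item On a NO instance, some $\G_j$ is good, and part (b2) gives $\qdist(Q_{\G_j})\ge t+g+1$. That query lies in the NO promise and is answered NO.
\end{itemize}
On a NO instance the remaining queries may violate the promise, but Definition~\ref{def:ttred} allows arbitrary answers there, and the rule already rejects because of the good graph.

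For (i), compose with the deterministic Karp base of Fact~\ref{fact:detbase}, which gives $\mathrm{GapAddDist}_\tau$ from an $\NP$-complete $L$. Then apply Lemma~\ref{lem:addbase} per graph; this fixes $g=\lfloor\tau m\rfloor$, and $\tau m\ge\alpha N$ with $\alpha=\tau\varepsilon_1/(2+\tau)$. Every quantity involved is polynomial, so this yields a polynomial-time truth-table reduction.

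For (iii), use the Gap-ETH instances with linear profile (Theorem~\ref{imp:gapmdp} and Observation~\ref{prop:linprofile}) exactly as in Theorem~\ref{thm:fg-gap}. Here $N=\Theta(n)$, and $\kappa$ and $g$ are $\Theta(N)$ as before. A $2^{o(N)}$-time algorithm run on $\poly(N)$ queries, plus the $\poly(N)$ list construction, takes $2^{o(n)}$ time.

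Part (ii) is the step needing care. Assume a deterministic algorithm $A$ running in $2^{(1-\varepsilon)\kappa}\poly(N)$. Given an exact MDP instance of rank $n'$ and block length $m'$, apply the list construction with $g=0$. This gives $N=O_{\varepsilon_1}(t+m')=\poly(m')$, and each query has $\kappa=n'$ because rank is preserved (Proposition~\ref{prop:cws}). Run $A$ on each of the $\poly(N)$ queries and take the AND. The total time is $\poly(m')\cdot 2^{(1-\varepsilon)n'}\poly(m')$, which is $2^{(1-\varepsilon)n'}\poly(m')$. This is a deterministic algorithm for exact MDP, contradicting Theorem~\ref{thm:strong}(a).

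The point to verify is that Theorem~\ref{thm:strong}(a) already includes the $\poly(m')$ slack. This is exactly why the strengthened form is used; the hypothesis needed is deterministic SETH, since the classical chain behind Theorem~\ref{thm:strong}(a) is itself deterministic via Theorem~\ref{imp:turing}. I also expect to check that the circuit size $\ell^{c_0}$ of the verifier in Lemma~\ref{lem:listB} is a fixed polynomial independent of the instance, so that a single constant $c$ in Fact~\ref{fact:hsg} serves all inputs. Both checks look routine given the earlier statements.
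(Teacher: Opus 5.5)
Your proposal is correct and follows the paper's proof essentially step for step. It uses the same list-plus-AND truth-table reduction: on YES instances every query is answered YES by Lemma~\ref{lem:threshold}(a), and on NO instances the good graph forces a NO via Lemma~\ref{lem:threshold}(b2). It then composes with Fact~\ref{fact:detbase} and Lemma~\ref{lem:addbase} for (i), with Theorem~\ref{thm:strong}(a) at $g=0$ for (ii), and with the linear-profile Gap-ETH instances for (iii).

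One detail is worth adding in (iii), which the paper also leaves implicit. When the oracle is a randomized $2^{o(N)}$-time algorithm, amplify it to error at most $1/(3r)$ per query, so that the AND over all $r=\poly(N)$ answers is correct on YES instances by a union bound. This costs only an $O(\log N)$ factor.
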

\begin{proof} \emph{Reduction:}
Let $(C,t)$ be a classical minimum-distance instance with additive gap $g \ge 0$, the exact case being $g = 0$.
Set $w^* := t+g+1$ and pad $C$ per Definition~\ref{def:pad} to get $C'$.
Using Lemma~\ref{lem:listB} on $C'$ at level $w^*$, obtain the list of graphs $\G_1, \dots, \G_r$.
Query the oracle\footnote{$\mathrm{\QMDP}$ when $g = 0$, and  $\mathrm{GapAddQDist}_{g}$ when $g > 0$, the gap being that of Lemma~\ref{lem:addbase}.} with $(Q_{\G_i},t)$ for each $i$. 
Output \textrm{NO} if any of the queries return \textrm{NO}, otherwise output \textrm{YES}. 

\medskip\noindent 
\emph{Correctness:} Suppose $(C,t)$ is a \textrm{YES} instance, that is $\dist(C):= \lambda \leq  t$. 
Then by Lemma~\ref{lem:threshold}(a), for all $i$, we have, $\qdist(Q_{\G_i}) \le \lambda \le t$.
Every query lies on the YES side of the promise, and hence every oracle call returns YES. 
So the output of the entire reduction is YES.
On the other hand, suppose $(C,t)$ is a NO instance, that is,  $\lambda > t +g$.
By Lemma~\ref{lem:listB} some $i^*$ is \emph{good}, and by Lemma~\ref{lem:threshold}(b2), $\qdist(Q_{\G_{i^*}}) \ge t + g + 1 > t + g$, and thus query $i^*$ returns NO on it, and hence we get the output NO. Note that queries $i \neq i^*$ may violate the promise (bad graphs can lower the distance into the gap) and may be answered arbitrarily, which is harmless. Thus, we get a valid reduction that is deterministic, polynomial-time, and nonadaptive.

\medskip\noindent 
For (i), compose the reduction above with the Karp base of Fact~\ref{fact:detbase}.
This gives a deterministic truth-table reduction from SAT, at quantum gap $\alpha N$ by Lemma~\ref{lem:addbase}.

\smallskip\noindent 
For (ii), consider $g = 0$.
The queries are then instances of $\mathrm{\QMDP}$, a total decision problem, so a deterministic algorithm for $\mathrm{\QMDP}$ can play the role of the oracle.
Given such an algorithm running in time $2^{(1-\varepsilon)\kappa} \cdot \poly(N)$, the reduction makes $r = \poly(N)$ calls, each at $\kappa = n'$, and so decides exact MDP in time $2^{(1-\varepsilon)n'} \cdot \poly(m')$, contradicting Theorem~\ref{thm:strong}(a) under SETH.

\smallskip\noindent 
For (iii), replace the sampling of $\G$ in the proof of Theorem~\ref{thm:fg-gap} by the list of Lemma~\ref{lem:listB} and the selection rule; the conclusion regarding the lower bound remains unchanged since non-uniform Gap-ETH covers randomized algorithms as well.
\end{proof}

\subsection{The transfer principle}
\label{sec:transfer}

Our fine-grained results are parametrized by $\kappa$, the number of logical qubits, rather than by the block length. This is inherited from the classical input: the SETH-hardness of \cite{SV19} is stated in the rank, and on their instances the block length is polynomially larger, so $\kappa \le \frac {N}{4}$ on the codes we produce. Block-length hardness for the classical problem is their own open question \cite[\S1.2]{SV19}. We now show that, under \emph{circuit hypothesis}, our reduction forwards any positive answer to that question, in its original, unrestricted form, to quantum block-length hardness, with an explicit constant-factor loss in the exponent.

\begin{conjecture}[{\cite{SV19}}]
\label{conj:BL}
There is a constant $\beta > 0$ such that no algorithm decides exact $\mathrm{MDP}$ on codes of block length $m$ in time $2^{\beta m}$.
\end{conjecture}
\begin{theorem}[transfer, under \emph{circuit hypothesis}]
\label{thm:transfer}
Assume \emph{circuit hypothesis} and Conjecture~\ref{conj:BL}. Then for every admissible $\varepsilon_1$ and every $c < \beta \varepsilon_1$ no deterministic algorithm decides $\mathrm{\QMDP}$ on stabilizer codes of block length $N$ in time $2^{cN}$.
\end{theorem}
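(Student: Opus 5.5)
We argue by contraposition: from a deterministic $2^{cN}$-time algorithm $\mathcal{A}$ for $\QMDP$ with $c<\beta\varepsilon_1$ we build a deterministic algorithm for exact $\mathrm{MDP}$ on block-length-$m$ codes running in time $2^{\beta m}$. That contradicts Conjecture~\ref{conj:BL}. The reduction is the $g=0$ case of the truth-table reduction in the proof of Theorem~\ref{thm:detred}. Given $(C,t)$ with $C\subseteq\F_2^m$ of rank $k$, we may assume $1\le t\le m$; otherwise the answer is trivial. Set $w^*:=t+1$ and pad $C$ to $C'$ of length $N$ as in Definition~\ref{def:pad}. Apply Lemma~\ref{lem:listB} to obtain $\poly(N)$ candidate graphs $\G_1,\dots,\G_r$, at least one of them good. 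Query $\mathcal{A}$ on each $(Q_{\G_i},t)$ and answer NO iff some query returns NO. Correctness is exactly the argument of Theorem~\ref{thm:detred}. On YES instances, every query is YES by Lemma~\ref{lem:threshold}(a). On NO instances, the good graph gives $\qdist\ge t+1$ by Lemma~\ref{lem:threshold}(b2). Since $\QMDP$ is a total problem, $\mathcal{A}$ answers every query correctly, including those built from bad graphs.

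The heart of the proof is bounding the block length $N$ in terms of $m$. By Definition~\ref{def:pad},
\[
N=\max\Big(4k,\;m,\;\Big\lceil \tfrac{t+1}{\varepsilon_1}\Big\rceil\Big).
\]
For admissible $\varepsilon_1$ we have $c_1(\varepsilon_1)<3/4$, which forces $\varepsilon_1$ small, certainly $\varepsilon_1\le 1/4$. Hence $4k\le 4m\le m/\varepsilon_1$. The third term is at most $(m+1)/\varepsilon_1+1$. So $N\le m/\varepsilon_1+O(1/\varepsilon_1)$. The total running time is then
\[
\poly(N)\cdot 2^{cN}\;=\;2^{\,\frac{c}{\varepsilon_1}m+O(\log m)+O(1)}.
\]
Because $c/\varepsilon_1<\beta$, this is at most $2^{\beta m}$ for all sufficiently large $m$. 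The remaining small instances are finitely many and are handled by table lookup, or absorbed into the constant. The polynomial-time costs of Gaussian elimination, building the stabilizer presentations via Lemma~\ref{lem:threshold}(a), and running the list generator are all $\poly(m)$.

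The main obstacle is the quantifier bookkeeping. The hypotheses of Lemma~\ref{lem:listB} must hold: $k'\le N/4$ and $w^*\le\varepsilon_1N$. Both are guaranteed by the padding, and $N$ is large enough for the $\tfrac12$-density bound once $m$ is large. We also need Conjecture~\ref{conj:BL} to be read as excluding algorithms for all sufficiently large $m$; if it is read as excluding them only infinitely often, the same argument applies along that subsequence. In addition, the circuit-hypothesis generator runs in time polynomial in $\ell=\binom{N}{2}$, so it contributes only $2^{O(\log m)}$ and does not affect the exponent. No further loss in the exponent occurs: since $N\le m/\varepsilon_1$, the final constant is exactly $\beta\varepsilon_1$.
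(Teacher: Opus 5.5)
Your proposal is correct and follows the paper's proof essentially step for step. Both use the $g=0$ truth-table reduction over the hitting-set list of Lemma~\ref{lem:listB}, with correctness from Lemma~\ref{lem:threshold} and the total problem $\mathrm{QMDP}$ serving as the oracle, and both use the padding bound $N\le m/\varepsilon_1+O(1/\varepsilon_1)$ to get total time $2^{(c/\varepsilon_1)m+O(\log m)}<2^{\beta m}$ for large $m$; your explicit check that admissibility forces $\varepsilon_1<1/4$, so that $4k\le m/\varepsilon_1$, justifies a step the paper only asserts.
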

\begin{proof}
We will give a reduction from $\mathrm{MDP}$ to $\poly(N)$ instance of $\mathrm{\QMDP}$.

\smallskip\noindent 
Suppose an  algorithm $A$ decides exact $\mathrm{\QMDP}$ in time $2^{c\,N}$ for some $c < \beta\varepsilon_1$.
Define the margin parameter $\eta := 1 - \frac{c}{\beta\varepsilon_1}$.
Given an instance $(C, t)$ of $\mathrm{MDP}$, we may assume $t \le m$ (otherwise the answer is deterministically YES, as $\lambda \le m$ always) and $\kappa \le m$.
Set $w^* := t+1 \le m+1$ and pad via Definition~\ref{def:pad}:
\[
   N \;=\; \max\!\Big( 4\kappa,\; m,\; \bigg\lceil \frac{w^*}{\varepsilon_1} \bigg\rceil \Big)   \;\le\; \frac{m}{\varepsilon_1}\,\big(1 + o(1)\big),
\]
since $4\kappa \le 4m \le \frac{m}{\varepsilon_1}$ for admissible $\varepsilon_1$ and $\bigg\lceil \frac{t+1}{\varepsilon_1} \bigg\rceil \le \frac{m+1}{\varepsilon_1} + 1$.

\smallskip \noindent 
Use Lemma~\ref{lem:listB} to obtain, in time $\mathrm{poly}(N)$, candidates $\G_1, \dots, \G_r$ with $r = \mathrm{poly}(N)$ containing at least one good $\G_i$.
Query for each $Q_{\G_i}$ at threshold $t$ and output NO if some query answers NO.
Otherwise output YES.
Correctness follows by the same argument as in Theorem~\ref{thm:detred} at gap parameter $g = 0$, with the total decision problem in place of the promise oracle: on a YES instance ($\lambda \le t$), every candidate, good or bad  satisfies $\qdist(Q_{\G_i}) \le \lambda \le t$ by Lemma~\ref{lem:threshold}(a), so every query answers YES; on a NO instance ($\lambda \ge t+1$), the good candidate $i^*$ satisfies $\qdist(Q_{\G_{i^*}}) > t$ by Lemma~\ref{lem:threshold}(b2), supplying a NO vote.

\smallskip \noindent 
The total running time is $\mathrm{poly}(m) + r \cdot 2^{c\,N}
\cdot \mathrm{poly}(N) = 2^{c\,N + O(\log m)}$, and
\[
   c\,N \;\le\; c \cdot \frac{m}{\varepsilon_1}(1+o(1))\, 
   \;=\; \beta\,(1-\eta)(1+o(1))\, m
   \;\le\; \beta\Big(1 - \frac{\eta}{2}\Big) m
\]
for all sufficiently large $m$. The composed deterministic algorithm therefore decides the family in time $2^{\beta m - \Omega(m)} < 2^{\beta m}$, contradicting Conjecture~\ref{conj:BL}. 
Hence, no algorithm decides exact $\mathrm{\QMDP}$ in time $2^{c\,N}$.
\end{proof}

\newpage
\textcolor{red}{ 
}

\bibliography{bib}
\bibliographystyle{alpha}

\appendix

\section{Proofs of standard facts}\label{app:standard}

\begin{proof}[Proof of Fact~\ref{fact:commutation}]
The single-qubit relation $ZX = -XZ$ gives the first identity coordinatewise; the second follows by moving $Z(b)$ past $X(a')$. Nondegeneracy: if  $e = (a \mid b) \neq 0$, pick $i$ with $(a_i, b_i) \neq (0,0)$ and take $f$ supported on qubit $i$ with $(a'_i, b'_i)$ chosen so that $a_i b'_i \oplus b_i a'_i = 1$ (possible for each of the three nonzero values of $(a_i, b_i)$).
\end{proof}

\begin{proof}[Proof of Fact~\ref{fact:dim-sign}]
(a) If $S, S' \in \cS$ have $\sigma(S) = \sigma(S')$ then $S S'^{-1} \in \{\pm I, \pm i I\} \cap \cS$; since elements of $\cS$ square to $\pm$-phases of $I$ and $\cS$ is closed under products, $S S'^{-1} \in \{\pm I\}$, and $-I \notin \cS$ forces $S = S'$. Injectivity plus independence of the generators' images gives $|\cS| = 2^r$. Every non-identity $S \in \cS$ has $\sigma(S) \neq 0$, hence $\tr S = 0$ (the trace of $X(a) Z(b)$ vanishes unless $a = b = 0$, factorwise).
Thus $\tr \Pi = 2^{-r} \tr I = 2^{N - r}$; and $\Pi$ is a projector (a product of commuting projectors), so $\dim Q = \tr \Pi$.

(b) The signed generators commute (commutation depends only on $\sigma$), are projectively independent, and square to $I$. It remains to check that $-I$ is not in the generated group: any element with trivial symplectic part is a product $\prod_j (s_j P_j)^{x_j}$ with $\sum_j x_j v_j = 0$, hence $x = 0$ by independence, i.e.\ the empty product $+I$. The trace computation of (a) applies verbatim.
\end{proof}

\begin{proof}[Proof of Fact~\ref{fact:normalizer}]
Conjugation of a Pauli by a Pauli gives $P S P^\dagger = \pm S$ (Fact~\ref{fact:commutation}). If $P \in N(\cS)$ and $P S P^\dagger = -S$ for some $S \in \cS$, then $-S \in \cS$, and since also $S \in \cS$ we get $-I = (-S) S^{-1} \in \cS$, a contradiction. So conjugation fixes each element: $N(\cS) \subseteq C(\cS)$; the reverse inclusion is trivial. The symplectic characterization is Fact~\ref{fact:commutation}.
\end{proof}

\begin{proof}[Proof of Fact~\ref{fact:trichotomy}]
(1) Pick $T \in \cS$ with $T E = -E T$ (it exists by Fact~\ref{fact:normalizer} and Fact~\ref{fact:commutation}). Then $\Pi E \Pi = \Pi T E \Pi = -\Pi E T \Pi = -\Pi E \Pi$, using $T \Pi = \Pi T = \Pi$. Hence $\Pi E \Pi = 0$.

(2) By injectivity of $\sigma$ on $\cS$ (Fact~\ref{fact:dim-sign}(a)) there is $S \in \cS$ with $\sigma(S) = e$, and then $E S^{-1}$ has trivial symplectic part, i.e.\ $E = i^\gamma S$. Then $\Pi E \Pi = i^\gamma \Pi S \Pi = i^\gamma \Pi$.

(3) $E$ commutes with $\cS$ (Fact~\ref{fact:normalizer}), so $E \Pi = \Pi E$ and $E$ maps $Q$ to $Q$ unitarily; $\Pi E \Pi = E \Pi$.
Suppose toward a contradiction that $E \Pi = c \, \Pi$. Since $e \notin \bS = (\bS^\perp)^\perp$, the linear functional $f \mapsto \omega(e, f)$ is not identically zero on $\bS^\perp$; pick $f \in \bS^\perp$ with $\omega(e, f) = 1$ and a Pauli $F$ with $\sigma(F) = f$. Then $F \in C(\cS)$, so $F \Pi = \Pi F$, and $F^\dagger E F = -E$. Applying $F^\dagger (\cdot) F$ to $E \Pi = c \Pi$ gives
\[
- E \Pi \;=\; F^\dagger E F \, \Pi \;=\; F^\dagger E \Pi F \;=\; c \, F^\dagger \Pi F \;=\; c \, \Pi ,
\]
hence $E \Pi = -c \Pi$, so $c = 0$ and $E \Pi = 0$; impossible, as $E$ is unitary and $\Pi \neq 0$. Therefore $E \Pi$ is not a scalar multiple of $\Pi$: $E$ is undetectable and its logical action is not proportional to the identity.
\end{proof}

\section{Proofs of stronger forms of \cite{SV19}}
\begin{proof}[Proof of \Cref{thm:strong}]
We will first prove ($a$).
Suppose, for contradiction, that such an algorithm $A$ exists, so that on a rank$=n'$, block-length$=m'$ instance it runs in time
\begin{equation}\label{eq:A}
   T_A(n',m') \;=\; 2^{(1-\varepsilon)\,n'}\cdot \mathrm{poly}(m').
\end{equation}
Fix $\varepsilon>0$ and let $k=k(\varepsilon)$ be the constant supplied by Hypothesis~\ref{imp:sat}; it suffices to build a $k$-SAT algorithm running in time $2^{(1-\varepsilon'')n}$ for some fixed $\varepsilon''>0$, since this contradicts \textup{SETH}.
The reduction chains the two classical imports
\[
   k\text{-SAT}
   \;\xrightarrow{\ \text{Corollary~\ref{imp:ncp}}\ }\;
   \textup{NCP}
   \;\xrightarrow{\ \text{Theorem~\ref{imp:turing}}\ }\;
   \text{exact \textup{MDP}},
\]
and then invokes $A$; the entire argument reduces to bounding a single exponent, which we now do.
 
\medskip
\noindent\emph{From $k$-SAT to NCP.} By Corollary~\ref{imp:ncp}, the $k$-SAT instance on $n$ variables maps in deterministic $\mathrm{poly}(n)$ time to an \textup{NCP} instance of rank $n_{\mathrm{NCP}}\le n$ and block length $O(m)$. The rank bound is an inequality, but this only helps: a strictly smaller rank can only decrease the exponent $(1-\varepsilon)n'$ produced downstream, and hence only strengthens the final contradiction. We therefore set $n_{\mathrm{NCP}}=n$ in the running-time bookkeeping, with the understanding that every subsequent inequality is preserved when $n_{\mathrm{NCP}}<n$.
 
\medskip
\noindent\emph{From NCP to exact MDP.}
Apply the deterministic Turing reduction of Theorem~\ref{imp:turing} to the rank $n$ \textup{NCP} instance. It produces $Q=2^{3n/4}$ exact-\textup{MDP} instances, each of rank
\begin{equation}\label{eq:rank}
   n' \;=\; \Big\lceil \tfrac{(1+\varepsilon')n}{4}\Big\rceil + 1
   \;\le\; \tfrac{(1+\varepsilon')n}{4} + 2,
\end{equation}
where $\varepsilon'\in(0,\tfrac12)$ is a rank-blow-up parameter we are free to choose.
The reduction has two exponential costs from two steps, which we account for separately: the deterministic construction of the augmented gadget and the enumeration of the $2^{3n/4}$ instances, running in time $2^{3(1+\varepsilon')n/4}\cdot\mathrm{poly(n,m)}$; and the $Q=2^{3n/4}$ calls to $A$.
Crucially, these two phases run \emph{sequentially}, so the total time is their \emph{sum}.
 
\medskip
\noindent\emph{Cost of the oracle calls.}
One call to $A$ costs $2^{(1-\varepsilon)n'}\cdot \mathrm{poly}(m')$ by \eqref{eq:A}, so $2^{3n/4}$ calls contribute
\[
   2^{3n/4}\cdot 2^{(1-\varepsilon)n'}\cdot\mathrm{poly}(m')
   \;=\; 2^{\,\frac{3n}{4}+(1-\varepsilon)n'}\cdot\mathrm{poly}(m').
\]
Substituting the bound \eqref{eq:rank} on $n'$,
\[
   \tfrac{3n}{4}+(1-\varepsilon)n'
   \;\le\;
   \Big[\tfrac34+\tfrac{(1-\varepsilon)(1+\varepsilon')}{4}\Big]n
   \;+\; 2(1-\varepsilon),
\]
so the oracle phase runs in $2^{e_{\mathrm{oracle}}\,n}\cdot\mathrm{poly}(m')$ with $e_{\mathrm{oracle}}=\tfrac34+\tfrac{(1-\varepsilon)(1+\varepsilon')}{4}$, the additive $2(1-\varepsilon)=O(1)$ absorbed into a constant factor. 
\medskip

\noindent\emph{Combining the phases.}
The construction phase has exponent $e_{\mathrm{constr}}=\tfrac{3(1+\varepsilon')}{4}$ directly from Theorem~\ref{imp:turing}. Since the two phases run sequentially, a sum of two exponentials is bounded by twice the larger, so the derived $k$-SAT algorithm runs in time $2^{E n}\cdot\mathrm{poly}(m')$ with
\begin{equation}\label{eq:maxE}
   E \;=\; \max\!\left(  \frac{3(1+\varepsilon')}{4},\;\;  \frac34+\frac{(1-\varepsilon)(1+\varepsilon')}{4} \right).
\end{equation}

\medskip
\noindent\emph{Absorbing the block-length factor.}
By Observation~\ref{lem:blocklen}, the hard instances have block length $m'=n^{\Theta(1/\varepsilon')}$, which for constant $\varepsilon'$ is $\mathrm{poly}(n)$. Hence $\mathrm{poly}(m')=\mathrm{poly}(n)=2^{o(n)}$, and the polynomial factor carried since \eqref{eq:A} does not affect the exponent $E$. This is precisely where the strengthening from the bare bound to the $\mathrm{poly}(m')$ bound is obtained, and it is why the stronger form costs nothing.
 
\medskip
\noindent\emph{Forcing $E<1$.}
It remains to choose $\varepsilon'$ so that both arguments of \eqref{eq:maxE} are below $1$. The first satisfies $\tfrac{3(1+\varepsilon')}{4}<1$ iff $\varepsilon'<\tfrac13$. 
The second satisfies $\tfrac34+\tfrac{(1-\varepsilon)(1+\varepsilon')}{4}<1$ iff $(1-\varepsilon)(1+\varepsilon')<1$, i.e.\ $\varepsilon'<\varepsilon/(1-\varepsilon)$.
Both hold whenever
\begin{equation}\label{eq:cond}
   \varepsilon' \;<\; \min\!\Big(\varepsilon,\ \tfrac13\Big);
\end{equation}
indeed $\varepsilon<\tfrac12$ (Hypothesis~\ref{imp:sat}) gives $\varepsilon\le\varepsilon/(1-\varepsilon)$, so $\varepsilon'<\varepsilon$ implies the second condition. Since $\varepsilon>0$ is fixed and $\varepsilon'$ is ours to choose, \eqref{eq:cond} is satisfiable, and with such a choice $E=1-\varepsilon''$ for a fixed $\varepsilon''>0$.
 
\medskip
\noindent Combining all this, the derived $k$-SAT algorithm runs in
time
\[
   2^{En}\cdot\mathrm{poly}(m')
   \;=\; 2^{(1-\varepsilon'')n}\cdot 2^{o(n)}
   \;=\; 2^{(1-\varepsilon''+o(1))n},
\]
contradicting Hypothesis~\ref{imp:sat}. Therefore, no algorithm of the form given in $(a)$ exists.

\medskip
\noindent  Now we will move to part $(b)$.

\medskip
\noindent
\medskip
The randomized route is shorter, because the reduction Theorem~\ref{imp:rs} \textup{(}\cite[Cor.~5.7]{SV19}\textup{)} is polynomial-time and produces a \emph{single} $\mathrm{MDP}$ instance, so there is no enumeration phase and no maximum to take.

Suppose a randomized $A$ solves $\mathrm{MDP}_{n',m'}$ in time $2^{(1-\varepsilon)n'}\cdot\mathrm{poly}(m')$. 
Compose Corollary~\ref{imp:ncp} ($k$-SAT $\to$ NCP, deterministic, rank $\le n$) with Theorem~\ref{imp:rs} (NCP $\to$ MDP, randomized polynomial-time, rank $\le(1+\varepsilon')n$), and run $A$ once.
The derived $k$-SAT algorithm runs in
\[
   2^{(1-\varepsilon)(1+\varepsilon')\,n}\cdot\mathrm{poly}(m')
   \;=\; 2^{(1-\varepsilon)(1+\varepsilon')\,n}\cdot 2^{o(n)},
\]
using $n'\le(1+\varepsilon')n$ and $\mathrm{poly}(m')\le 2^{o(n)}$ (Observation~\ref{lem:blocklen}). 
Again, $\varepsilon'<\frac{\varepsilon}{1-\varepsilon}$ makes the exponent $1-\varepsilon''$ for a fixed $\varepsilon''>0$. 
Since the reduction is randomized, so is the composed $k$-SAT algorithm; hence the hypothesis it contradicts is the randomized form of \textup{SETH}.
\end{proof}

\begin{proof}[Proof of Observation~\ref{prop:linprofile}]
We follow the four steps of the gap reduction of \cite[\S5.2]{SV19} and track the three parameters through each. Throughout, $q = 2$ and $k = 3$.
 
\medskip
\noindent\emph{Step 1: the \textup{Gap-3-SAT} instance is sparse and has no unused variables.}
By \cite[Thm.~3.4]{SV19}, there are constants $C$ and $s \in (0,1)$ such that Gap-ETH may be assumed to hold already for $(s,1)$-\textup{Gap-3-SAT} instances on $n$ variables in which each
variable appears in at most $C$ clauses. We may further assume that every variable occurs in at least one clause: deleting unused variables leaves the clause set and the promise unchanged and only decreases the variable count, so a $2^{o(n)}$-time algorithm for instances without unused variables yields one for the general family, and the hardness of \cite[Thm.~3.4]{SV19} transfers. 
Let $m_{\mathrm{SAT}}$ be the number of clauses and count literal occurrences: each clause has three literals and each variable occurs between $1$ and $C$ times, so
\[
   n/3 \;\le\; m_{\mathrm{SAT}} \;\le\; Cn/3,
   \qquad\text{i.e.}\qquad
   m_{\mathrm{SAT}} = \Theta(n).
\]
 
\medskip
\noindent\emph{Step 2: the gap-\textup{NCP} instance.}
Apply the gap-preserving reduction of Theorem~\ref{imp:ncp_original} with $q = 2$, $k = 3$, and $c = 1$. By Observation~\ref{lem:fullrank}
(applicable by Step~1), the output instance has
\[
   \text{rank } = n, \qquad m_{\mathrm{NCP}} \;=\; (2^{3}-1)\,m_{\mathrm{SAT}} \;=\; 7\,m_{\mathrm{SAT}} \;=\; \Theta(n),
\]
and it is a $\gamma$-\textup{NCP} instance with threshold
\[
   t_{\mathrm{NCP}} \;=\; 7\,m_{\mathrm{SAT}} - 4\,m_{\mathrm{SAT}}
     \;=\; 3\,m_{\mathrm{SAT}} \;=\; \Theta(n)
\]
and gap
\[
   \gamma \;=\; \frac{1 - s/2 - 1/8}{1 - 1/2 - 1/8}
   \;=\; \frac{7 - 4s}{3} \;>\; 1,
\]
a constant since $s < 1$ is a constant. Explicitly, YES instances satisfy $\dist(y,C) \le 3m_{\mathrm{SAT}}$ and NO instances $\dist(y,C) > (7-4s)\,m_{\mathrm{SAT}}$. 
Since $t_{\mathrm{NCP}} = 3m_{\mathrm{SAT}} \ge n$ and the rank equals $n$, the hypothesis of \cite[Cor.~5.10]{SV19} \textup{(}their ``$d \ge \varepsilon n$'', i.e.\ our $t_{\mathrm{NCP}} \ge \varepsilon\cdot\mathrm{rank}$\textup{)} is satisfied.
 
\medskip
\noindent\emph{Step 3: the kissing-code gadget is linear in every parameter.}
By \cite[Cor.~5.9]{SV19} there are constants $\varepsilon^\dagger > 0$ and $\gamma^\dagger \in (1,2)$ such that for \emph{every} sufficiently large block length $m^\dagger$ there exists an $M$-locally dense triple $(C^\dagger, y^\dagger, r^\dagger)$ with $M \ge 2^{\varepsilon^\dagger m^\dagger}$ and $\dist(C^\dagger) \ge \gamma^\dagger r^\dagger$. In the proof of \cite[Cor.~5.10]{SV19} one sets $\alpha := 1 + 1/\varepsilon^\dagger$ and
\[
   m^\dagger \;:=\; \lfloor \alpha n \rfloor \;=\; \Theta(n);
\]
this choice is legitimate precisely because the triples of \cite[Cor.~5.9]{SV19} exist at every sufficiently large block length.
(The density requirement $M \ge 10\cdot 2^{n}$ of \cite[Cor.~5.3]{SV19} is also met: $\varepsilon^\dagger m^\dagger \ge \varepsilon^\dagger(\alpha n - 1) = (1+\varepsilon^\dagger)n - \varepsilon^\dagger \ge n + 4$ for $n$ sufficiently large, so $M \ge 16 \cdot 2^{n}$.)
 
Write $n^\dagger$ for the rank of $C^\dagger$. We state the linearity of all three gadget parameters.
 
\smallskip
\noindent\emph{Gadget rank.} Trivially $n^\dagger \le m^\dagger$.
Conversely, the $M$ vectors $z^\dagger$ counted in the definition of an $M$-locally dense triple \textup{(}\cite[Def.~5.1]{SV19}\textup{)} are distinct elements of $\F_2^{n^\dagger}$, so $2^{n^\dagger} \ge M \ge 2^{\varepsilon^\dagger m^\dagger}$.
Hence
\[
   \varepsilon^\dagger m^\dagger \;\le\; n^\dagger \;\le\; m^\dagger,
   \qquad n^\dagger = \Theta(n).
\]
 
\smallskip
\noindent\emph{Gadget radius.} The statement of \cite[Cor.~5.9]{SV19} bounds $r^\dagger$ only from above, via $r^\dagger \le \dist(C^\dagger)/\gamma^\dagger \le m^\dagger/\gamma^\dagger$. The matching lower bound can also be inferred from the same construction: there one sets $r^\dagger := \lfloor \dist(\widehat{C})/\gamma^\dagger \rfloor$ \textup{(}their $d^\dagger$\textup{)}, where $\widehat{C}$ is the Ashikhmin--Barg--Vl\u{a}du\c{t} code of \cite[Thm.~5.8]{SV19} \textup{(}following \cite{ABV01}\textup{)}, which satisfies $\dist(\widehat{C}) \ge \delta^\dagger m^\dagger$ for a constant $\delta^\dagger > 0$. 
Hence
\[
   \frac{\delta^\dagger}{\gamma^\dagger}\,m^\dagger - 1
   \;\le\; r^\dagger \;\le\; \frac{m^\dagger}{\gamma^\dagger},
   \qquad r^\dagger = \Theta(m^\dagger) = \Theta(n).
\]
 
\medskip
\noindent\emph{Step 4: the repetition count is constant.}
The reduction of \cite[Cor.~5.3]{SV19} stacks $k_{\mathrm{rep}}$ copies of the gadget in order to force a constant output gap $\gamma_2 > 1$; with $\delta := \gamma^\dagger - 1 > 0$, the proof of \cite[Cor.~5.10]{SV19} takes
\[
   k_{\mathrm{rep}} \;:=\; \Big\lceil
   \tfrac{2\,t_{\mathrm{NCP}}}{\delta\, r^\dagger} \Big\rceil
   \qquad\textup{(}\text{in the notation of \cite{SV19}: }
   \lceil 2d/(\delta d^\dagger)\rceil\textup{)}.
\]
By Steps 2 and 3, $t_{\mathrm{NCP}} = \Theta(n)$ and $r^\dagger = \Theta(n)$, and $\delta$ is a constant; hence $1 \le k_{\mathrm{rep}} = O(1)$, i.e.\ $k_{\mathrm{rep}} = \Theta(1)$.
This is the step at which linearity is preserved: a constant number of copies of a linear-sized gadget is still linear. Note that the constancy of $k_{\mathrm{rep}}$ uses the \emph{lower} bound $r^\dagger = \Omega(n)$ from Step~3.
 
\medskip
\noindent\emph{Combining everything.}
By \cite[Cor.~5.3]{SV19}, the final $\gamma_2$-$\mathrm{MDP}$ instance has rank at most $n^\dagger + 1$, block length $m = m_{\mathrm{NCP}} + k_{\mathrm{rep}}\,m^\dagger$, and threshold $t = t_{\mathrm{NCP}} + k_{\mathrm{rep}}\,r^\dagger$. We verify each parameter.
 
\smallskip
\noindent\emph{Rank.} The upper bound is $n^\dagger + 1 = O(n)$ by Step~3. For the lower bound, recall the generator produced by \cite[Thm.~5.2]{SV19} (applied by \cite[Cor.~5.3]{SV19} with the stacked gadget): its first $n^\dagger$ columns generate the subcode
\[
   W \;=\; \big\{ \big(CTz^\dagger,\; \underbrace{C^\dagger z^\dagger, \dots, C^\dagger  z^\dagger}_{k_{\mathrm{rep}}}\big) : z^\dagger \in  \F_2^{n^\dagger}\big\}.
\]
Since generator matrices are full column rank (a standing convention of \cite[\S1]{SV19}), the map $z^\dagger \mapsto C^\dagger z^\dagger$ is injective, so $|W| \ge 2^{n^\dagger}$ and the output code has rank at least $n^\dagger$. Hence
\[
   \text{rank} \in \{\,n^\dagger,\; n^\dagger + 1\,\} = \Theta(n).
\]
 
\smallskip
\noindent\emph{Block length.}
\[
   m \;=\; 7\,m_{\mathrm{SAT}} + k_{\mathrm{rep}}\,m^\dagger
     \;=\; \Theta(n) + \Theta(1)\cdot\Theta(n) \;=\; \Theta(n).
\]
 
\smallskip
\noindent\emph{Threshold.}
\[
   t \;=\; t_{\mathrm{NCP}} + k_{\mathrm{rep}}\,r^\dagger
     \;=\; \Theta(n) + \Theta(1)\cdot\Theta(n)
     \;=\; \Theta(n) \;=\; \Theta(m). \qedhere
\]
\end{proof}

\end{document}